\newcounter{author}
\renewcommand*\author[1]{%
  \stepcounter{author}%
  \ifnum\c@author=1
    \gdef\@author{#1}%
  \else
    \xdef\@author{\unexpanded\expandafter{\@author\and#1}}%
  \fi
  \csgdef{author@\the\c@author}{#1}}
\newcommand*\email[1]{%
  \csgdef{email@\the\c@author}{#1}}
\newcommand*\address[1]{%
  \csgdef{address@\the\c@author}{#1}}
  \xdef\author@count{\the\c@author}%
\newcommand*\print@authors{%
  \ifnum\c@author>\author@count
  \else 
    \print@author{\the\c@author}%
    \advance\c@author by 1
    \expandafter\print@authors
  \fi}
\newcommand*\print@author[1]{%
  \par\medskip
  \begin{tabular}{@{}l@{}}%
    \textsc{\csuse{author@#1}}\\
    \csuse{address@#1}\\
    \textit{E-Mail}:
    \href{mailto:\csuse{email@#1}}{\csuse{email@#1}}
  \end{tabular}}
\patchcmd{\NAT@test}{\else \NAT@nm}{\else \NAT@hyper@{\NAT@nm}}{}{}
\numberwithin{equation}{section}
\theoremstyle{definition}
\numberwithin{definition}{section}
\theoremstyle{thoerem}
\newtheorem{theorem}{Theorem}
\numberwithin{theorem}{section}
\theoremstyle{plain}
\newtheorem{proposition}{Proposition}
\numberwithin{proposition}{section}
\theoremstyle{plain}
\newtheorem{corollary}{Corollary}
\numberwithin{corollary}{section}
\theoremstyle{remark}
\newtheorem{remark}{Remark}
\numberwithin{remark}{section}
\newcounter{example}[section]
\providecommand{\keywords}[1]
{
  \small	
  \textbf{\textit{Keywords---}} #1
}
\author{Jos\'e Miguel Flores-Contró}
\address{\textit{Department of Actuarial Science} \\ \textit{Faculty of Business and Economics} \\
\textit{University of Lausanne} \\ \textit{Lausanne, Switzerland}}
\email{josemiguel.florescontro@unil.ch}
\title{The Gerber-Shiu Expected Discounted Penalty Function: An Application to Poverty Trapping}
\begin{document}

\date{\vspace{-2ex}}

\maketitle

\begin{abstract}
In this article, we consider a risk process to model the capital of a household. Our work focuses on the analysis of the trapping time of such a process, where trapping occurs when a household\rq s capital level falls into the poverty area. A function analogous to the classical Gerber-Shiu function is introduced, which incorporates information on the trapping time, the capital surplus immediately before trapping and the capital deficit at trapping. We derive, under some assumptions, a model belonging to the family of generalised beta (GB) distributions that describes the distribution of the capital deficit at trapping given that trapping occurs. Affinities between the capital deficit at trapping and a class of poverty measures, known as the Foster-Greer-Thorbecke (FGT) index, are presented. The versatility of this model to estimate FGT indices is assessed using household microdata from Burkina Faso\rq s \textit{Enquête Multisectorielle Continue (EMC)} 2014.
\vspace{0.5cm}

\keywords{Foster–Greer–Thorbecke (FGT) index; Gerber-Shiu function; poverty traps; poverty measures; risk process.}

\end{abstract}

\section{Introduction} \label{Introduction-Section1}

Recently, risk theory has proven to be a powerful tool to analyse a household\rq s infinite-time trapping probability (the probability of a household\rq s capital falling into the area of poverty at some point in time) (see, for instance, \cite{Article:Kovacevic2011}, \cite{Article:Flores-Contro2021} and \cite{Article:Henshaw2023}).  The classical risk process, also known as the Cram\'er-Lundberg model, which was introduced by Cram\'er and Lundberg at the beginning of the last century \citep{Book:Lundberg1903,Book:Lundberg1926, Book:Cramer1930}, has been adapted to better portray the capital of a household. For example, in \cite{Article:Kovacevic2011}, \cite{Article:Flores-Contro2021} and \cite{Article:Henshaw2023}, only the surplus of a household\rq s current capital above a critical capital level (or poverty line) grows exponentially, unlike the linear premium income for an insurer\rq s surplus in the Cram\'er-Lundberg model. Moreover, \cite{Article:Kovacevic2011} and \cite{Article:Henshaw2023} consider household capital losses as a proportion of the accumulated capital, yielding absolute losses that are serially correlated with each other and with the inter-arrival times of loss events. In contrast, losses in the Cram\'er-Lundberg model are given by a sequence of i.i.d. claim sizes and are subtracted from the insurer\rq s surplus rather than prorated.  Similar models with prorated jumps have been studied outside the actuarial science domain (see \cite{Article:Altman2002}, \cite{InProceedings:Altman2005} and \cite{Article:Lopker2008}, for an application of this type of model on data transmission over the internet; \cite{Article:Eliazar2004} and \cite{Article:Eliazar2006} for their use in representing the behaviour of physical systems with a growth-collapse pattern; and \cite{Article:Derfel2012} for the adoption of these processes to modelling the division and growth of cell-populations). 

This article examines the household capital process with proportional losses originally introduced in \cite{Article:Kovacevic2011} and subsequently studied in \cite{Article:Azais2015} and \cite{Article:Henshaw2023}. Previous work on this capital process focuses solely on studying the infinite-time trapping probability. Indeed, \cite{Article:Kovacevic2011} and \cite{Article:Azais2015} use numerical methods to estimate the trapping probability, without aiming to find an analytical solution for the probability. However, as stated by \cite{Book:Asmussen2010}, the ideal situation in risk theory is to derive closed-form solutions for trapping probabilities. To this end, \cite{Article:Henshaw2023} apply Laplace transform techniques to solve the infinitesimal generator of the household\rq s capital risk process and obtain a closed-form expression for the infinite-time trapping probability under the assumption of $Beta(\alpha, 1)-$distributed remaining proportions of capital.

Although the infinite-time trapping probability is a very important indicator for studying poverty dynamics, policy makers and other stakeholders may need additional information on other quantities to fully understand a household\rq s transition into poverty. A clear example of a quantity of interest is the income short-fall (or income gap), which is defined as the absolute value of the difference between a poor household\rq s income (or consumption) and some poverty line. A household\rq s income short-fall serves as key component in a number of poverty measures (see, for instance, \cite{Article:Sen1976}, where a simple poverty measure, the income-gap ratio, assesses the percentage of household\rq s mean income short-fall from the poverty line and \cite{Article:Foster1984}, where the well-known Foster-Greer-Thorbecke (FGT) index weights the income gaps of the poor to estimate the aggregate poverty of an economic entity). The primary objective of incorporating household levels of income short-fall in poverty measures is the elimination of certain measurement issues. That is, numerous poverty measures, such as the  head-count index, which calculates the proportion of the population living below the poverty line and has been considered as one of the most common indices for measuring poverty since the first studies of poverty were conducted (see, \cite{Book:Booth1889} and \cite{Book:Rowntree1901}), ignore the depth of poverty and the distribution of income among the poor, making them deficient as poverty indicators \citep{Article:Sen1976}. Consequently, this underlines the importance of exploring additional quantities such as a household\rq s income short-fall.

Apart from facilitating the study of the infinite-time trapping probability, classical risk theory provides additional tools that allow the examination of other quantities of interest, such as a household\rq s income short-fall at the trapping time (the time at which a household\rq s capital falls into the area of poverty), thus granting a much deeper understanding of a household\rq s transition into poverty. In particular, the Gerber-Shiu expected discounted penalty function, which was originally introduced by \cite{Article:Gerber1998}, gives information about three quantities: the time of ruin, the deficit at ruin, and the surplus prior to ruin, corresponding to the first time an insurer\rq s surplus becomes negative, the undershoot and the overshoot of the insurer\rq s surplus at ruin, respectively. These three random variables play an important role within the risk management strategy of an insurance company. For instance, risk measures such as the Value-at-Risk and the Tail-Value-at-Risk have a close link with the deficit at ruin, while from a monitoring perspective, the surplus prior to ruin could be thought of as an early warning signal for the insurance company. The (ruin) time at which any such event takes place is then of critical importance \citep{Article:Landriault2009}. Extensive literature on these variables exists for the Cram\'er-Lundberg model and its variations (see, for example, \cite{Article:Gerber1997}, \cite{Article:Gerber1998}, \cite{Article:Lin1999}, \cite{Article:Lin2000}, \cite{Article:Chiu2003}, \cite{Article:Landriault2009} and references therein). 

Certainly, a household\rq s trapping time can be thought of as the ruin time of an insurer, while the capital surplus prior to trapping and the capital deficit at trapping are analogous to the insurer\rq s surplus prior to ruin and the deficit at ruin, respectively. Therefore, the Gerber-Shiu expected discounted penalty function can be applied to study these quantities. Recently, for example, \cite{Article:Flores-Contro2021} emloyed the Gerber-Shiu expected discounted penalty function to study the distribution of the trapping time of a household\rq s capital risk process with deterministic growth and $Exp(\alpha)-$distributed losses. Using classical risk theory techniques,  \cite{Article:Flores-Contro2021} also assess how the introduction of an insurance policy alters the distribution of the trapping time. \cite{Inbook:Kovacevic2021} have also recently highlighted the importance of studying such trapping times to optimise the retention rates of insurance policies purchased by households. In this article, for the household capital process with proportional losses, we obtain closed-form expressions for the Gerber-Shiu expected discounted penalty function under the assumption of $Beta(\alpha, 1)-$distributed remaining proportions of capital. Thus, the first contribution of this article lies in the derivation of analytical equations for the Gerber-Shiu expected discounted penalty function, which to the best of our knowledge, have not been previously obtained for this particular risk process. 

Given the importance of the income short-fall and its key role in widely used poverty measures, the second contribution of this paper lies in obtaining a compelling microeconomic foundation, which emerges from the derivation of the Gerber-Shiu expected discounted penalty function for the household capital process, to model the distribution of the income short-fall. This is particularly important as parametric estimation of income distributions has long been used to model income since the introduction of the \cite{Inbook:Pareto1896} law. One of the main advantages of parametric estimation of income distributions is that explicit formulas, as functions of the parameters of the theoretical income distribution, are available to measure poverty and inequality. This allows, for example, to further interpret the shape parameters of the theoretical income distribution, as well as to carry out sensitivity analyses of poverty measures to variations in the shape parameters \citep{Article:Graf2014}.  In economics, it is well-known that the processes of income generation and distribution must be connected, underpinned by a microeconomic foundation, to the functional form of any model that adequately represents the distribution of personal income \citep{Article:CallealtaBarroso2020}. Our results reveal that the distribution of a household\rq s income short-fall belongs to the generalised beta (GB) distribution family, a group of models that have been widely used in economics for modelling income. 

To assess the validity of our results, we fit the derived GB model to household microdata from Burkina Faso\rq s \textit{Enquête Multisectorielle Continue (EMC) 2014}. Poverty measures are estimated using both the observed income short-fall data and the fitted theoretical income short-fall distribution. Goodness-of-fit tests and comparisons between theoretical and empirical poverty measures suggest that risk theory is a promising theoretical framework for studying poverty dynamics. That is, by appropriately adapting the classical Cram\'er-Lundberg model to better portray a household\rq s \ capital, risk theory provides a vast framework with a diverse set of tools to explore. The application of risk theory techniques to study poverty dynamics is just beginning and its potential is yet to be discovered.

The remainder of the article is organised as follows. In Section \ref{TheCapitalofaHousehold-Section2}, we introduce the capital of a household and its connection with the Cram\'er-Lundberg model. Section \ref{TheGeneralisedBetaDistributionFamily-Section3} provides a brief discussion on the GB distribution family and its application in economics for modelling income. In Section \ref{WhenandHowHouseholdsBecomePoor?-Section2}, the trapping time and the Gerber-Shiu expected discounted penalty function are defined. Moreover, an Integro-Differential Equation (IDE) for the Gerber-Shiu expected discounted penalty function is also derived.  We obtain in Section \ref{TheTrappingTime-Subsection21} a closed-form expression for the Laplace transform of the trapping time when the remaining proportion of capital is $Beta(\alpha, 1)-$distributed. Apart from characterising uniquely the probability distribution of the trapping time, Section \ref{TheTrappingTime-Subsection21} also shows how the Laplace transform of the trapping time can be applied to estimate other quantities of interest such as the expected trapping time. Likewise, Section \ref{TheCapitalDeficitatTrapping-Subsection22} studies the capital deficit at trapping by means of the Gerber-Shiu expected discounted penalty function for $Beta(\alpha, 1)-$distributed remaining proportions of capital and shows that the distribution of the capital deficit at trapping given that trapping occurs is described by a model belonging to the GB distribution family. Section \ref{AClassofPovertyMeasures-Section3} introduces the FGT index in more detail and discusses affinities between the index and the capital deficit at trapping. Built on Sections \ref{TheCapitalDeficitatTrapping-Subsection22} and \ref{AClassofPovertyMeasures-Section3}, a GB distribution is fitted to household microdata from Burkina Faso\rq s \textit{Enquête Multisectorielle Continue (EMC) 2014} in Section \ref{AnApplicationtoBurkinaFasosHouseholdMicrodata-Section4}. In addition, FGT indices are estimated using the fitted distribution. To evaluate the adequacy of the model, empirical values of the poverty measures are compared with theoretical estimates and goodness-of-fit tests are assessed. Lastly, concluding remarks are discussed in Section \ref{Conclusion-Section5}.

\section{The Capital of a Household} \label{TheCapitalofaHousehold-Section2}

In classical risk theory, the insurance risk process with deterministic investment $(U_{t})_{t\geq 0}$ is given by 

\vspace{0.3cm}

\begin{align}
  U_{t} =u+pt+ \nu \int_{0}^{t} U_{s} \, ds-\sum_{i=1}^{P_{t}} Y_{i},
   \label{TheCapitalofaHousehold-Section2-Equation1}
\end{align} 

\vspace{0.3cm}

where $u=U_{0} \geq 0$ is the insurer\rq s initial surplus, $p$ is the incoming premium rate per unit time, $\nu$ is the risk-free interest rate, $(P_{t})_{t\geq 0}$ is a Poisson process with intensity $\mathcal{I}$ counting the number of claims in the time interval $[0,t]$ and $(Y_{i})_{i=1}^\infty$ is a sequence of i.i.d. claim sizes with distribution function $G_{Y}$. Initially introduced by \cite{Article:Segerdahl1942}, this model was subsequently studied by \cite{Article:Harrison1977} and \cite{Article:Sundt1995}. Readers may wish to consult \cite{Article:Paulsen1998} for a detailed literature review on this model. 

Adopting traditional risk theory techniques, this article examines ideas proposed in \cite{Article:Kovacevic2011}. In particular, we study a household\rq s capital process $(X_{t})_{t\geq 0}$ with a deterministic exponential growth and multiplicative capital loss (collapse) structure. The process grows exponentially with a rate $r > 0$, which incorporates household rates of consumption ($0<a<1$), income generation ($0<b$) and investment or savings ($0<c<1$), above a critical capital (or poverty line) $x^{*} > 0$ whereas below this critical threshold it remains constant. At time $T_{i}$, the $i\text{th}$  capital loss event time of a Poisson process $(N_{t})_{t\geq 0}$ with parameter $\lambda$, the capital process jumps (downwards) to $Z_{i} \cdot X_{T_{i}}$, where $(Z_{i})_{i=1}^\infty$ is a sequence of i.i.d. random variables with distribution function $G_{Z}$ supported in $[0,1]$, independent of the process $N_{t}$, representing the proportions of remaining capital after each loss event. Therefore, a household\rq s capital process in between jumps is given by

\vspace{0.3cm}

\begin{align}
    X_{t}=\begin{cases} \left(X_{T_{i-1}}-x^{*}\right) e^{r \left(t-T_{i-1}\right)}+x^{*} & \textit { if } X_{T_{i-1}}>x^{*}, \\ X_{T_{i-1}} & \textit{otherwise},
    \end{cases}
    \label{TheCapitalofaHousehold-Section2-Equation2}
\end{align}

\vspace{0.3cm}

for $T_{i-1}\leq t < T_{i}$ and $T_{0}=0$. On the other hand, at the jump times $t = T_{i}$, the capital process is given by

\vspace{0.3cm}

\begin{align}
    X_{T_{i}}=\begin{cases} \left[\left(X_{T_{i-1}}-x^{*}\right) e^{r \left(T_{i}-T_{i-1}\right)}+x^{*}\right] \cdot Z_{i} & \textit { if } X_{T_{i-1}}>x^{*}, \\ X_{T_{i-1}} \cdot Z_{i} & \textit{otherwise}.
    \end{cases}
    \label{TheCapitalofaHousehold-Section2-Equation3}
\end{align}

\vspace{0.3cm}

The stochastic process $(X_{t})_{t\geq 0}$ is a piecewise-determinsitic Markov process \citep{Article:Davis1984, Book:Davis1993} and its infinitesimal generator is given by

\vspace{0.3cm}

\begin{align}
    (\mathcal{A} f)(x)=r(x-x^{*}) f^{\prime}(x) +\lambda \int_{0}^{1} \left[f(x \cdot z) - f(x)\right] \mathrm{d} G_{Z}(z),  \qquad x \ge x^{*}.
    \label{TheCapitalofaHousehold-Section2-Equation4}
\end{align}

\vspace{0.3cm}

There exist many similarities between the household capital process and other well-known risk processes. For instance, observe that when $p=0$, the insurance risk process \eqref{TheCapitalofaHousehold-Section2-Equation1} is equivalent to the household capital process above the critical capital $x^{*}=0$ with claim losses subtracted from the insurer\rq s surplus rather than prorated. Furthermore, taking the logarithm of a discretised version of the household capital process, that is, setting the critical capital $x^{*}=0$ and taking the logarithm of \eqref{TheCapitalofaHousehold-Section2-Equation3}, yields a version of the classical risk process (see, for instance, \cite{Article:Kovacevic2011} and \cite{Article:Henshaw2023}), also known as the Cram\'er-Lundberg model, introduced by Cram\'er and Lundberg at the beginning of the last century \citep{Book:Lundberg1903,Book:Lundberg1926, Book:Cramer1930}. This model considers linear premium income for the surplus of an insurance company with losses given by a sequence of i.i.d. claim sizes. Clearly, the Cram\'er-Lundberg model could also be seen as a particular case of the risk process \eqref{TheCapitalofaHousehold-Section2-Equation1} with $\nu=0$. Despite these resemblances, there are also a number of discrepancies between the household capital process and those commonly studied in the actuarial science literature. Firstly, only the surplus of a household\rq s current capital above the critical capital grows exponentially. Secondly, household losses are defined as a proportion of the accumulated capital, yielding absolute losses that are serially correlated with each other and with the inter-arrival times of loss events \citep{Article:Kovacevic2011, Article:Henshaw2023}.

\section{The Generalised Beta Distribution Family} \label{TheGeneralisedBetaDistributionFamily-Section3}

The probability density function (p.d.f.) of the generalised beta (GB) distribution family is given by

\vspace{0.3cm}

\begin{align}
GB(y ; a, b, c, p, q)=\frac{|a| y^{a p-1}\left(1-(1-c)(y / b)^a\right)^{q-1}}{b^{a p} \mathrm{~B}(p, q)\left(1+c(y / b)^a\right)^{p+q}} \qquad \textit { for } \qquad 0<y^a<\frac{b^a}{1-c},
    \label{TheGeneralisedBetaDistributionFamily-Section3-Equation1}
\end{align}

\vspace{0.3cm}

and zero otherwise, where $a \neq 0$; $ 0 \leq  c \leq 1$; $b,p,q > 0$; and $B(p,q)=\int_0^1 t^{p-1}(1-t)^{q-1} d t$ denotes the beta function (see, for instance, equation (6.2.1) from \cite{Book:Abramowitz1964}). The GB includes other distributions as special or limiting cases (see, for example, \cite{Article:McDonald1995}). In particular, the beta of the first kind (B1), with p.d.f.

\vspace{0.3cm}

\begin{align}
B1(y ; b, p, q):=GB(y ; a=1, b, c =0, p, q)=\frac{y^{p-1}(b-y)^{q-1}}{b^{p+q-1} B(p, q)} \qquad \textit { for } \qquad 0<y<b,
    \label{TheGeneralisedBetaDistributionFamily-Section3-Equation2}
\end{align}

\vspace{0.3cm}

arises as the model that describes the distribution of a household\rq s income short-fall, for the particular case in which the remaining proportions of capital $Z_{i}$ are $Beta(\alpha,1)-$distributed. Indeed, the results obtained in Section \ref{TheCapitalDeficitatTrapping-Subsection22} validate the adequacy of the B1 distribution as a model of income distribution and, in particular, as a model for the distribution of the income short-fall. \cite{Article:Thurow1970} was the first to adopt the standard beta distribution ($Beta(p,q) := B1(y ; b = 1, p, q) $) to analyse factors contributing to income inequality among whites and blacks. One of the main advantages of the beta distribution is that it includes the gamma distribution as a limiting case and therefore provides at least as good a fit as the gamma. This is an important feature, especially since the gamma distribution has also been considered to model income distribution \citep{Article:Salem1974}. In the 1980s, seeking to improve the goodness of fit of the two-parameter standard beta distribution, \cite{Article:McDonald1984} introduced the generalized beta of the first and second kind ($GB1:=GB(y ; a, b, c = 0, p, q)$ and $GB2:=GB(y ; a, b, c = 1, p, q)$), two four-parameter distributions that nest most of the previously used models of two and three parameters as special cases or limit distributions (e.g. the Singh-Maddala distribution \citep{Article:Singh976}). Subsequently, \cite{Article:McDonald1995} introduced \eqref{TheGeneralisedBetaDistributionFamily-Section3-Equation1}, a five-parameter distribution that has clearly played an important role for modelling income. In fact, many distributions (belonging or not to the GB distribution family) with a varying number of parameters have been used in the literature to model income (see \cite{Article:Hlasny2021} for a detailed survey). 

\section{When and How Households Become Poor?} \label{WhenandHowHouseholdsBecomePoor?-Section2}

Let 

\vspace{0.3cm}

\begin{align}
    \tau_{x}:=\inf \left\{t \geq 0: X_{t}<x^{*} \mid X_{0}=x\right\}
    \label{WhenandHowHouseholdsBecomePoor?-Section2-Equation1}
\end{align}

\vspace{0.3cm}

denote the time at which a household with initial capital $x \ge x^{*}$ falls into the area of poverty (the trapping time), where $\psi(x) = \mathbb{P}(\tau_{x} < \infty)$ is the infinite-time trapping probability. To study the distribution of the trapping time, we apply the Gerber-Shiu expected discounted penalty function at ruin, a concept commonly used in actuarial science \citep{Article:Gerber1998}, such that with a force of interest $\delta \ge 0$ and initial capital $x \ge x^{*}$, we consider

\vspace{0.3cm}

\begin{align}
    m_{\delta}(x)= \mathbb{E}\left[w(X_{\tau^{-}_{x}}- x^{*},\mid X_{\tau_{x}}-x^{*}\mid)e^{-\delta \tau_{x}} \mathbbm{1}_{\{\tau_{x} < \infty\}}\right],
    \label{WhenandHowHouseholdsBecomePoor?-Section2-Equation2}
\end{align}

\vspace{0.3cm}

where $\mathbbm{1}_{\{A\}}$ is the indicator function of a set $A$, and $w(x_{1}, x_{2})$, for $0 \leq x_{1} < \infty$ and  $0 < x_{2} \leq x^{*} $, is a non-negative penalty function of $x_{1}$, the capital surplus prior to the trapping time, and $x_{2}$, the capital deficit at the trapping time. For more details on the so-called Gerber-Shiu risk theory, interested readers may wish to consult \cite{Book:Kyprianou2013}. The function $m_{\delta}(x)$ is useful for deriving results in connection with joint and marginal distributions of $\tau_{x}$, $X_{\tau^{-}_{x}}- x^{*}$ and $\mid X_{\tau_{x}}-x^{*}\mid$. For example, when $\delta$ is considered as the argument, \eqref{WhenandHowHouseholdsBecomePoor?-Section2-Equation2} can be viewed in terms of a Laplace transform. That is, \eqref{WhenandHowHouseholdsBecomePoor?-Section2-Equation2} is the Laplace transform of the trapping time $\tau_{x}$ if one sets $w(x_{1}, x_{2})=1$\footnote{Recall that, for a continuous random variable $Y$, with p.d.f. $f_{Y}$, the Laplace transform of $f_{Y}$ is given by the expected value $\mathcal{L}\{f_{Y}\}\left(s\right)=\mathbb{E}\left[e^{-sY}\right]$.}. Another choice, for any fixed $y$, is $w(x_{1}, x_{2}) = \mathbbm{1}_{\{x_{2} \leq y\}}$ for $\delta =0$, for which \eqref{WhenandHowHouseholdsBecomePoor?-Section2-Equation2} leads to the distribution function of the capital deficit at trapping. It is not difficult to realise that, by appropriately choosing a penalty function $w(x_{1}, x_{2})$ and force of interest $\delta$, various risk quantities can be modeled. \cite{Article:He2023} provide a non-exhaustive list of such risk quantities. In this article, we are mainly interested in studying the Laplace transform of the trapping time and the distribution of the capital deficit at trapping. Thus, we will focus our analysis on the choices mentioned above. Following \cite{Article:Gerber1998}, our goal is to derive a functional equation for $m_{\delta}(x)$ by applying the law of iterated expectations to the right-hand side of \eqref{WhenandHowHouseholdsBecomePoor?-Section2-Equation2}.

\vspace{0.3cm}

\begin{theorem}\label{WhenandHowHouseholdsBecomePoor?-Section2-Theorem1}
The Gerber-Shiu expected discounted penalty function at trapping, $m_{\delta}(x)$, for $x\geq x^{*}$, satisfies the following Integro-Differential Equation (IDE)

\vspace{0.3cm}

\begin{align}
    \begin{split}
        r(x-x^{*})m_{\delta}'(x)-(\delta + \lambda)m_{\delta}(x) +\lambda \int_{x^{*}/x}^{1}m_{\delta}(x\cdot z)dG_{Z}(z) = -\lambda A(x),
        \label{WhenandHowHouseholdsBecomePoor?-Section2-Equation3}
    \end{split}
\end{align}

\vspace{0.3cm}

where $A(x):=\int_{0}^{x^{*}/x}w(x-x^{*}, x^{*}-x\cdot z)dG_{Z}(z)$, with boundary conditions

\vspace{0.3cm}

\begin{align}
m_{\delta}(x^{*})=
\frac{\lambda}{\delta + \lambda} A(x^{*})
\ \ \ \text{and} \ \ \ \lim_{x\rightarrow\infty}m_{\delta}(x) = 0.
\label{WhenandHowHouseholdsBecomePoor?-Section2-Equation4}
\end{align}

\end{theorem}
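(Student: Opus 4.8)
The plan is to follow the classical Gerber--Shiu route and condition on the epoch and magnitude of the first capital-loss event, which is precisely the law of iterated expectations applied to the right-hand side of \eqref{WhenandHowHouseholdsBecomePoor?-Section2-Equation2}. Starting from $X_0 = x > x^{*}$, the process is deterministic until the first jump time $T_1$, which is exponentially distributed with parameter $\lambda$ and independent of the remaining-proportion variable $Z_1 \sim G_Z$; during this interval the capital follows the trajectory $x_t := (x-x^{*})e^{rt} + x^{*}$, so the pre-jump level at $T_1$ is $x_{T_1}$. Conditioning on $(T_1,Z_1)=(t,z)$, I would split according to whether the first jump already causes trapping. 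If $z < x^{*}/x_t$ the post-jump level $x_t z$ lies below $x^{*}$, trapping occurs at $T_1$ with surplus prior to trapping $x_t - x^{*}$ and deficit $x^{*} - x_t z$, contributing $e^{-\delta t} w(x_t - x^{*}, x^{*} - x_t z)$; integrating over such $z$ produces exactly $e^{-\delta t} A(x_t)$. If instead $z \ge x^{*}/x_t$, the capital restarts at $x_t z \ge x^{*}$ and, by the strong Markov property and time-homogeneity, the future contribution is $e^{-\delta t} m_{\delta}(x_t z)$.

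Taking expectations over $T_1$ and $Z_1$ then yields the integral equation
\[
m_{\delta}(x) = \int_0^\infty \lambda e^{-(\lambda+\delta)t}\Big[A(x_t) + \int_{x^{*}/x_t}^1 m_{\delta}(x_t z)\,dG_Z(z)\Big]dt.
\]
To convert this into \eqref{WhenandHowHouseholdsBecomePoor?-Section2-Equation3}, I would change the variable of integration from $t$ to the pre-jump level $y = x_t$, using $dy = r(y-x^{*})\,dt$ and $e^{-(\lambda+\delta)t} = \big((x-x^{*})/(y-x^{*})\big)^{(\lambda+\delta)/r}$, which recasts $m_{\delta}$ as $\frac{\lambda}{r}(x-x^{*})^{(\lambda+\delta)/r}$ times an improper integral of $H(y)/(y-x^{*})^{(\lambda+\delta)/r + 1}$ over $[x,\infty)$, where $H$ denotes the bracketed term above. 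Differentiating this product in $x$---the power prefactor by the product rule and the integral by the fundamental theorem of calculus for the variable lower limit---and then using the integral representation itself to re-express the surviving integral in terms of $m_{\delta}(x)$, the contributions reassemble, after multiplying through by $r(x-x^{*})$, into exactly \eqref{WhenandHowHouseholdsBecomePoor?-Section2-Equation3}. This is the infinitesimal-generator identity $(\mathcal{A}-\delta)m_{\delta}=0$ restricted to the survival region, with $-\lambda A(x)$ accounting for the jumps landing in the poverty area, so the outcome is consistent with the generator \eqref{TheCapitalofaHousehold-Section2-Equation4} and serves as a check.

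For the boundary conditions, the value at $x^{*}$ can be read off directly: starting exactly at the critical level the capital stays constant (the growth term $r(x-x^{*})$ vanishes) until the first jump, which sends it to $x^{*}Z_1 < x^{*}$ and traps the household immediately, so $m_{\delta}(x^{*}) = \mathbb{E}[e^{-\delta T_1}]\,\mathbb{E}[w(0, x^{*} - x^{*}Z_1)] = \frac{\lambda}{\delta+\lambda}A(x^{*})$; equivalently this is what the IDE returns upon setting $x=x^{*}$, since both $r(x-x^{*})m_{\delta}'(x)$ and the integral (whose limits collapse) vanish. The condition at infinity is the more delicate one: it does not follow from the IDE alone, since the integral representation only yields boundedness, but from the probabilistic meaning of $m_{\delta}$. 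For the bounded penalty functions of interest one has $m_{\delta}(x) \le \|w\|_\infty\,\psi(x)$, so $\lim_{x\to\infty} m_{\delta}(x) = 0$ reduces to the fact that the infinite-time trapping probability $\psi(x)$ vanishes as the initial capital grows---a property of this capital process established in the cited works---reinforced, when $\delta>0$, by the discounting of the then necessarily large trapping time. I expect this change-of-variables-and-differentiation step, together with the justification of the limit at infinity, to be the main technical points; the first-jump decomposition itself is routine.
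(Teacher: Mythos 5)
Your proposal is correct, and it rests on the same probabilistic input as the paper—conditioning on the time and the remaining proportion of the first capital loss—but the technical execution is genuinely different. The paper conditions only on a finite window $(0,h)$: it retains the no-loss term $e^{-(\delta+\lambda)h}m_{\delta}\bigl((x-x^{*})e^{rh}+x^{*}\bigr)$, obtains an integral equation parameterised by $h$, and differentiates w.r.t.\ $h$ at $h=0$ to produce the IDE, with differentiability of $m_{\delta}$ in $x$ argued separately ``by symmetry'' (citing Asmussen). You instead integrate over the whole horizon $t\in[0,\infty)$ (the survival term vanishes in the limit because $\lambda+\delta>0$ and $m_{\delta}$ is bounded), and then convert the renewal-type equation by the substitution $y=(x-x^{*})e^{rt}+x^{*}$ into
\[
m_{\delta}(x)=\frac{\lambda}{r}\,(x-x^{*})^{(\lambda+\delta)/r}\int_{x}^{\infty}\frac{H(y)}{(y-x^{*})^{(\lambda+\delta)/r+1}}\,dy,
\qquad H(y):=A(y)+\int_{x^{*}/y}^{1}m_{\delta}(yz)\,dG_{Z}(z),
\]
after which differentiation in $x$ gives $r(x-x^{*})m_{\delta}'(x)=(\lambda+\delta)m_{\delta}(x)-\lambda H(x)$, which is exactly the IDE. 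Each route buys something: the paper's infinitesimal argument is shorter, while yours makes the differentiability of $m_{\delta}$ in $x$ transparent from the integral representation (you should still say a word about continuity of $H$, which the fundamental-theorem step needs), and—importantly—you actually prove the boundary conditions, which the paper's proof does not address at all. Your observation that the condition at infinity is not a consequence of the IDE but of the probabilistic bound $m_{\delta}(x)\le\|w\|_{\infty}\,\psi(x)$ (supplemented by discounting when $\delta>0$) is exactly right; note only that $\psi(x)\to0$ is itself conditional on a net-profit-type restriction (in the $Beta(\alpha,1)$ case, $\alpha>\lambda/r$), so the boundary condition at infinity implicitly carries that hypothesis, a caveat the theorem statement leaves silent.
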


\vspace{0.3cm}

\begin{proof}

For $h>0$, consider the time interval $(0,h)$, and condition on the time $t$ and the proportion $z$ of remaining capital after the first capital loss in this time interval. Since the inter-arrival times of losses are exponentially distributed, the probability that there is no loss up to time $h$ is $e^{-\lambda h}$, and the probability that the first capital loss occurs between time $t$ and time $t+dt$ is $e^{-\lambda t} \lambda dt$. If 

\vspace{0.3cm}

\begin{align}
    z < \frac{x^{*}}{(x-x^*)e^{rt}+x^*}, \hspace{2cm}
    \label{WhenandHowHouseholdsBecomePoor?-Section2-Equation5}
\end{align}

\vspace{0.3cm}

where $0 < \frac{x^{*}}{(x-x^*)e^{rt}+x^*} \leq 1$, trapping has occurred with the first loss. Hence,

\vspace{0.3cm}

%\footnotesize
\begin{align}
    \begin{split}
        m_{\delta}(x)&= e^{-(\delta + \lambda)h}m_{\delta}((x-x^*)e^{rh}+x^*)\\
        &+\int_{0}^{h}\left[\int_{0}^{\frac{x^{*}}{(x-x^*)e^{rt}+x^*}}w\left((x-x^*)e^{rt}, x^*-((x-x^*)e^{rt}+x^*)\cdot z\right)dG_{Z}(z)\right]e^{-(\delta + \lambda)t}\lambda dt\\
        &+\int_{0}^{h}\left[\int_{\frac{x^{*}}{(x-x^*)e^{rt}+x^*}}^{1}m_{\delta}(((x-x^*)e^{rt}+x^*)\cdot z)dG_{Z}(z)\right]e^{-(\delta + \lambda)t}\lambda dt.
        \label{WhenandHowHouseholdsBecomePoor?-Section2-Equation6}
    \end{split}
\end{align}
\normalsize

\vspace{0.3cm}

Note that every part of the above integral equation (IE) is differentiable with respect to (w.r.t.) $h$. Thus, by symmetry one can also establish the differentiability of $m_{\delta}(x)$ w.r.t. $x$ (see, for example, Remark 1.11 in \cite{Book:Asmussen2010} where a similar argument is presented for the ruin probability of risk processes with non-proportional random-valued losses). Differentiating \eqref{WhenandHowHouseholdsBecomePoor?-Section2-Equation6} w.r.t. $h$ and setting $h=0$, \eqref{WhenandHowHouseholdsBecomePoor?-Section2-Equation3} is obtained.

\vspace{0.3cm}

\end{proof}

\vspace{0.3cm}

\subsection{The Trapping Time} \label{TheTrappingTime-Subsection21}

As noted previously, specifying the penalty function such that $w(x_{1}, x_{2}) = 1$, \eqref{WhenandHowHouseholdsBecomePoor?-Section2-Equation2} becomes the Laplace transform of the trapping time, also interpreted as the expected present value of a unit payment due at the trapping time. Thus, equation \eqref{WhenandHowHouseholdsBecomePoor?-Section2-Equation3} can then be written such that

\vspace{0.3cm}

\begin{align}
    \begin{split}
        0 = r(x-x^{*})m_{\delta}'(x)
        -(\delta + \lambda)m_{\delta}(x)
        +\lambda G_{Z}\left(\frac{x^{*}}{x}\right)
        +\lambda \int_{x^{*}/x}^{1}m_{\delta}(x\cdot z)dG_{Z}(z).
        \label{TheTrappingTime-Subsection21-Equation1}
    \end{split}
\end{align}

\vspace{0.3cm}

\begin{remark}

In general, it is not straightforward to obtain the solution of \eqref{TheTrappingTime-Subsection21-Equation1} for general distribution functions $G_{Z}$. Hence, throughout this article, it will be assumed that $Z_{i} \sim Beta(\alpha,1)$, case for which the distribution function is $G_{Z}(z) = z^{\alpha}$ and the p.d.f. is $g_{Z}(z)=\alpha z^{\alpha - 1}$ for $0 < z < 1$, where $\alpha >0$. Under this assumption, one can derive a closed-form expression for the Laplace transform of the trapping time.

\end{remark}

\vspace{0.3cm}

\begin{proposition}\label{TheTrappingTime-Subsection21-Proposition1}

Consider a household capital process defined as in \eqref{TheCapitalofaHousehold-Section2-Equation2} and \eqref{TheCapitalofaHousehold-Section2-Equation3}, with initial capital $x\ge x^{*}$, capital growth rate $r$, intensity $\lambda > 0$ and remaining proportions of capital with distribution $Beta(\alpha, 1)$ where $\alpha >0$; that is, $Z_{i}\sim Beta(\alpha, 1)$. The Laplace transform of the trapping time is given by

\vspace{0.3cm}

\begin{align}
    m_{\delta}(x)=\frac{\lambda \cdot  { }_{2} F_{1}\left(b, b-c+1 ; b-a+1 ; y(x)^{-1}\right)}{(\lambda + \delta){ }_{2} F_{1}\left(b, b-c+1 ; b-a+1 ; 1\right)} {y(x)^{-b}},
    \label{TheTrappingTime-Subsection21-Equation2}
 \end{align}

\vspace{0.3cm}

where $\delta \ge 0$ is the force of interest for valuation, ${ }_{2} F_{1}\left(\cdot \right)$ is Gauss\rq s Hypergeometric Function as defined in \eqref{TheTrappingTime-Subsection21-Equation7}, $y(x)=\frac{x}{x^{*}}$, $a=\frac{-(\delta + \lambda - \alpha r) - \sqrt{(\delta + \lambda -\alpha r)^{2}+4 r \alpha \delta}}{2r}$, $b=\frac{-(\delta + \lambda - \alpha r) + \sqrt{(\delta + \lambda -\alpha r)^{2}+4 r \alpha \delta}}{2r}$ and $c= \alpha$.

\end{proposition}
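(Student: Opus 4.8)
The plan is to reduce the integro-differential equation \eqref{TheTrappingTime-Subsection21-Equation1} to a Gauss hypergeometric ordinary differential equation (ODE), solve it, and pin down the solution using the two boundary conditions in \eqref{WhenandHowHouseholdsBecomePoor?-Section2-Equation4}. First I would insert $G_{Z}(z)=z^{\alpha}$ and $g_{Z}(z)=\alpha z^{\alpha-1}$ into \eqref{TheTrappingTime-Subsection21-Equation1} and remove the integral by the substitution $u=x\cdot z$, which turns $\lambda\int_{x^{*}/x}^{1}m_{\delta}(x\cdot z)\,dG_{Z}(z)$ into $\lambda\alpha\,x^{-\alpha}\int_{x^{*}}^{x}m_{\delta}(u)\,u^{\alpha-1}\,du$. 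Multiplying the resulting equation by $x^{\alpha}$ and differentiating once with respect to $x$ eliminates the integral term (its derivative is $\lambda\alpha\,m_{\delta}(x)\,x^{\alpha-1}$) and yields a second-order linear ODE for $m_{\delta}$ with polynomial coefficients.

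Next I would pass to the normalised variable $y=x/x^{*}$ and set $M(y)=m_{\delta}(x^{*}y)$. A direct computation shows that $M$ satisfies
\[
y(1-y)M''(y)+\left[\alpha-\left((1+\alpha)-\tfrac{\delta+\lambda}{r}\right)y\right]M'(y)+\tfrac{\alpha\delta}{r}M(y)=0,
\]
which is exactly Gauss's hypergeometric equation $z(1-z)w''+[c-(a+b+1)z]w'-ab\,w=0$ with $z=y$, provided $c=\alpha$, $a+b=\alpha-\tfrac{\delta+\lambda}{r}$ and $ab=-\tfrac{\alpha\delta}{r}$. Solving this symmetric pair recovers precisely the $a$ and $b$ stated in the proposition, namely the two roots of $r s^{2}+(\delta+\lambda-\alpha r)s-\alpha\delta=0$.

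Because the physical domain is $y\in[1,\infty)$ with a singular point at $y=\infty$, I would employ the two Kummer solutions based at infinity, $y^{-a}\,{}_{2}F_{1}(a,a-c+1;a-b+1;y^{-1})$ and $y^{-b}\,{}_{2}F_{1}(b,b-c+1;b-a+1;y^{-1})$. Since $ab=-\alpha\delta/r\le 0$ and $b>a$, we have $b\ge 0\ge a$, so only the $y^{-b}$ branch satisfies $\lim_{x\to\infty}m_{\delta}(x)=0$; this selects $M(y)=C\,y^{-b}\,{}_{2}F_{1}(b,b-c+1;b-a+1;y^{-1})$ for a constant $C$. That constant is then fixed by the boundary condition at $x^{*}$: since $A(x^{*})=G_{Z}(1)=1$ when $w\equiv 1$, equation \eqref{WhenandHowHouseholdsBecomePoor?-Section2-Equation4} gives $M(1)=m_{\delta}(x^{*})=\lambda/(\delta+\lambda)$, and evaluating the hypergeometric solution at $y=1$ — which is finite because $c-a-b=(\delta+\lambda)/r>0$ guarantees convergence of ${}_{2}F_{1}$ at argument $1$ — determines $C$ and produces \eqref{TheTrappingTime-Subsection21-Equation2}.

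The step requiring the most care is the differentiation used to eliminate the integral: it raises the order of the equation and thereby introduces one spurious degree of freedom. I would resolve this by observing that the original (undifferentiated) relation is constant along any solution of the second-order ODE, and that evaluating it at $x=x^{*}$ reproduces exactly $m_{\delta}(x^{*})=\lambda/(\delta+\lambda)$; hence imposing this boundary value guarantees that the selected solution solves the original integro-differential equation and not merely its derivative. Matching the hypergeometric representation to the correct Kummer branch at infinity, together with this consistency check, is the main obstacle; the remaining work (verifying the coefficients $a,b,c$ and the convergence condition $c-a-b>0$) is routine.
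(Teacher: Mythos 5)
Your proposal is correct and follows essentially the same route as the paper: substitute $G_{Z}(z)=z^{\alpha}$, eliminate the integral by differentiation to obtain the second-order ODE $r(x^{2}-xx^{*})m_{\delta}''+[(r(1+\alpha)-\delta-\lambda)x-r\alpha x^{*}]m_{\delta}'-\alpha\delta m_{\delta}=0$, recognise it after the substitution $y=x/x^{*}$ as Gauss's hypergeometric equation, take the Kummer solutions at $y=\infty$, kill the $y^{-a}$ branch via $\lim_{x\to\infty}m_{\delta}(x)=0$, and fix the remaining constant from $m_{\delta}(x^{*})=\lambda/(\delta+\lambda)$. You additionally supply details the paper leaves implicit — the check that imposing the value at $x^{*}$ removes the spurious degree of freedom created by differentiating (since the multiplied-through relation is a first integral of the ODE), the sign argument $b\ge 0\ge a$ justifying $A_{1}=0$, and the convergence condition $c-a-b=(\delta+\lambda)/r>0$ for ${}_{2}F_{1}$ at unit argument — which strengthen rather than alter the paper's argument.
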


\begin{proof}

Under the assumption $Z_{i}\sim Beta(\alpha, 1)$, the IDE \eqref{TheTrappingTime-Subsection21-Equation1} can be written such that

\vspace{0.3cm}

\begin{align}
    \begin{split}
        0 = r(x-x^{*})m_{\delta}'(x)
        -(\delta + \lambda)m_{\delta}(x)
        +\lambda \left(\frac{x^{*}}{x}\right)^{\alpha}+\lambda \int_{x^{*}/x}^{1}m_{\delta}(x\cdot z)\alpha z^{\alpha-1}dz.
        \label{TheTrappingTime-Subsection21-Equation3}
    \end{split}
\end{align}

\vspace{0.3cm}

Applying the operator $\frac{d}{dx}$ to both sides of \eqref{TheTrappingTime-Subsection21-Equation3}, together with a number of algebraic manipulations, yields to the following second order Ordinary Differential Equation (ODE)

\vspace{0.3cm}

\begin{align}
    \begin{split}
        0 &= r(x^{2}-xx^{*})m_{\delta}''(x)+\left[(r(1+\alpha)-\delta-\lambda)x-r\alpha x^{*}\right]m_{\delta}'(x)-\alpha \delta m_{\delta}(x).
        \label{TheTrappingTime-Subsection21-Equation4}
    \end{split}
\end{align}

\vspace{0.3cm}
Letting $f(y):=m_{\delta}(x)$, such that $y$ is associated with the change of variable $y:=y(x)=\frac{x}{x^{*}}$, equation \eqref{TheTrappingTime-Subsection21-Equation4} reduces to Gauss\rq s Hypergeometric Differential Equation \citep{Book:Slater1960}

\vspace{0.3cm}

\begin{align}
    y(1-y)\cdot f''(y) + [c - (1+a+b)y] f'(y) - ab f(y) =0,
    \label{TheTrappingTime-Subsection21-Equation5}
\end{align}

\vspace{0.3cm}

for $a=\frac{-(\delta + \lambda - \alpha r) - \sqrt{(\delta + \lambda -\alpha r)^{2}+4 r \alpha \delta}}{2r}$, $b=\frac{-(\delta + \lambda - \alpha r) + \sqrt{(\delta + \lambda -\alpha r)^{2}+4 r \alpha \delta}}{2r}$ and $c= \alpha$, with regular singular points at $y=0, 1, \infty$ (corresponding to $x=0,x^{*},\infty$, respectively). A general solution of \eqref{TheTrappingTime-Subsection21-Equation5} in the neighborhood of the singular point $y=\infty$ is given by

\vspace{0.3cm}

\footnotesize
\begin{align}
    f(y):=m_{\delta}(x)= A_{1}{y(x)}^{-a} { }_{2} F_{1}\left(a, a-c+1 ; a-b+1 ; {y(x)}^{-1}\right)+A_{2}{y(x)}^{-b} { }_{2} F_{1}\left(b, b-c+1 ; b-a+1 ; {y(x)}^{-1}\right),\\
    \label{TheTrappingTime-Subsection21-Equation6}
\end{align}
\normalsize

\vspace{0.3cm}

for arbitrary constants $A_{1},A_{2} \in \mathbb {R}$ (see for example, equations (15.5.7) and (15.5.8) of \cite{Book:Abramowitz1964}). Here, 

\vspace{0.3cm}

\begin{align}
    { }_{2} F_{1}(a, b ; c ; z)=\sum_{n=0}^{\infty} \frac{(a)_{n}(b)_{n}}{(c)_{n}} \frac{z^{n}}{n !}
    \label{TheTrappingTime-Subsection21-Equation7}
\end{align}

is Gauss\rq s Hypergeometric Function \citep{Article:Gauss1812} and $(a)_{n}=\frac{\Gamma(a+n)}{\Gamma(n)}$ denotes the Pochhammer symbol \citep{Book:Seaborn1991}.

To determine the constants $A_1$ and $A_2$, we use the boundary conditions at $x^*$ and at infinity. The boundary condition $\lim\limits_{x\to\infty} m_{\delta}(x) = 0$, thus implies
that $A_{1}=0$. Letting $x=x^{*}$ in \eqref{TheTrappingTime-Subsection21-Equation3} and \eqref{TheTrappingTime-Subsection21-Equation6} yields

\vspace{0.3cm}

\begin{align}
    \frac{\lambda}{\lambda + \delta}=A_{2} \cdot { }_{2} F_{1}\left(b, b-c+1 ; b-a+1 ; 1\right).
    \label{TheTrappingTime-Subsection21-Equation8}
\end{align}

\vspace{0.3cm}

Hence, $A_{2}=\frac{\lambda }{(\lambda + \delta){ }_{2} F_{1}\left(b, b-c+1 ; b-a+1 ; 1\right)}$ and the Laplace transform of the trapping time is given by \eqref{TheTrappingTime-Subsection21-Equation2}.

\end{proof}

\begin{remark}
Figure \ref{TheTrappingTime-Subsection21-Figure1-a} shows that the Laplace transform of the trapping time approaches the trapping probability as $\delta$ tends to zero, i.e. 
    
    \vspace{0.3cm}
    
    \begin{align}
        \lim _{\delta \downarrow 0} m_{\delta}(x) =\mathbb{P}(\tau_{x}<\infty)\equiv\psi(x).
        \label{TheTrappingTime-Subsection21-Equation9}
    \end{align}
    
    \vspace{0.3cm}
    
    As $\delta\to 0$, \eqref{TheTrappingTime-Subsection21-Equation2} yields 
    
    \vspace{0.3cm}
    
    \begin{align}
        \psi(x) = \frac{{ }_{2} F_{1}\left(\alpha - \frac{\lambda}{r}, 1-\frac{\lambda}{r} ; 1+\alpha-\frac{\lambda}{r} ; y(x)^{-1}\right)}{{ }_{2} F_{1}\left(\alpha - \frac{\lambda}{r}, 1-\frac{\lambda}{r} ; 1+\alpha-\frac{\lambda}{r} ; 1\right)}y(x)^{\frac{\lambda}{r}-\alpha},
        \label{TheTrappingTime-Subsection21-Equation10}
    \end{align}

    \vspace{0.3cm}
 
    for $\alpha > \frac{\lambda}{r}$. Indeed, \eqref{TheTrappingTime-Subsection21-Equation10} was recently derived in \cite{Article:Henshaw2023} using Laplace transform techniques. Figure \ref{TheTrappingTime-Subsection21-Figure1-b} displays the trapping probability $\psi(x)$ for the capital process $X_{t}$. Note that, as mentioned in \cite{Article:Henshaw2023}, we can further simplify the expression for the trapping probability using some properties of Gauss\rq s Hypergeometric Function. Namely, 
    
        \vspace{0.3cm}
    
    \begin{align}
        { }_{2} F_{1}(a, b ; c ; 1)=\frac{\Gamma(c) \Gamma(c-a-b)}{\Gamma(c-a) \Gamma(c-b)}, \qquad \left(c \neq 0,-1,-2,..., \mathbbm{R}\left(c-a-b\right)>0\right)
        \label{TheTrappingTime-Subsection21-Equation11}
    \end{align}
    
    \vspace{0.3cm}

    (see, for example, equation (15.1.20) of \cite{Book:Abramowitz1964}). Applying this relation, we obtain
    
    \vspace{0.3cm}
    
    \begin{align}
        \psi(x) = \frac{\Gamma(\alpha) \cdot { }_{2} F_{1}\left(\alpha-\frac{\lambda}{r}, 1-\frac{\lambda}{r} ; 1+\alpha-\frac{\lambda}{r} ; y(x)^{-1}\right)}{\left(\alpha - \frac{\lambda}{r}\right) \Gamma\left(\alpha-\frac{\lambda}{r}\right) \Gamma\left(\frac{\lambda}{r}\right)} {y(x)^{\frac{\lambda}{r}-\alpha}}.
        \label{TheTrappingTime-Subsection21-Equation12}
    \end{align}
    
    \vspace{0.3cm}

\begin{figure}[H]
	\begin{subfigure}[b]{0.5\linewidth}
	   % Plot generated with the R code: UninsuredTrappingTimeLaplaceTransform.R
       % We could generate other plots if needed.
  		\includegraphics[width=8cm, height=8cm]{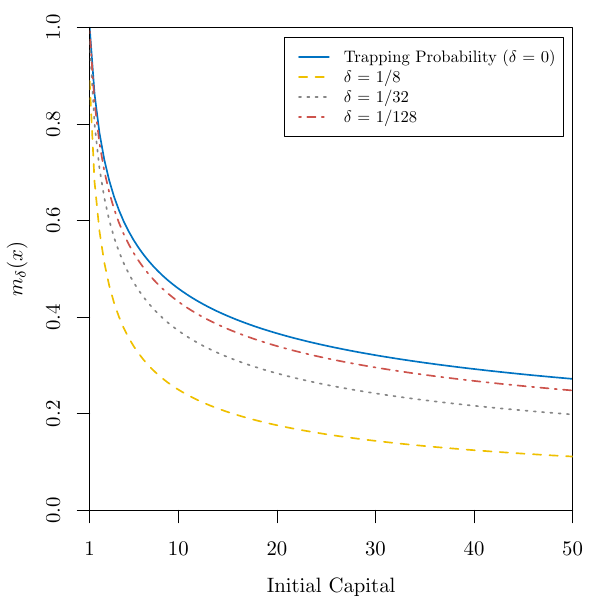}
		\caption{}
  		\label{TheTrappingTime-Subsection21-Figure1-a}
	\end{subfigure}
	\begin{subfigure}[b]{0.5\linewidth}
	   % Plot generated with the R code: UninsuredTrappingProbability.R
        % We could generate other plots if needed. Maybe be could look for a seed to fix the obtained results.
  		\includegraphics[width=8cm, height=8cm]{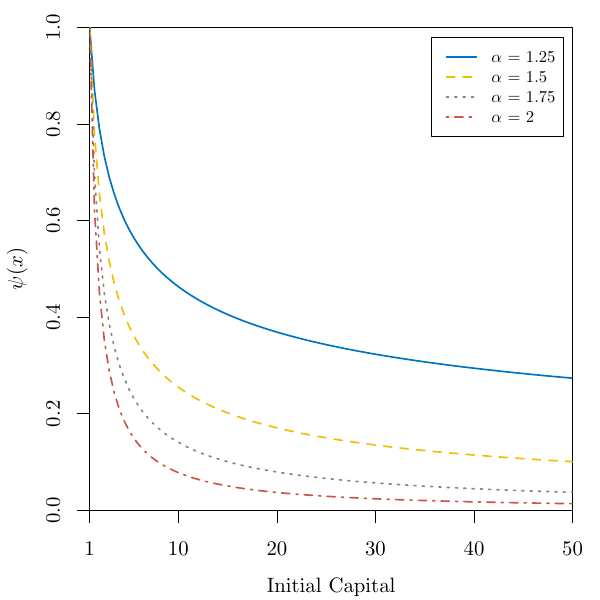}
		\caption{}
  		\label{TheTrappingTime-Subsection21-Figure1-b}
	\end{subfigure}
	\caption{(a) Laplace transform $m_{\delta}(x)$ of the trapping time when $Z_{i} \sim Beta(1.25, 1)$, $a = 0.1$, $b = 3$, $c = 0.4$, $\lambda = 1$, $x^{*} = 1$ for $\delta = 0, \frac{1}{8}, \frac{1}{32}, \frac{1}{128}$ (b) Trapping probability $\psi(x)$ when $Z_{i} \sim Beta(\alpha, 1)$, $a = 0.1$, $b = 3$, $c = 0.4$, $\lambda = 1$, $x^{*} = 1$ for $\alpha = 1.25, 1.5, 1.75, 2$.}
	\label{TheTrappingTime-Subsection21-Figure1}
\end{figure}

\end{remark}

\begin{remark}
    As an application of the Laplace transform of the trapping time, one particular quantity of interest is the expected trapping time; i.e. the expected time at which a household will fall into the area of poverty. This can be obtained by taking the derivative of $m_{\delta}(x)$:

    \vspace{0.3cm}

    \begin{align}
        \mathbb{E}\left[\tau_{x} ;\tau_{x}<\infty \right]=-\left.\frac{d}{d \delta} m_{\delta}(x)\right|_{\delta=0},
        \label{TheTrappingTime-Subsection21-Equation13}
    \end{align}
    
    \vspace{0.3cm}

    where $\mathbb{E}\left[\tau_{x} ;\tau_{x}<\infty \right]$ is equivalent to $\mathbb{E}\left[\tau_{x} \mathbbm{1}_{\{\tau_{x}<\infty\}} \right]$. As such, we differentiate Gauss\rq s Hypergeometric Function w.r.t. its first, second and third parameters. Denote

    \vspace{0.3cm}

{\allowdisplaybreaks
    \begin{align}
        { }_{2} F_{1}^{(a)}(a, b ; c ; z)&\equiv \frac{d}{da} { }_{2} F_{1}(a, b ; c ; z), \\
        { }_{2} F_{1}^{(b)}(a, b ; c ; z)&\equiv \frac{d}{db} { }_{2} F_{1}(a, b ; c ; z), \text{ and } \\
        { }_{2} F_{1}^{(c)}(a, b ; c ; z)&\equiv \frac{d}{dc} { }_{2} F_{1}(a, b ; c ; z).
        \label{TheTrappingTime-Subsection21-Equation14}
    \end{align}
 }
    
    \vspace{0.3cm}

    A closed-form expression of the aforementioned derivatives is given in terms of the Kamp\'e de F\'eriet Function \citep{Book:Appell1926}:
    
    \vspace{0.3cm}

    \begin{align}
        \begin{array}{l}
        F_{R, S, U}^{A, B, D}\left(\begin{array}{l}
        a_{1}, \ldots, a_{A} ; b_{1}, \ldots, b_{B} ; d_{1}, \ldots, d_{D} ; \\
        r_{1}, \ldots, r_{R} ; s_{1}, \ldots, s_{S} ; u_{1}, \ldots, u_{U} ;
        \end{array}x , y\right)
        =\sum\limits_{m=0}^{\infty} \sum\limits_{n=0}^{\infty} \frac{\prod\limits_{j=1}^{A}\left(a_{j}\right)_{m+n} \prod\limits_{j=1}^{B}\left(b_{j}\right)_{m} \prod\limits_{j=1}^{D}\left(d_{j}\right)_{n}}{\prod\limits_{j=1}^{R}\left(r_{j}\right)_{m+n} \prod\limits_{j=1}^{S}\left(s_{j}\right)_{m} \prod\limits_{j=1}^{U}\left(u_{j}\right)_{n}} \frac{x^{m}}{m !} \frac{y^{n}}{n !}
        \end{array},\\
        \label{TheTrappingTime-Subsection21-Equation15}
    \end{align}
    
    \vspace{0.3cm}
    
    such that (see, for example, equations (9a) and (9b) of \cite{Article:Ancarani2009}),

    \vspace{0.3cm}

    \begin{align}
        \begin{split}
            { }_{2} F_{1}^{(a)}(a, b ; c ; z)=\frac{z b}{c} F_{2,1,0}^{2,2,1}\left(\begin{array}{c}
            a+1, b+1 ; 1, a; 1 ; \\ 
            2, c+1 ; a+1 ; ;
        \end{array} z, z\right),\\ \\
        { }_{2} F_{1}^{(b)}(a, b ; c ; z)=\frac{z a}{c} F_{2,1,0}^{2,2,1}\left(\begin{array}{c}
            a+1, b+1 ; 1, b; 1; \\
            2, c+1 ; b+1 ; ;
        \end{array} z, z\right) \text{ and }\\ \\
       { }_{2} F_{1}^{(c)}(a, b ; c ; z)=-\frac{z a b}{c^{2}} F_{2,1,0}^{2,2,1}\left(\begin{array}{c}
        a+1, b+1 ; 1, c ; 1 ; \\
        2, c+1 ; c+1 ; ;
        \end{array} z, z\right).
        \end{split}
        \label{TheTrappingTime-Subsection21-Equation16}
    \end{align}
    
    \vspace{0.3cm}

\end{remark}

\vspace{0.3cm}

This is not the first time that the Kamp\'e de F\'eriet function appears in ruin theory, as it arises in the study of some risk processes that consider the payment of dividends provided by the insurer (see, for example, \cite{Article:AlbrecherCani2017}).

\vspace{0.3cm}

\begin{corollary}\label{TheTrappingTime-Subsection21-Corollary1}

The expected trapping time under the household capital process defined as in \eqref{TheCapitalofaHousehold-Section2-Equation2} and \eqref{TheCapitalofaHousehold-Section2-Equation3}, with initial capital $x \geq x^{*}$, capital growth rate $r$, intensity $\lambda > 0$ and remaining proportions of capital with distribution $Beta(\alpha, 1)$ where $\alpha >0$; that is, $Z_{i}\sim Beta(\alpha, 1)$ is given by

\vspace{0.3cm}

    \footnotesize
    {\allowdisplaybreaks
    \begin{align}
        %\begin{split}
        \mathbb{E}\left[\tau_{x} ;\tau_{x}<\infty \right] &= \frac{ 1 }{r (\alpha  r-\lambda ) \Gamma \left(\frac{\lambda }{r}\right)^2 \Gamma \left(\alpha -\frac{\lambda }{r}+1\right)^2} \Gamma (\alpha) y(x)^{\frac{\lambda }{r}-\alpha} \\ \\ & \left[ \Gamma(\alpha) { }_{2} F_{1} \left(1 - \frac{\lambda}{r}, \alpha - \frac{\lambda}{r} ; 1 + \alpha - \frac{\lambda}{r} ; y(x)^{-1}\right) \left( (\alpha r + \lambda)  { }_{2} F_{1}^{(c)}\left(\alpha - \frac{\lambda}{r}, 1 - \frac{\lambda}{r} ; 1 + \alpha - \frac{\lambda}{r} ; 1\right) \right. \right. \\ \\ & \left. + \lambda \left({ }_{2} F_{1}^{(a)}\left(\alpha - \frac{\lambda}{r}, 1 - \frac{\lambda}{r} ; 1 + \alpha - \frac{\lambda}{r} ; 1\right) + { }_{2} F_{1}^{(b)}\left(\alpha - \frac{\lambda}{r}, 1 - \frac{\lambda}{r} ; 1 + \alpha - \frac{\lambda}{r} ; 1\right) \right) \right) \\ \\ & \left. + \left(\frac{1}{\lambda}\right) \Gamma\left(\frac{\lambda}{r}\right)\Gamma\left(1 + \alpha - \frac{\lambda}{r}\right) \left({ }_{2} F_{1}\left(1 - \frac{\lambda}{r}, \alpha - \frac{\lambda}{r} ; 1 + \alpha - \frac{\lambda}{r} ; y(x)^{-1}\right) \left(r(\alpha r - \lambda) + \lambda^{2} \ln{\left[y(x)\right]}\right) \right. \right. \\ \\ & - \lambda \left( \left(\alpha r + \lambda \right){ }_{2} F_{1}^{(c)}\left(\alpha - \frac{\lambda}{r}, 1 - \frac{\lambda}{r} ; 1 + \alpha - \frac{\lambda}{r} ; y(x)^{-1}\right) \right. \\ \\ & \left. \left. \left.  + \lambda \left({ }_{2} F_{1}^{(a)}\left(\alpha - \frac{\lambda}{r}, 1 - \frac{\lambda}{r} ; 1 + \alpha - \frac{\lambda}{r} ; y(x)^{-1}\right) + { }_{2} F_{1}^{(b)}\left(\alpha - \frac{\lambda}{r}, 1 - \frac{\lambda}{r} ; 1 + \alpha - \frac{\lambda}{r} ; y(x)^{-1}\right) \right) \right) \right) \right],\\
        %\end{split},
        \label{TheTrappingTime-Subsection21-Equation17}
    \end{align}
    }
    \normalsize
    
    \vspace{0.3cm}

where $y(x)=\frac{x}{x^{*}}$, ${ }_{2} F_{1}\left(\cdot \right)$ is Gauss\rq s Hypergeometric Function as defined in \eqref{TheTrappingTime-Subsection21-Equation7} and ${ }_{2} F_{1}^{(a)}(\cdot)$, ${ }_{2} F_{1}^{(b)}(\cdot)$ and ${ }_{2} F_{1}^{(c)}(\cdot)$ its derivatives w.r.t. the first, second and third parameters, respectively, as introduced in \eqref{TheTrappingTime-Subsection21-Equation16}.

\end{corollary}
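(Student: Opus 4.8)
The plan is to begin from the probabilistic identity \eqref{TheTrappingTime-Subsection21-Equation13} and then reduce the problem to a mechanical differentiation of the closed form \eqref{TheTrappingTime-Subsection21-Equation2}. Taking $w(x_{1},x_{2})=1$ in \eqref{WhenandHowHouseholdsBecomePoor?-Section2-Equation2}, we have $m_{\delta}(x)=\mathbb{E}\!\left[e^{-\delta\tau_{x}}\mathbbm{1}_{\{\tau_{x}<\infty\}}\right]$. I would justify interchanging differentiation and expectation by dominated convergence, since $\tau_{x}e^{-\delta\tau_{x}}\mathbbm{1}_{\{\tau_{x}<\infty\}}$ admits an integrable envelope for $\delta$ in a neighbourhood of any $\delta_{0}>0$, so that $-\tfrac{d}{d\delta}m_{\delta}(x)=\mathbb{E}\!\left[\tau_{x}e^{-\delta\tau_{x}}\mathbbm{1}_{\{\tau_{x}<\infty\}}\right]$; letting $\delta\downarrow 0$ by monotone convergence then yields \eqref{TheTrappingTime-Subsection21-Equation13}. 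It therefore suffices to differentiate the explicit formula of Proposition \ref{TheTrappingTime-Subsection21-Proposition1} with respect to $\delta$ and evaluate at $\delta=0$.

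The next step is to record the $\delta$-dependence of the parameters. Writing $D(\delta)=\sqrt{(\delta+\lambda-\alpha r)^{2}+4r\alpha\delta}$, we have $a=\tfrac{-(\delta+\lambda-\alpha r)-D}{2r}$ and $b=\tfrac{-(\delta+\lambda-\alpha r)+D}{2r}$, while $c=\alpha$ is constant. For $\alpha>\tfrac{\lambda}{r}$ one has $D(0)=\alpha r-\lambda$, whence $a|_{\delta=0}=0$, $b|_{\delta=0}=\alpha-\tfrac{\lambda}{r}$ and $c=\alpha$; the three slots $b$, $b-c+1$, $b-a+1$ then collapse to $\alpha-\tfrac{\lambda}{r}$, $1-\tfrac{\lambda}{r}$, $1+\alpha-\tfrac{\lambda}{r}$, exactly the arguments in \eqref{TheTrappingTime-Subsection21-Equation17}. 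A short computation gives $D'(0)=\tfrac{\alpha r+\lambda}{\alpha r-\lambda}$, hence $\tfrac{da}{d\delta}\big|_{0}=-\tfrac{\alpha}{\alpha r-\lambda}$ and $\tfrac{db}{d\delta}\big|_{0}=\tfrac{\lambda}{r(\alpha r-\lambda)}$; these are the only parameter derivatives needed.

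I would then differentiate \eqref{TheTrappingTime-Subsection21-Equation2}, regarded as a product of $\tfrac{\lambda}{\lambda+\delta}$, $y(x)^{-b}$, the numerator ${}_{2}F_{1}\!\left(b,b-c+1;b-a+1;y(x)^{-1}\right)$, and the reciprocal of ${}_{2}F_{1}\!\left(b,b-c+1;b-a+1;1\right)$, by the product and quotient rules. The prefactor contributes $-\tfrac{1}{\lambda}$ at $\delta=0$; the factor $y(x)^{-b}$ contributes $-\ln[y(x)]\,\tfrac{db}{d\delta}$, which is the source of the $\lambda^{2}\ln[y(x)]$ term in \eqref{TheTrappingTime-Subsection21-Equation17}; and each hypergeometric factor is differentiated through its three slots as $\tfrac{db}{d\delta}\,{}_{2}F_{1}^{(a)}+\tfrac{db}{d\delta}\,{}_{2}F_{1}^{(b)}+\big(\tfrac{db}{d\delta}-\tfrac{da}{d\delta}\big)\,{}_{2}F_{1}^{(c)}$, with ${}_{2}F_{1}^{(a)},{}_{2}F_{1}^{(b)},{}_{2}F_{1}^{(c)}$ supplied by \eqref{TheTrappingTime-Subsection21-Equation16}. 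Here the coefficient $\tfrac{db}{d\delta}-\tfrac{da}{d\delta}=\tfrac{\alpha r+\lambda}{r(\alpha r-\lambda)}$ explains the $(\alpha r+\lambda)$ weighting of the ${}_{2}F_{1}^{(c)}$ terms against the $\lambda$ weighting of the ${}_{2}F_{1}^{(a)}$ and ${}_{2}F_{1}^{(b)}$ terms in \eqref{TheTrappingTime-Subsection21-Equation17}. Every unit-argument factor ${}_{2}F_{1}(\cdots;1)$ is then evaluated by Gauss's summation \eqref{TheTrappingTime-Subsection21-Equation11}, which at $\delta=0$ returns $\tfrac{\Gamma(1+\alpha-\lambda/r)\Gamma(\lambda/r)}{\Gamma(\alpha)}$ (the denominator already appearing in \eqref{TheTrappingTime-Subsection21-Equation12}), whereupon substituting the derivative values produces the Gamma-function prefactors of \eqref{TheTrappingTime-Subsection21-Equation17}.

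The main obstacle is organisational rather than conceptual: the quotient rule produces a numerator assembled from the three slot derivatives of both the $y(x)^{-1}$-argument and the unit-argument hypergeometric functions, and collecting these Kamp\'e de F\'eriet terms — together with the $\ln[y(x)]$ contribution and the $\Gamma$-factors from \eqref{TheTrappingTime-Subsection21-Equation11} — into the single display \eqref{TheTrappingTime-Subsection21-Equation17} is a lengthy bookkeeping exercise. Throughout, the hypothesis $\alpha>\tfrac{\lambda}{r}$ is essential: it fixes the sign $D(0)=\alpha r-\lambda$ and the value $b|_{\delta=0}=\alpha-\tfrac{\lambda}{r}>0$, and it is consistent with the convergence requirement of \eqref{TheTrappingTime-Subsection21-Equation11}, which here reads $c-a-b=\tfrac{\lambda+\delta}{r}>0$.
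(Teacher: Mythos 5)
Your proposal is correct and follows essentially the same route as the paper, whose proof is simply to compute \eqref{TheTrappingTime-Subsection21-Equation13} from the closed form \eqref{TheTrappingTime-Subsection21-Equation2} using the slot derivatives \eqref{TheTrappingTime-Subsection21-Equation16}; your parameter values $a|_{\delta=0}=0$, $b|_{\delta=0}=\alpha-\lambda/r$, the derivatives $\tfrac{da}{d\delta}|_{0}=-\tfrac{\alpha}{\alpha r-\lambda}$, $\tfrac{db}{d\delta}|_{0}=\tfrac{\lambda}{r(\alpha r-\lambda)}$, and the resulting $\lambda$ versus $(\alpha r+\lambda)$ weightings all check out against \eqref{TheTrappingTime-Subsection21-Equation17}. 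The only additions beyond the paper's (one-line) argument are the dominated/monotone convergence justification of \eqref{TheTrappingTime-Subsection21-Equation13} and the explicit flagging of the implicit hypothesis $\alpha>\lambda/r$, both of which are sound refinements rather than departures.
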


\begin{proof}
    
    Calculating \eqref{TheTrappingTime-Subsection21-Equation13} and using \eqref{TheTrappingTime-Subsection21-Equation16}, one can derive the expected trapping time \eqref{TheTrappingTime-Subsection21-Equation17}.
        
\end{proof}

    \vspace{0.3cm}
    
    % Plot generated with the Mathematica code: ExpectedTrappingTime.nb
    % We could generate other plots if needed.
    \begin{figure}[H]
        \centering
        \includegraphics[width=8cm, height=8cm]{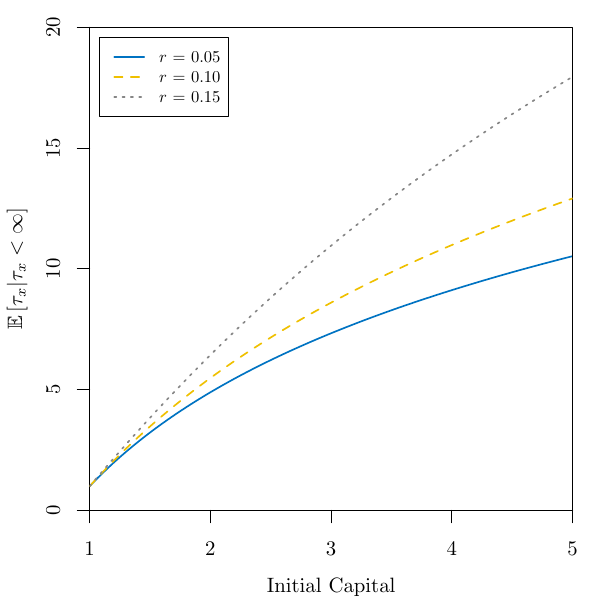}
 	    \caption{Expected trapping time given that trapping occurs $\mathbb{E}\left[\tau_{x}|\tau_{x}<\infty\right]$ when $Z_{i} \sim Beta(1.5, 1)$, $\lambda = 1$ and $x^{*} = 1$ for $r = 0.05,0.10,0.15$.}
	    \label{TheTrappingTime-Subsection21-Figure2}
    \end{figure}

\vspace{0.3cm}

Moreover, we can calculate the expected trapping time given that trapping occurs by taking the following ratio (see for example, equation (4.37) of \cite{Article:Gerber1998}),
    
    \vspace{0.3cm}
    
    \begin{align}
    \begin{split}
        \mathbb{E}\left[\tau_{x}|\tau_{x}<\infty\right]= \frac{\mathbb{E}\left[\tau_{x};\tau_{x}<\infty\right]}{\psi(x)}.
        \end{split}
        \label{TheTrappingTime-Subsection21-Equation18}
    \end{align}
    
    \vspace{0.3cm}  
    
A number of expected trapping times (given that trapping occurs) for varying values of the capital growth rate $r$ are displayed in Figure \ref{TheTrappingTime-Subsection21-Figure2}. One observes that the expected trapping time is an increasing function of both the capital growth rate $r$ and initial capital $x$.

\subsection{The Capital Deficit at Trapping} \label{TheCapitalDeficitatTrapping-Subsection22}

The capital deficit at trapping is the absolute value of the difference between a household\rq s level of capital at the trapping time and the critical capital, i.e. the amount ${\mid X_{\tau_{x}}-x^{*}\mid}$. Specifying the penalty function such that for any fixed $y$, $w(x_{1}, x_{2})=\mathbbm{1}_{\{x_{2} \leq y\}}$, \eqref{WhenandHowHouseholdsBecomePoor?-Section2-Equation2} becomes the distribution function of the capital deficit at the trapping time discounted at a force of interest $\delta \geq 0$. This choice leads to the following proposition

\vspace{0.3cm}

\begin{proposition}\label{TheCapitalDeficitatTrapping-Subsection22-Proposition1}
Consider a household capital process defined as in \eqref{TheCapitalofaHousehold-Section2-Equation2} and \eqref{TheCapitalofaHousehold-Section2-Equation3}, with initial capital $x\ge x^{*}$, capital growth rate $r$, intensity $\lambda > 0$ and remaining proportions of capital with distribution $Beta(\alpha, 1)$ where $\alpha >0$; that is, $Z_{i}\sim Beta(\alpha, 1)$. The distribution function of the discounted capital deficit at the trapping time is given by

\vspace{0.3cm}

\begin{align}
     F_{\delta}(y;\tau_{x}<\infty|x) &:=  \mathbb{E}\left[\mathbbm{1}_{\{\mid X_{\tau_{{\scaleto{x}{1.5pt}}}}-x^{*}\mid< y\}} e^{-\delta \tau_{x}} \mathbbm{1}_{\{\tau_{x} < \infty\}}\right]\\ \\&= m_{\delta}(x) \cdot \left[1 - \left(1 - \frac{y}{x^{*}}\right)^{\alpha} \right] \qquad \textit { for } \qquad 0 \leq y \leq x^{*},
    \label{TheCapitalDeficitatTrapping-Subsection22-Equation1}
 \end{align}

\vspace{0.3cm}

where $m_{\delta}(x)$ is the Laplace transform of the trapping time given by \eqref{TheTrappingTime-Subsection21-Equation2} and $\delta \ge 0$ is the force of interest for valuation.

\end{proposition}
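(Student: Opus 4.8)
The plan is to invoke Theorem \ref{WhenandHowHouseholdsBecomePoor?-Section2-Theorem1} with the penalty function $w(x_{1}, x_{2}) = \mathbbm{1}_{\{x_{2} \leq y\}}$ and then exploit the linearity of the resulting IDE to reduce the problem to the Laplace-transform case already solved in Proposition \ref{TheTrappingTime-Subsection21-Proposition1}. The first step is to substitute this choice of $w$ into the definition $A(x) = \int_{0}^{x^{*}/x} w(x - x^{*}, x^{*} - x\cdot z)\, dG_{Z}(z)$. Since the capital deficit produced by a jump from level $x$ to level $x\cdot z$ is $x^{*} - xz$, the indicator $\mathbbm{1}_{\{x^{*} - xz \leq y\}}$ is active exactly when $z \geq (x^{*} - y)/x$; intersecting with the trapping range $z \in [0, x^{*}/x]$ and integrating against $dG_{Z}(z) = \alpha z^{\alpha - 1}\, dz$ gives
\begin{align}
A(x) = \int_{(x^{*}-y)/x}^{x^{*}/x} \alpha z^{\alpha-1}\, dz = \left(\frac{x^{*}}{x}\right)^{\alpha} \left[1 - \left(1 - \frac{y}{x^{*}}\right)^{\alpha}\right].
\end{align}

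The key observation is that, for fixed $y$, this $A(x)$ equals the function $(x^{*}/x)^{\alpha}$ appearing on the right-hand side of the Laplace-transform IDE \eqref{TheTrappingTime-Subsection21-Equation3}, multiplied by the $x$-independent constant $K_{y} := 1 - (1 - y/x^{*})^{\alpha}$. Because the IDE \eqref{WhenandHowHouseholdsBecomePoor?-Section2-Equation3} is linear in the unknown, I would verify directly that the candidate $F_{\delta}(y;\tau_{x}<\infty\mid x) = K_{y}\, m_{\delta}(x)$ is a solution: substituting this candidate factors $K_{y}$ out of every term on the left-hand side, leaving $K_{y}$ times the left-hand side of \eqref{TheTrappingTime-Subsection21-Equation3}, which is exactly $-\lambda K_{y}(x^{*}/x)^{\alpha} = -\lambda A(x)$, as required. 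The fact that $K_{y}$ passes cleanly through the integral term $\lambda \int_{x^{*}/x}^{1} F_{\delta}(xz)\, dG_{Z}(z)$ as well as through the differential terms is precisely what makes this reduction work.

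It then remains to confirm the two boundary conditions \eqref{WhenandHowHouseholdsBecomePoor?-Section2-Equation4}. At $x = x^{*}$ one has $A(x^{*}) = K_{y}$, so the prescribed value is $\frac{\lambda}{\delta + \lambda} K_{y}$; this matches $K_{y}\, m_{\delta}(x^{*}) = K_{y} \cdot \frac{\lambda}{\delta + \lambda}$, using that the Laplace-transform solution satisfies $m_{\delta}(x^{*}) = \frac{\lambda}{\delta + \lambda}$ (its own $A(x^{*}) = 1$). The condition at infinity is immediate, since $\lim_{x\to\infty} K_{y}\, m_{\delta}(x) = 0$ follows from $\lim_{x\to\infty} m_{\delta}(x) = 0$. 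Appealing to the uniqueness of the solution of this boundary-value problem—the same uniqueness used to pin down $m_{\delta}(x)$ in Proposition \ref{TheTrappingTime-Subsection21-Proposition1}—I conclude that $F_{\delta}(y;\tau_{x}<\infty\mid x) = K_{y}\, m_{\delta}(x)$, which is precisely \eqref{TheCapitalDeficitatTrapping-Subsection22-Equation1}.

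I do not anticipate a serious obstacle, as the whole argument rests on the single computation of $A(x)$ together with the observation that its dependence on $x$ is identical to that of the Laplace-transform case. The only point deserving mild care is recognising that the deficit penalty produces a right-hand side that is a scalar multiple of the trapping-time right-hand side; once this is noted, the solution is read off by linearity without solving any new differential equation.
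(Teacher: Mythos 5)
Your proof is correct, and it reaches the result by a somewhat different route than the paper. The two arguments coincide up through the key computation: the paper records $A(x) = y(x)^{-\alpha} - \left(\frac{x^{*}-y}{x}\right)^{\alpha}$ for $y(x) = \frac{x}{x^{*}}$, which is exactly your $\left(\frac{x^{*}}{x}\right)^{\alpha}\left[1-\left(1-\frac{y}{x^{*}}\right)^{\alpha}\right]$ written differently. From there, however, the paper simply says one \lq\lq follows a similar procedure to that of Proposition \ref{TheTrappingTime-Subsection21-Proposition1}\rq\rq, i.e.\ it re-traces the constructive solution: differentiate the IDE, reduce to the same hypergeometric ODE (the inhomogeneous term cancels in the elimination precisely because $A(x)$ is proportional to $(x^{*}/x)^{\alpha}$), kill the solution branch that violates the condition at infinity, and fit the remaining constant at $x = x^{*}$. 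You instead shortcut the ODE work entirely: since the new right-hand side is the $x$-independent constant $K_{y}$ times the right-hand side of the Laplace-transform IDE \eqref{TheTrappingTime-Subsection21-Equation3}, linearity shows $K_{y}\,m_{\delta}(x)$ solves the new IDE, it manifestly satisfies both boundary conditions in \eqref{WhenandHowHouseholdsBecomePoor?-Section2-Equation4}, and uniqueness finishes the argument. This is cleaner and arguably more illuminating, but note that the uniqueness you invoke is only implicit in the paper: it follows because any solution of the IDE satisfies the second-order ODE \eqref{TheTrappingTime-Subsection21-Equation4}, whose solution space is two-dimensional, so the two boundary conditions determine the solution completely; if you want the linearity argument to be self-contained, that chain of reasoning should be spelled out rather than cited as \lq\lq the same uniqueness.\rq\rq\ Either way, the essential content of both proofs is identical: the computation of $A(x)$ and the recognition that it is a scalar multiple of the trapping-time case.
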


\begin{proof}
The choice $w(x_{1}, x_{2})=\mathbbm{1}_{\{x_{2} \leq y\}}$ yields a modified version of the IDE \eqref{WhenandHowHouseholdsBecomePoor?-Section2-Equation3}, with \\ $A(x)=  y\left(x\right)^{-\alpha} -  \left(\frac{x^{*}-y}{x}\right)^{\alpha}$ for $y(x)=\frac{x}{x^{*}}$. Following a similar procedure to that of Proposition \ref{TheTrappingTime-Subsection21-Proposition1} leads to \eqref{TheCapitalDeficitatTrapping-Subsection22-Equation1}.
\end{proof}

\begin{remark}
One can easily obtain $f_{\delta}(y;\tau_{x}<\infty|x)$, the p.d.f. of the discounted capital deficit at the trapping time, by differentiating $F_{\delta}(y;\tau_{x}<\infty|x)$ w.r.t. $y$. That is,

 \vspace{0.3cm}

 %\small
 \begin{align}
     \begin{split}
         f_{\delta}(y;\tau_{x}<\infty|x) := \frac{d}{dy}F_{\delta}(y;\tau_{x}<\infty|x) = m_{\delta}(x) \cdot \frac{\alpha}{x^{*}} \left(1 - \frac{y}{x^{*}}\right)^{\alpha - 1}  \qquad \textit { for } \qquad 0 < y < x^{*},
         \label{TheCapitalDeficitatTrapping-Subsection22-Equation2}
     \end{split}
 \end{align}
 \normalsize

\vspace{0.3cm}
 
where $m_{\delta}(x)$ is the Laplace transform of the trapping time given by \eqref{TheTrappingTime-Subsection21-Equation2} and $\delta \ge 0$ is the force of interest for valuation.
 
\end{remark}

\vspace{0.3cm}

\begin{remark}
Note that, setting $\delta = 0$ yields $F(y;\tau_{x}<\infty|x)$, the distribution of the capital deficit at trapping. Furthermore, we can calculate the distribution of the capital deficit at trapping given that trapping has occurred. This is given by

 \vspace{0.3cm}

 \begin{align}
     \begin{split}
       F(y|x, \tau_{x}<\infty) := \frac{F(y;\tau_{x}<\infty|x)}{\psi(x)} = 1 - \left(1 - \frac{y}{x^{*}}\right)^{\alpha} \qquad \textit { for } \qquad 0 \leq y \leq x^{*}.
    \label{TheCapitalDeficitatTrapping-Subsection22-Equation3}
     \end{split}
 \end{align}

 \vspace{0.3cm}

Moreover, differentiating $F(y|x, \tau_{x}<\infty)$ w.r.t. $y$ leads to the p.d.f. of the capital deficit at trapping given that trapping has occurred,

 \vspace{0.3cm}

 \begin{align}
     \begin{split}
       f(y|x,\tau_{x}<\infty) := \frac{d}{dy}F(y|x,\tau_{x}<\infty) =\frac{\alpha}{x^{*}} \left(1 - \frac{y}{x^{*}}\right)^{\alpha - 1}  \qquad \textit { for } \qquad 0 < y < x^{*}.
       \label{TheCapitalDeficitatTrapping-Subsection22-Equation4}
     \end{split}
 \end{align}

 \vspace{0.3cm}

\end{remark}

\begin{figure}[H]
	\begin{subfigure}[b]{0.5\linewidth}
	   % Plot generated with the R code: UninsuredTrappingTimeLaplaceTransform.R
       % We could generate other plots if needed.
  		\includegraphics[width=8cm, height=8cm]{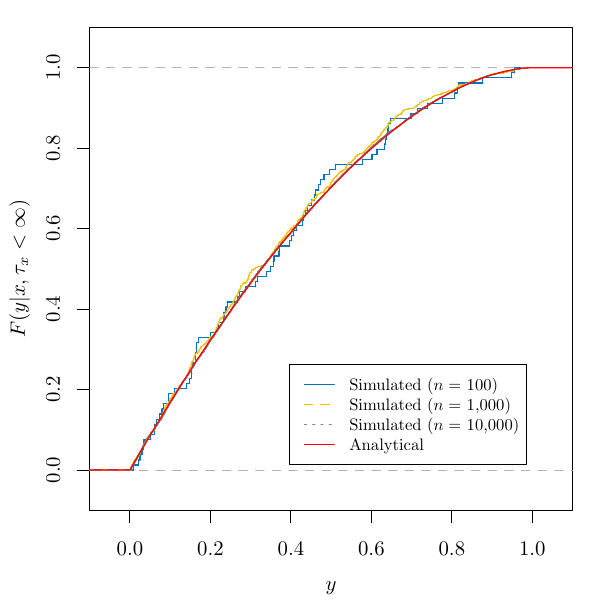}
		\caption{}
  		\label{TheCapitalDeficitatTrapping-Subsection22-Figure1-a}
	\end{subfigure}
	\begin{subfigure}[b]{0.5\linewidth}
	        % Plot generated with the R code: UninsuredTrappingProbability.R
        % We could generate other plots if needed. Maybe be could look for a seed to fix the obtained results.
  		\includegraphics[width=8cm, height=8cm]{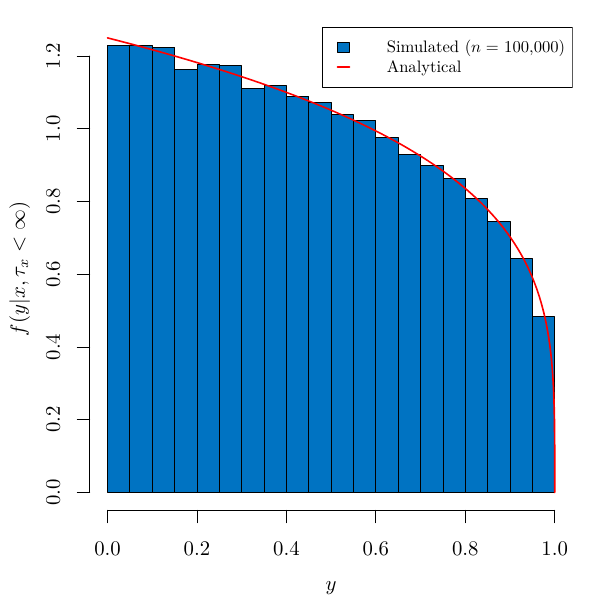}
		\caption{}
  		\label{TheCapitalDeficitatTrapping-Subsection22-Figure1-b}
	\end{subfigure}
	\caption{(a) $F(y|x, \tau_{x}<\infty)$ when $Z_{i} \sim Beta(1.75, 1)$, $r = 1.08$, $\lambda = 1$, $x = 1.25$, $x^{*} = 1$ (b) $f(y|x, \tau_{x}<\infty)$ when $Z_{i} \sim Beta(1.25, 1)$, $r = 1.08$, $\lambda = 1$, $x = 1.25$, $x^{*} = 1$.}
	\label{TheCapitalDeficitatTrapping-Subsection22-Figure1}
\end{figure}

Figure \ref{TheCapitalDeficitatTrapping-Subsection22-Figure1} compares both, analytical and simulated, distribution and p.d.f. of the capital deficit at trapping given that trapping occurs. Simulated quantities were generated using the Euler-Maruyama method, a well-known technique mainly used to approximate numerical solutions of Stochastic Differential Equations (SDEs) (see, for example, \citep{Book:Kloeden1995}). Not surprisingly, Figure \ref{TheCapitalDeficitatTrapping-Subsection22-Figure1-a} clearly shows that the simulated quantities converge to the theoretical distribution \eqref{TheCapitalDeficitatTrapping-Subsection22-Equation3} as the number of simulations $n$ increases, while Figure \ref{TheCapitalDeficitatTrapping-Subsection22-Figure1-b} displays how the theoretical p.d.f. given by \eqref{TheCapitalDeficitatTrapping-Subsection22-Equation4} perfectly fits the simulated observations. 

By comparing \eqref{TheGeneralisedBetaDistributionFamily-Section3-Equation2} with \eqref{TheCapitalDeficitatTrapping-Subsection22-Equation4} one concludes that the capital deficit at trapping given that trapping occurs follows the beta distribution of the first kind (B1). Indeed, if we denote the random variable $Y := \mid X_{\tau_{x}}-x^{*}\mid \Big| \ \tau_{x}<\infty$, we have that $Y \sim B1(y;b = x^{*},p=1,q=\alpha)$. Similarly, one can write $Y \stackrel{d}{=} x^{*} \cdot (1-Z_{i})$, where $\stackrel{d}{=}$ denotes equality in distribution.

\section{A Class of Poverty Measures and its Connection with the Capital Deficit at Trapping} \label{AClassofPovertyMeasures-Section3}

Poverty measures serve as the main tool for the evaluation of anti-poverty policies (e.g. cash transfer programmes) and poverty itself. Since \cite{Article:Sen1976}, following his axiomatic approach, researchers have formulated numerous poverty measures over the years. The Foster-Greer-Thorbecke (FGT) index \citep{Article:Foster1984} is undoubtedly one of the most important of these poverty measures and has been widely applied in empirical works. In fact, the FGT index has become the standard measure for international poverty assessments and is regularly reported on by individual countries and international organisations such as the World Bank (for a detailed review of the contributions of the FGT index over the 25 years since its publication, see \cite{Article:Foster2010}). The FGT index emerged as an alternative to the \lq \lq rank weighting\rq \rq \ approach, which was originally applied in the \lq \lq Sen measure\rq \rq  \ (see Theorem 1 from \cite{Article:Sen1976}), and accounts for the normalised gap and the rank order of a person in the group of the poor. The FGT index contemplates instead a \lq \lq short-fall weighting\rq \rq \ method, which considers the income short-fall expressed as a share of the poverty line. 

Let $F_{X}(x)$ be the distribution function of the income variable $X$ from a population with continuous p.d.f. $f_{X}(x)$ at a given point $x$. The FGT class of poverty measures indexed by $\gamma \geq 0$ is defined as follows

\vspace{0.3cm}

\begin{align}
	FGT_{\gamma}=\int^{x^{*}}_{0}\left(\frac{x^{*}-x}{x^{*}}\right)^{\gamma}f_{X}(x)\,dx,
	\label{AClassofPovertyMeasuresanditsConnectionwiththeCapitalDeficitatTrapping-Section3-Equation1}
\end{align}

\vspace{0.3cm}

where $x^{*}$ is the poverty line. Particular cases of the FGT class of poverty measures include $FGT_{0}$, which is simply the head-count index and as mentioned in Section \ref{Introduction-Section1}, calculates the proportion of households living below the poverty line. Another common measure is $FGT_{1}$, a normalisation of the income-gap ratio originally introduced by \cite{Article:Sen1976}. This poverty measure is commonly referred to as the poverty gap index. In contrast, the poverty severity index, $FGT_{2}$, is a weighted sum of income short-falls (as a proportion of the poverty line), where the weights are the proportionate income short-falls themselves. Note that, a larger $\gamma$ in \eqref{AClassofPovertyMeasuresanditsConnectionwiththeCapitalDeficitatTrapping-Section3-Equation1} gives greater emphasis to the poorest poor. Hence, this parameter is viewed as a measure of poverty aversion \citep{Article:Foster1984}. 

From \eqref{AClassofPovertyMeasuresanditsConnectionwiththeCapitalDeficitatTrapping-Section3-Equation1}, one can write 

\vspace{0.3cm}

\begin{align}
	FGT_{\gamma}=\frac{H(x^{*})}{{x^{*}}^{\gamma}}\cdot \mathbbm{E}\left[D(x^{*}, x)| x<x^{*}\right],
	\label{AClassofPovertyMeasuresanditsConnectionwiththeCapitalDeficitatTrapping-Section3-Equation2}
\end{align}

\vspace{0.3cm}

where $H(x^{*}) := F_{X}(x^{*})$ is the head-count index and $D(x^{*},x)=\left(x^{*}-x\right)^{\gamma}$ is a function that describes the level of deprivation suffered by an individual whose income $x$ is less than the poverty line $x^{*}$. Clearly, $D(x^{*},x)$ is in terms of an individual\rq s income short-fall $Y:=x^{*}-x$. 

We now consider a household\rq s capital process as defined in Section \ref{Introduction-Section1}. Under this model, a household\rq s income is generated through capital: $I_{t}=bX_{t}$, where $b>0$ holds (see equation (4) in \cite{Article:Kovacevic2011}). Taking $b = 1$ leads to the case for which a household\rq s income is equal to its capital. Thus, the results obtained in Section \ref{WhenandHowHouseholdsBecomePoor?-Section2} also apply to a household\rq s income. On this basis, from Section \ref{TheCapitalDeficitatTrapping-Subsection22} yields that $Y\sim B1(y;b=x^{*},p=1,q=\alpha)$, where the random variable $Y$ denotes the income short-fall (or income deficit) at trapping given that trapping occurs. In this case, the $FGT_{\gamma}$ index is given in terms of the $\gamma$th moment of $Y$,

\vspace{0.3cm}

\begin{align}
	FGT_{\gamma}=\frac{H(x^{*})}{{x^{*}}^{\gamma}} \cdot \mathbbm{E}\left[Y^{\gamma}\right]=H(x^{*})\cdot \left[\frac{B\left(1+\alpha,\gamma\right)}{B\left(1,\gamma\right)}\right],
	\label{AClassofPovertyMeasuresanditsConnectionwiththeCapitalDeficitatTrapping-Section3-Equation3}
\end{align}

\vspace{0.3cm}

where we used the fact that the $h$th moment of a random variable $W\sim B1(w;b,p,q)$ is given by

\vspace{0.3cm}

\begin{align}
	\mathbbm{E}\left[W^{h}\right]=\frac{b^{h}B(p+q,h)}{B(p,h)},
	\label{AClassofPovertyMeasuresanditsConnectionwiththeCapitalDeficitatTrapping-Section3-Equation4}
\end{align}

\vspace{0.3cm}

(see, for instance, Table 1 from \cite{Article:McDonald1984}).

\vspace{0.3cm}

\begin{remark}

One can also compute the $h$th moment of the capital deficit at trapping given that trapping occurs by means of the Gerber-Shiu expected discounted penalty function. Indeed, choosing $w(x_{1}, x_{2})=x_{2}^{h}$ yields a modified version of the IDE \eqref{WhenandHowHouseholdsBecomePoor?-Section2-Equation3}, with $A(x)= \alpha \cdot {x^{*}}^{h} \cdot B(\alpha, h+1) \cdot y\left(x\right)^{-\alpha}$ for $y(x)=\frac{x}{x^{*}}$. Thus, solving \eqref{WhenandHowHouseholdsBecomePoor?-Section2-Equation3} as in Proposition \ref{TheTrappingTime-Subsection21-Proposition1} yields to the $h$th moment of the discounted capital deficit at trapping,

\vspace{0.3cm}

\begin{align}
	\mathbbm{E}\left[\mid X_{\tau_{x}}-x^{*}\mid ^{h} e^{-\delta \tau_{x}} ; \tau_{x} < \infty \right]  = \alpha \cdot  {x^{*}}^{h} \cdot  		B\left(\alpha, h+1\right) \cdot m_{\delta}(x).
	\label{AClassofPovertyMeasuresanditsConnectionwiththeCapitalDeficitatTrapping-Section3-Equation5}
\end{align}

\vspace{0.3cm}

Setting $\delta = 0$ yields $\mathbbm{E}\left[\mid X_{\tau_{x}}-x^{*}\mid ^{h} ; \tau_{x} < \infty \right]$, the $h$th moment of the capital deficit at trapping. Consequently, the $h$th moment of the capital deficit at trapping given that trapping occurs is given by

\vspace{0.3cm}

\begin{align}
	\mathbbm{E}\left[Y^{h}\right]  := \frac{\mathbbm{E}\left[\mid X_{\tau_{x}}-x^{*}\mid ^{h} ; \tau_{x} < \infty \right]}{\psi(x)} = \alpha \cdot  {x^{*}}^{h} \cdot  B\left(\alpha, h+1\right) =\frac{\alpha \cdot x^{*h}\cdot h \cdot B(\alpha,h)}{h+\alpha},
	\label{AClassofPovertyMeasuresanditsConnectionwiththeCapitalDeficitatTrapping-Section3-Equation6}
\end{align}

\vspace{0.3cm}

where we applied the property $B(p,q+1)=\frac{q \cdot B(p,q)}{p+q}$ of the beta function. Clearly, \eqref{AClassofPovertyMeasuresanditsConnectionwiththeCapitalDeficitatTrapping-Section3-Equation6} is equal to \eqref{AClassofPovertyMeasuresanditsConnectionwiththeCapitalDeficitatTrapping-Section3-Equation4} for the case $b=x^{*}$, $p=1$ and $q=\alpha$.

\end{remark}

\section{An Application to Burkina Faso’s Household Microdata}\label{AnApplicationtoBurkinaFasosHouseholdMicrodata-Section4}

\subsection{Context and Data}  \label{ContextandData-Subsection41}

Burkina Faso is located in West Africa with an area of $274,200$ $\text{km}^{2}$. In 2021, the population was estimated at just over $20.3$ million, with the capital Ouagadougou being the country\rq s largest city. Historically, its economy has been largely based on agriculture, which provides a living for more than $80\%$ of the population. Burkina Faso\rq s main subsistence crops are sorghum, millet, maize and rice, while the country has been one of Africa\rq s leading producers of cotton and gold \citep{Article:Brugger2020,Article:Engels2023}. 

The country\rq s climate is characterised by a dry tropical climate that alternates a short rainy season with a long dry season. Due to its geographical location, bordering the Sahara Desert, Burkina Faso's climate is subject to seasonal and annual variations. Furthermore, the country is divided into three different climatic zones, the Sahelian zone in the north, the North-Sudanian zone in the centre and the South-Sudanian zone in the south, which receive an average annual rainfall of less than $600$ mm, between $600$ and $900$ mm and more than $900$ mm, respectively \citep{Article:AlvarBeltran2020}.

Household microdata from Burkina Faso\rq s Continuous Multisector Survey (\textit{Enquête Multisectorielle Continue (EMC)) 2014\footnote{For a detailed overview of the survey, interested readers may wish to consult the survey\rq s official report (in French): \cite{Book:INSD2015}.}} is used to evaluate the adequacy of the $B1(y;b=x^{*},p=1,q=\alpha)$ model \eqref{TheGeneralisedBetaDistributionFamily-Section3-Equation2} to describe income short-fall distribution. The survey was conducted from 17 January 2014 to 24 November 2014 by the National Institute of Statistics and Demography (\textit{Institut National de la Statistique et de la Démographie} (INSD)). The EMC had as main objective the generation of sound data for poverty analyses. A total of $10,411$ households were interviewed, with a $96.4\%$ of interviews accepted. 

The main variable of interest generated in the survey is consumption, which in the EMC is given in units of the West African CFA (\textit{Communauté financière en Afrique}) franc per person per day in average prices in Ouagadougou during the EMC field work. To identify the poor, a minimum food basket of around thirty products was defined. Determining the cost of this food basket and other basic needs, the absolute poverty line was estimated at $153,530$ CFA. A person is poor if he/she lives in a poor household and a household is poor if the annual per capita consumption is below the absolute poverty line which is equivalent to $421$ CFA per capita consumption per day.

\subsection{Estimators for the $\alpha$ Parameter of the B1 Model}\label{EstimatorsforParametersoftheB1Model-Subsection42}

In this article, we use the maximum likelihood (ML) method and the method-of-moments (MoM) to estimate the parameter $\alpha$ of the $B1(y;b=x^{*},p=1,q=\alpha)$ model \eqref{TheGeneralisedBetaDistributionFamily-Section3-Equation2}. Assume that $y_{1}, y_{2}, ...,y_{n}$ is a random sample of income short-fall of size $n$.  Letting $M_{k}=\frac{1}{n}\sum^{n}_{i=1}y^{k}_{i}$ denote the $k$th sample moment yields to the maximum likelihood estimator (MLE) and the method-of-moments estimator (MME) for $\alpha$, given by

\vspace{0.3cm}

\begin{align}
	\hat{\alpha}_{{\scaleto{MLE}{3pt}}} = \frac{n}{n \log{\left(x^{*}\right)}-\sum\limits_{i=1}^{n}\log{\left(x^{*}-y_{i}\right)}} \qquad \text{and} \qquad \hat{\alpha}_{{\scaleto{MME}{3pt}}} = \frac{x^{*}-M_{1}}{M_{1}},
	\label{EstimatorsforParametersoftheB1Model-Subsection42-Equation1}
\end{align}

\vspace{0.3cm}

respectively. We derive $\hat{\alpha}_{{\scaleto{MLE}{3pt}}}$ by maximising the log-likelihood function

\begin{align}
\ell\left(\alpha\right)=\ell\left(\alpha|\mathbf{y}\right):=\log L\left(\alpha|\mathbf{y}\right)=n\cdot \left[\log\left(\alpha\right) - \alpha \cdot \log\left(x^{*}\right)\right] + \left(\alpha - 1\right)\cdot\sum\limits_{i=1}^{n}\log\left(x^{*}-y_{i}\right),\label{EstimatorsforParametersoftheB1Model-Subsection42-Equation2}
\end{align}

where $L\left(\alpha|\mathbf{y}\right) = \prod_{i=1}^{n}f_{Y}\left(y_{i};b=x^{*},p=1,q=\alpha\right)=\left(\frac{\alpha}{x^{*}}\right)^{n}\prod_{i=1}^{n}\left(x^{*}-y_{i}\right)^{\alpha-1}$ is the likelihood function. Thus, we differentiate \eqref{EstimatorsforParametersoftheB1Model-Subsection42-Equation2} w.r.t $\alpha$ and equate it to zero. We then solve for the parameter $\alpha$ to obtain $\hat{\alpha}_{{\scaleto{MLE}{3pt}}}$. On the other hand, $\hat{\alpha}_{{\scaleto{MME}{3pt}}}$ is derived by equating the first sample moment ($M_{1}$) with the theoretical first moment (equation \eqref{AClassofPovertyMeasuresanditsConnectionwiththeCapitalDeficitatTrapping-Section3-Equation6} for $h=1$) and by subsequently solving for the parameter $\alpha$. Tables \ref{ResultsandDiscussion-Subsection44-Table1}, \ref{ResultsandDiscussion-Subsection44-Table2} and \ref{ResultsandDiscussion-Subsection44-Table3} show the MLEs and MMEs for $\alpha$ at the national level, by area of residence and by region, respectively. In addition, the maps in Figure \ref{ResultsandDiscussion-Subsection44-Figure1} display these estimates by region, giving a comprehensive geographical overview of the parameters. These results will be discussed more in detail in Section \ref{ResultsandDiscussion-Subsection44}.

\subsection{Evaluating the Goodness-of-Fit of the B1 Model} \label{EvaluatingtheGoodness-of-FitoftheB1Model-Subsection43}

The non-parametric one-sample Kolmogorov-Smirnov (KS) test and the $R^{2}$  coefficient are used to assess the goodness-of-fit of the B1 model. To conduct the KS test, we calculate the KS statistic, which is given by

\vspace{0.3cm}

\begin{align}
D=\max_{y}\left|F_{n}(y)-F(y)\right|,
	\label{EvaluatingtheGoodness-of-FitoftheB1Model-Subsection43-Equation1}
\end{align}

\vspace{0.3cm}

where $F_{n}(y)$ is the empirical distribution function defined as $F_{n}(y)=\frac{1}{n}\sum^{n}_{i=1}\mathbbm{1}_{\{y_{i}\leq y\}}$ and $F(y)$ is \eqref{TheCapitalDeficitatTrapping-Subsection22-Equation3}, the distribution function of the B1 model. Here, we follow \cite{Book:Hollander1999} to estimate the KS statistic $D$ for tied observations. The null ($H_{0}$) and alternative ($H_{1}$) hypotheses of the KS goodness-of-fit test are: 

$H_{0}$\textit{: the household income short-fall data follows the B1 model} and 

$H_{1}$\textit{: the household income short-fall data does not follows the B1 model}. 

The null hypothesis $H_{0}$ is rejected at a significance level $\alpha_{{\scaleto{KS}{3pt}}}$ if the p-value of the KS statistic is less than $\alpha_{{\scaleto{KS}{3pt}}}$. In this article, we use simulation to estimate the true p-value (see, for example, \cite{Book:Ross2023}).

We further support the KS test by considering the $R^{2}$ coefficient, which quantifies the degree of correlation between the observed and predicted probabilities under an assumed distribution. Here, a value of $R^{2}$ that is close to one indicates that the B1 model is a good fit for the household income short-fall data. The $R^{2}$ coefficient is computed as follows:

\vspace{0.3cm}

\begin{align}
R^{2}=\frac{\sum\limits_{i=1}^{n}\left[\hat{F}\left(y_{i}\right)-\bar{F}(y)\right]^2}{\sum\limits_{i=1}^n\left[\hat{F}\left(y_{i}\right)-\bar{F}(y)\right]^2+\sum\limits_{i=1}^{n}\left[F_{n}\left(y_{i}\right)-\hat{F}\left(y_{i}\right)\right]^2},
\label{EvaluatingtheGoodness-of-FitoftheReverseB1Model-Subsection42-Equation2}
\end{align}

\vspace{0.3cm}

where $F_{n}(y_{i})$ is the empirical distribution function for the $i$th household income short-fall, $\hat{F}(y_{i})$ is the estimated distribution function for the $i$th household income short-fall under the B1 model and $\bar{F}(y)$ is the average of $\hat{F}(y_{i})$.

To verify our assumptions and model specifications, we also consider graphical methods. We plot the distribution function \eqref{TheCapitalDeficitatTrapping-Subsection22-Equation3} and the p.d.f. \eqref{TheCapitalDeficitatTrapping-Subsection22-Equation4} against the empirical distribution $F_{n}(y)$ and the histogram of the observed income short-fall data, respectively. In addition, we use the B1 model quantile-quantile (Q-Q) and probability-probability (P-P) plots to support the assumption of a $B1(y;b = x^{*},p=1,q=\alpha)$ distribution. Appendices \ref{AppendixA:GoodnessofFitPlotsforBurkinaFaso}, \ref{AppendixB:GoodnessofFitPlotsbyAreaofResidence} and \ref{AppendixC:GoodnessofFitPlotsbyRegion} show these graphical methods at the national level, by area of residence and by region, respectively.

\subsection{Results and Discussion} \label{ResultsandDiscussion-Subsection44}

The assumption of the B1 distribution can be investigated based on the p-value of the Kolmogorov-Smirnov (KS) statistic and the $R^{2}$ coefficient, which are shown in Tables \ref{ResultsandDiscussion-Subsection44-Table1}, \ref{ResultsandDiscussion-Subsection44-Table2} and \ref{ResultsandDiscussion-Subsection44-Table3}. Tables \ref{ResultsandDiscussion-Subsection44-Table1} and \ref{ResultsandDiscussion-Subsection44-Table2} show that, at the national level and by area of residence, the null hypothesis $H_{0}$ is rejected and it is therefore concluded that the income short-fall data does not follow the B1 distribution. On the other hand, Table \ref{ResultsandDiscussion-Subsection44-Table3} exhibits that at the region level, with exception of the estimates for Nord and Plateau Central, all p-values of the KS test are higher than the significance level of $\alpha_{{\scaleto{KS}{3pt}}}=0.05$ (for either the MLE or the MME estimator or for both estimators). This indicates that we fail to reject the null hypothesis $H_{0}$ and therefore suggests we do not have sufficient evidence to conclude that the income short-fall data does not follow the B1 distribution for all other regions. This is borne out by the estimated values for the $R^{2}$ coefficient, which are found higher than $0.90$ for all the cases (except for the MLE estimate of the Nord region for which the null hypothesis $H_{0}$ is rejected), suggesting that the B1 model explains more than $90\%$ of the variation in the data, but the remaining (less than $10\%$) variation is attributed to errors and cannot be explained by the model.

\vspace{0.3cm}

\begin{table}[H]
\centering
\resizebox{\textwidth}{!}{
\begin{tabular}{llccccccccc}
\hline \hline Country & Type & $\mathrm{\alpha}$ & p-value (KS test) & $R^{2}$ & Poverty Gap Index ($FGT_{1}$) & Poverty Severity Index ($FGT_{2}$) \\
\hline \hline 
Burkina Faso & Direct &  - &  - &  - &  0.096 &  0.032 \\
Burkina Faso & MLE & 3.37 & 0.0000 & 0.9522 & 0.092 & 0.034 \\
Burkina Faso & MME & 3.16 & 0.0000 & 0.9670 & 0.096 & 0.037 \\
\hline \hline *p-value$>\alpha_{{\scaleto{KS}{3pt}}}=0.05.$
\end{tabular}}
\caption{B1 distribution fitted parameters and poverty measures for Burkina Faso.}
\label{ResultsandDiscussion-Subsection44-Table1}
\end{table}

\vspace{0.3cm}

Graphical methods displayed in Appendices \ref{AppendixA:GoodnessofFitPlotsforBurkinaFaso}, \ref{AppendixB:GoodnessofFitPlotsbyAreaofResidence} and \ref{AppendixC:GoodnessofFitPlotsbyRegion} provide an additional tool to evaluate the assumption of the B1 distribution. Plot (a) in the Appendices shows a density plot in which are plotted the p.d.f. \eqref{TheCapitalDeficitatTrapping-Subsection22-Equation4} of a B1 model and the histogram of the observed income short-fall data. On the other hand, plot (c) in the Appendices displays a distribution plot in which the B1 distribution function \eqref{TheCapitalDeficitatTrapping-Subsection22-Equation3} is plotted against the empirical distribution function $F_{n}(y)$. From Appendix \ref{AppendixC:GoodnessofFitPlotsbyRegion}, one can observe that the B1 model best fits the histogram and the empirical distribution for Centre, Hauts-Bassins and Sahel, which are the regions with the highest p-values of the KS statistic. Moreover, if the household income short-fall data is found to follow the B1 model, observations on both the quantile-quantile (Q-Q) and probability-probability (P-P) plots will appear to form almost a straight diagonal line. In general, household income short-fall data is positioned in the diagonal by region. This behaviour is even clearer for the three regions mentioned above. However, it is important to note that the model fails to capture the tail of the distribution.

\vspace{0.3cm}

Parameter estimates based on the maximum likelihood (ML) method and the Method of Moments (MoM) are shown in Tables \ref{ResultsandDiscussion-Subsection44-Table1}, \ref{ResultsandDiscussion-Subsection44-Table2} and \ref{ResultsandDiscussion-Subsection44-Table3}. From Section \ref{Introduction-Section1}, one can realise that under the assumption of $Beta(\alpha,1)-$distributed remaining proportions of capital, higher values of $\alpha$ yield to a greater expected remaining proportion of capital upon experiencing a capital loss (i.e. the distribution is left-skewed or equivalently, the remaining proportions of capital $Z_{i} \in [0,1]$ are more likely to have values close to one). On this basis, it is possible to assess the magnitude of capital losses experienced by households in Burkina Faso. For instance, Figures \ref{ResultsandDiscussion-Subsection44-Figure1-a} and \ref{ResultsandDiscussion-Subsection44-Figure1-b} display how households experience capital losses of varying magnitude depending on the geographical area in which they reside.  These findings can be useful, for example, in assessing the sensitivity of geographical regions to climate risks such as droughts and floods. Indeed, Burkina Faso\rq s economy is heavily dependent on rain-fed agriculture and livestock husbandry, which in turn makes it vulnerable to these risks (see, for example, \cite{Article:Zampaligre2014}). Moreover, the frequency and severity of these risks vary according to the climatic zone of the country\footnote{See \cite{Article:AlvarBeltran2020} for a detailed map of Burkina Faso with the different climatic zones.}. It is important to note that the poverty gap index ($FGT_{1}$) and the poverty severity index ($FGT_{2}$) shown in Table \ref{ResultsandDiscussion-Subsection44-Table3} for Boucle du Mouhoun, Centre-Ouest and Nord attain the highest values. This is also a result of the high head-count index in these regions (56\%, 47\% and 65\%, respectively). Similarly, households residing in the Centre-Est region seem to experience the most adverse capital losses, as its estimated value for $\alpha$ shown in Table \ref{ResultsandDiscussion-Subsection44-Table3} is the lowest among all regions. Indeed, Centre-Est\rq s (whose head-count index is 34\%) poverty severity index ($FGT_{2}$) is high in comparison to other regions despite the fact that these regions have a higher head-count index (e.g. 36\% for Centre-Nord, 41\% for Centre-Sud and 47\% for Est).

\vspace{0.3cm}

\begin{table}[H]
\centering
\resizebox{\textwidth}{!}{
\begin{tabular}{llccccccccc}
\hline \hline Area of Residence & Type & $\mathrm{\alpha}$ & p-value (KS test) & $R^{2}$ & Poverty Gap Index ($FGT_{1}$) & Poverty Severity Index ($FGT_{2}$) \\
\hline \hline Rural & Direct & - & - & - & 0.118 & 0.040 \\
Rural & MLE & 3.28 & 0.0000 & 0.9518 & 0.112 & 0.042 \\
Rural & MME & 3.08 & 0.0000 & 0.9672 & 0.118 & 0.046 \\
\hline Urban & Direct & - & - & - & 0.052 & 0.016 \\
Urban& MLE & 3.75 & 0.0001 & 0.9468 & 0.050 & 0.017 \\
Urban& MME & 3.55 & 0.0048 & 0.9600 & 0.052 & 0.019 \\
\hline \hline *p-value$>\alpha_{{\scaleto{KS}{3pt}}}=0.05.$
\end{tabular}}
\caption{B1 distribution fitted parameters and poverty measures by area of residence.}
\label{ResultsandDiscussion-Subsection44-Table2}
\end{table}

\vspace{0.3cm}

\begin{table}[H]
\centering
\resizebox{\textwidth}{!}{
\begin{tabular}{llccccccccc}
\hline \hline Region & Type & $\mathrm{\alpha}$  & p-value (KS test) & $R^{2}$ & Poverty Gap Index ($FGT_{1}$) & Poverty Severity Index ($FGT_{2}$)\\
\hline \hline Boucle du Mouhoun & Direct & - & -  & - & 0.143 & 0.052 \\
Boucle du Mouhoun & MLE & 3.08 & 0.0884* & 0.9604 & 0.136 & 0.054 \\
Boucle du Mouhoun & MME & 2.89 & 0.0104 & 0.9739 & 0.143 & 0.058 \\
\hline Cascades & Direct & - & - & - & 0.038 & 0.011 \\
Cascades & MLE & 4.25 & 0.1874* & 0.9617 & 0.037 & 0.012 \\
Cascades & MME & 4.11 & 0.2963* & 0.9663 & 0.038 & 0.013 \\
\hline Centre & Direct & - & -  & - & 0.037 & 0.012 \\
Centre & MLE & 3.76 & 0.3193* & 0.9816 & 0.036 & 0.013 \\ 
Centre & MME & 3.61 & 0.5502* & 0.9872 & 0.037 & 0.013 \\
\hline Centre-Est & Direct & - & - & - & 0.096 & 0.036 \\
Centre-Est & MLE & 2.79 & 0.0594* & 0.9368 & 0.091 & 0.038 \\
Centre-Est & MME & 2.58 & 0.2251* & 0.9546 & 0.096 & 0.042 \\
\hline Centre-Nord & Direct & - & - & - & 0.082 & 0.026 \\
Centre-Nord & MLE & 3.64 & 0.1913* & 0.9552 & 0.078 & 0.028 \\
Centre-Nord & MME & 3.43 & 0.2217* & 0.9679 & 0.082 & 0.030 \\
\hline Centre-Ouest & Direct & - & - & - & 0.107 & 0.034 \\
Centre-Ouest & MLE & 3.61 & 0.0549* & 0.9386 & 0.102 & 0.036 \\
Centre-Ouest & MME & 3.40 & 0.0523* & 0.9531 & 0.107 & 0.040 \\
\hline Centre-Sud & Direct & - & - & - & 0.095 & 0.030 \\
Centre-Sud & MLE & 3.57 & 0.2061* & 0.9483 & 0.090 & 0.032 \\
Centre-Sud & MME & 3.35 & 0.1376* & 0.9636 & 0.095 & 0.035 \\
\hline Est & Direct & - & - & - & 0.109 & 0.035\\
Est & MLE & 3.49 & 0.0700* & 0.9389 & 0.104 & 0.038 \\
Est & MME & 3.26 & 0.0180 & 0.9556 & 0.109 & 0.042 \\
\hline Hauts-Bassins & Direct & - & - & - & 0.076 & 0.025 \\
Hauts-Bassins & MLE & 3.75 & 0.5736* & 0.9865 & 0.074 & 0.026 \\
Hauts-Bassins & MME & 3.62 & 0.6231* & 0.9905 & 0.076 & 0.027 \\
\hline Nord & Direct & - & - & - & 0.176 & 0.063 \\
Nord & MLE & 2.95 & 0.0000 & 0.8879 & 0.165 & 0.067 \\
Nord & MME & 2.69 & 0.0002 & 0.9133 & 0.176 & 0.075 \\
\hline Plateau Central & Direct & - & - & - & 0.104 & 0.034 \\
Plateau Central & MLE & 3.45 & 0.0409 & 0.9330 & 0.098 & 0.036 \\
Plateau Central & MME & 3.22 & 0.0246 & 0.9511 & 0.104 & 0.040 \\
\hline Sahel & Direct & - & - & - & 0.050 & 0.015 \\
Sahel & MLE & 4.23 & 0.6521* & 0.9799 & 0.048 & 0.015 \\
Sahel & MME & 4.07 & 0.6094* & 0.9861 & 0.050 & 0.016 \\
\hline Sud-Ouest & Direct & - & - & - & 0.081 & 0.027 \\
Sud-Ouest & MLE & 3.51 & 0.0128 & 0.9618 & 0.078 & 0.028 \\
Sud-Ouest & MME & 3.33 & 0.0545* & 0.9724 & 0.081 & 0.031 \\
\hline \hline *p-value$>\alpha_{{\scaleto{KS}{3pt}}}=0.05.$
\end{tabular}}
\caption{B1 distribution fitted parameters and poverty measures by region.}
\label{ResultsandDiscussion-Subsection44-Table3}
\end{table}

\vspace{0.3cm}

\begin{figure}[H]
	\begin{subfigure}[b]{0.5\linewidth}
	   % Plot generated with the R code: UninsuredTrappingTimeLaplaceTransform.R
       % We could generate other plots if needed.
  		\includegraphics[width=8.5cm, height=6cm]{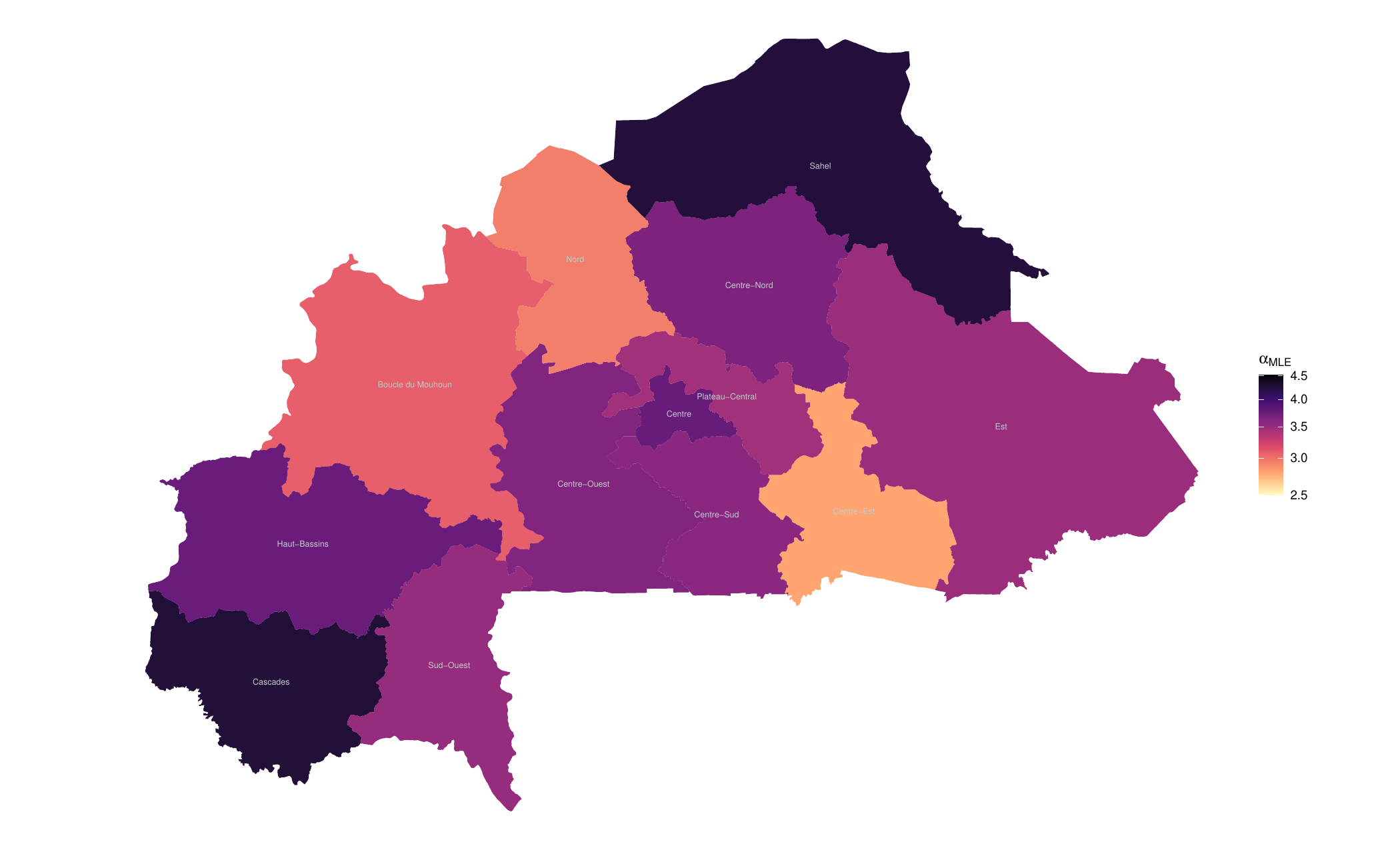}
		\caption{}
  		\label{ResultsandDiscussion-Subsection44-Figure1-a}
	\end{subfigure}
	\begin{subfigure}[b]{0.5\linewidth}
	        % Plot generated with the R code: UninsuredTrappingProbability.R
        % We could generate other plots if needed. Maybe be could look for a seed to fix the obtained results.
  		\includegraphics[width=8.5cm, height=6cm]{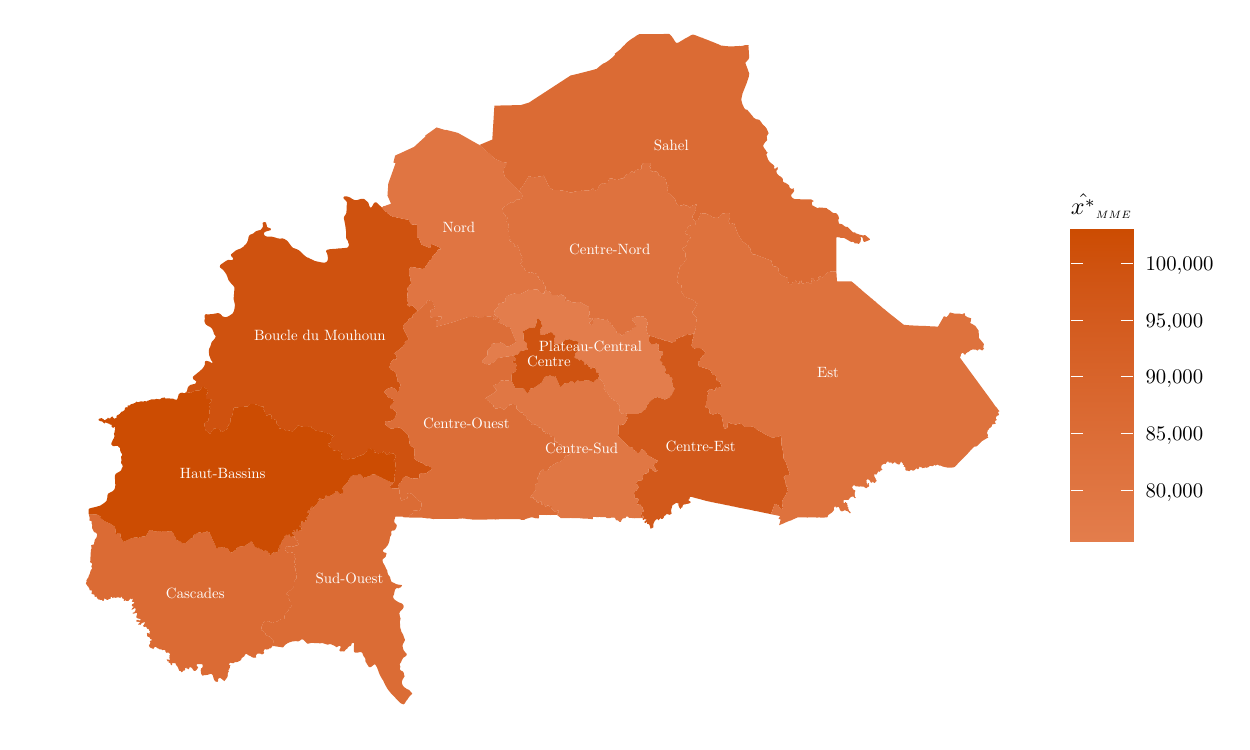}
		\caption{}
  		\label{ResultsandDiscussion-Subsection44-Figure1-b}
	\end{subfigure}
	\caption{Parameter estimators by region: (a) $\hat{\alpha}_{{\scaleto{MLE}{3pt}}}$ (b) $\hat{\alpha}_{{\scaleto{MME}{3pt}}}$.}
	\label{ResultsandDiscussion-Subsection44-Figure1}
\end{figure}

\vspace{0.3cm}

\begin{figure}[H]
\begin{center}\end{center}
	\begin{subfigure}[b]{0.5\linewidth}
	   % Plot generated with the R code: FittingDistributionIncome.R
       	   % We could generate other plots if needed.
  		\includegraphics[width=8cm, height=8cm]{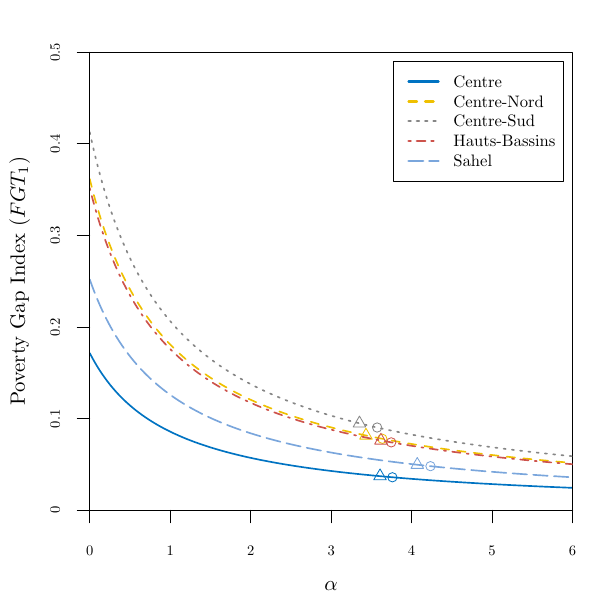}
  		\label{ResultsandDiscussion-Subsection44-Figure2-a}
		%\caption{}
	\end{subfigure}
	\begin{subfigure}[b]{0.5\linewidth}
	        % Plot generated with the R code: FittingDistributionIncome.R
                % We could generate other plots if needed. Maybe be could look for a seed to fix the obtained results.
  		\includegraphics[width=8cm, height=8cm]{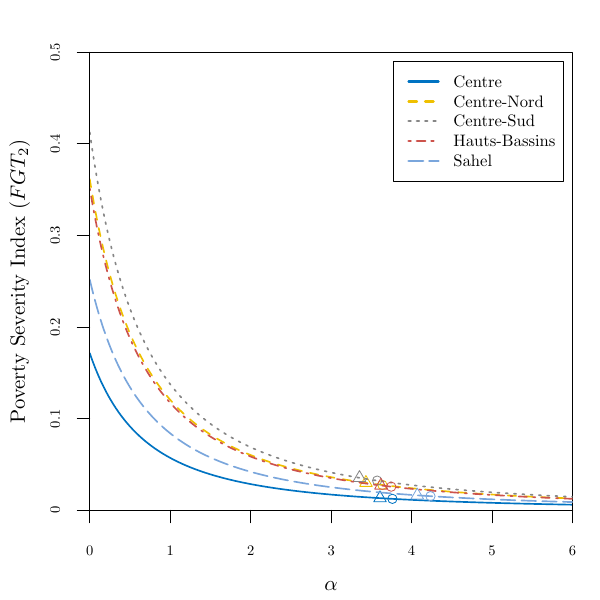}
  		\label{ResultsandDiscussion-Subsection44-Figure2-b}
		%\caption{}
	\end{subfigure}
		\caption{Sensitivity of the poverty gap and the poverty severity index to the parameter $\alpha$ for selected regions. Markers $\bigcirc$ and $\triangle$ display the estimated poverty measures using $\hat{\alpha}_{{\scaleto{MLE}{3pt}}}$ and $\hat{\alpha}_{{\scaleto{MME}{3pt}}}$, respectively.}
	\label{ResultsandDiscussion-Subsection44-Figure2}
\end{figure}

\vspace{0.3cm}

The robustness of the poverty gap index ($FGT_{1}$) and the poverty severity index ($FGT_{2}$) at the national level, by area of residence and by region, when specifying the B1 model as the income short-fall distribution can be evaluated in Tables \ref{ResultsandDiscussion-Subsection44-Table1}, \ref{ResultsandDiscussion-Subsection44-Table2} and \ref{ResultsandDiscussion-Subsection44-Table3}, respectively. Comparing the estimates using the B1 distribution assumption with the direct (empirical) values of the poverty measures, one can see how close the estimates are to the direct values of the FGT indices, thus reinforcing the assumption of the B1 distribution for income short-fall data. 

In Figure \ref{ResultsandDiscussion-Subsection44-Figure2}, we contrast the level of poverty and the changes that would have occurred in the poverty level of selected regions. Both the poverty gap and the poverty severity index show an enormous progress in poverty reduction for greater expected remaining proportion of capital upon experiencing a capital loss (i.e. higher values of $\alpha$). Higher values for $\alpha$ can be attained with risk mitigation strategies such as subsidised insurance programmes \citep{Article:Flores-Contro2021}.

\section{Conclusion} \label{Conclusion-Section5}

This article studies the Gerber-Shiu expected discounted penalty function for the household capital process introduced in \cite{Article:Kovacevic2011}. The Gerber-Shiu function incorporates information on the trapping time, the capital surplus immediately before trapping and the capital deficit at trapping.  Recent work focuses on only analysing the infinite-time trapping probability \citep{Article:Henshaw2023}, therefore overlooking quantities of particular interest such as the undershoot and the overshoot of a household\rq s capital at trapping. To the best of our knowledge, we derive for the first time a functional equation for the Gerber-Shiu function and we solve it for the particular case in which the remaining proportions of capital upon experiencing a capital loss are $Beta(\alpha,1)-$distributed. As a result, we obtain closed-form expressions for important quantities such as the Laplace transform of the trapping time and the distribution of the capital deficit at trapping. These quantities are particularly important as they provide crucial information towards understanding a household’s transition into poverty.

Using risk theory techniques, we derive a microeconomic foundation for the beta of the first kind (B1) as a suitable model to represent the distribution of personal income deficit (or income short-fall). It is indeed interesting that our findings are in line with previous research in development economics, where the generalised beta (GB) distribution family and its derivatives (including the B1 model) have shown to be appropriate models to describe the distribution of personal income.

Affinities between the capital deficit at trapping and a class of poverty measures, known as the Foster-Greer-Thorbecke (FGT) index, are also presented. In addition, we provide empirical evidence of the suitability of the B1 distribution for modelling Burkina Faso\rq s household income short-fall data from the Continuous Multisector Survey (\textit{Enquête Multisectorielle Continue (EMC)}) 2014. Indeed, in this article, the B1 model is fitted to Burkina Faso\rq s household income short-fall data, and it is found that the B1 distribution fitted to the data well, suggesting that this model is a good candidate for describing the income short-fall distribution. Moreover, we show how the poverty gap index and the poverty severity index can be calculated from the estimated B1 income short-fall distribution. One of the main advantages of parametric distributions such as the B1 distribution is that (poverty) indicators can be presented as functions of the parameters of the chosen distribution. Thus parametric modeling allows to gain insight into the relationship between (poverty) indicators and the distribution of the parameters.

Future research can consider other distributions supported in $[0,1]$ for the remaining proportions of capital. In this way, one could arrive at other distributions for the capital deficit at trapping that have also been used previously to model personal income (e.g. the lognormal distribution and the power-law distribution). However, this is not straightforward, as finding a closed-form solution for the Integro-Differential Equation (IDE) derived in Theorem \ref{WhenandHowHouseholdsBecomePoor?-Section2-Theorem1} when considering more general distributions for the remaining proportion of capital is challenging. In addition, it might also be interesting to carry out the same analysis with household microdata from other countries in order to verify the results obtained with Burkina Faso\rq s EMC\footnote{In the supplementary materials, an application to household microdata from Ivory Coast\rq s Harmonised Survey of Household Living Conditions (\textit{Enquête Harmonisée Sur le Conditions de Vie des Ménages (EHCVM)}) 2018 is included.}.

%\section*{{Acknowledgements}}
%The author acknowledges Jorge M. Ramirez Osorio and S\'everine Arnold (-Gaille) for their precious and very useful comments on the paper.

% Appendices come here
\setcitestyle{numbers} % Set the citation style to ``numbers''.
\bibliographystyle{chicago} % Set the Bibliography Style.
\bibliography{main}

\setcitestyle{authoryear}

% Appendices come here

\section*{Appendices}
\addcontentsline{toc}{section}{Appendices}
\renewcommand{\thesubsection}{\Alph{subsection}}
\setcounter{subsection}{0}

\numberwithin{equation}{subsubsection}

\subsection{Goodness-of-Fit Plots for Burkina Faso}\label{AppendixA:GoodnessofFitPlotsforBurkinaFaso}

\begin{figure}[H]
\begin{center}Burkina Faso\end{center}
	\begin{subfigure}[b]{0.5\linewidth}
	   % Plot generated with the R code: FittingDistributionIncome.R
       	   % We could generate other plots if needed.
  		\includegraphics[width=8cm, height=8cm]{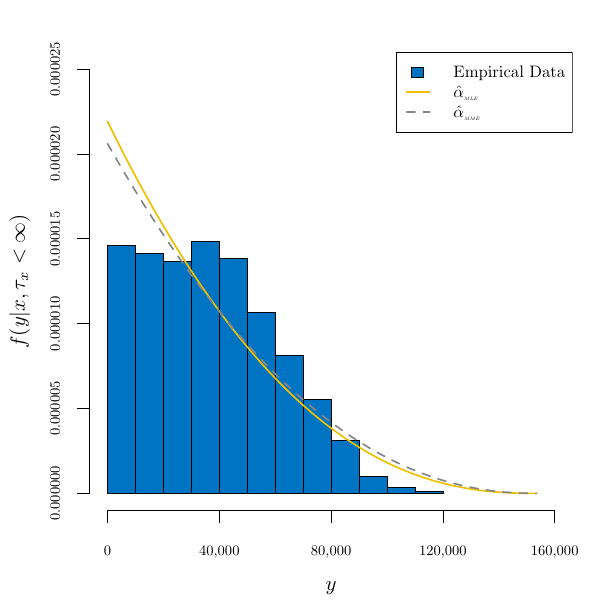}
		\caption{Comparison of Probability Density}
  		\label{AppendixA:GoodnessofFitPlotsforBurkinaFaso-Figure1-a}
	\end{subfigure}
	\begin{subfigure}[b]{0.5\linewidth}
	        % Plot generated with the R code: FittingDistributionIncome.R
                % We could generate other plots if needed. Maybe be could look for a seed to fix the obtained results.
  		\includegraphics[width=8cm, height=8cm]{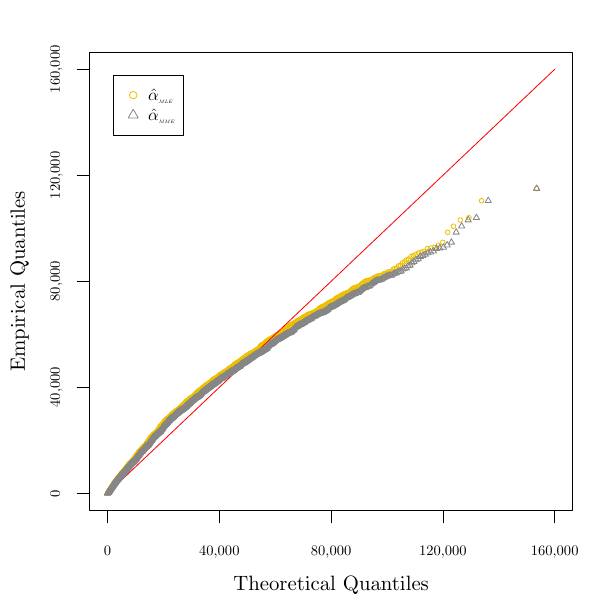}
		\caption{Quantile-Quantile (Q-Q) Plot}
  		\label{AppendixA:GoodnessofFitPlotsforBurkinaFaso-Figure1-b}
	\end{subfigure}
	\begin{subfigure}[b]{0.5\linewidth}
	        % Plot generated with the R code: FittingDistributionIncome.R
                % We could generate other plots if needed. Maybe be could look for a seed to fix the obtained results.
  		\includegraphics[width=8cm, height=8cm]{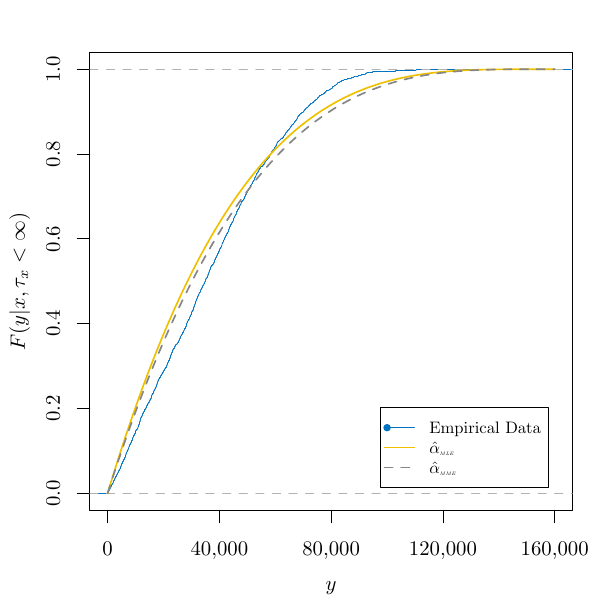}
		\caption{Comparison of Probability Distributions}
  		\label{AppendixA:GoodnessofFitPlotsforBurkinaFaso-Figure1-c}
	\end{subfigure}
	\begin{subfigure}[b]{0.5\linewidth}
	        % Plot generated with the R code: FittingDistributionIncome.R
                % We could generate other plots if needed. Maybe be could look for a seed to fix the obtained results.
  		\includegraphics[width=8cm, height=8cm]{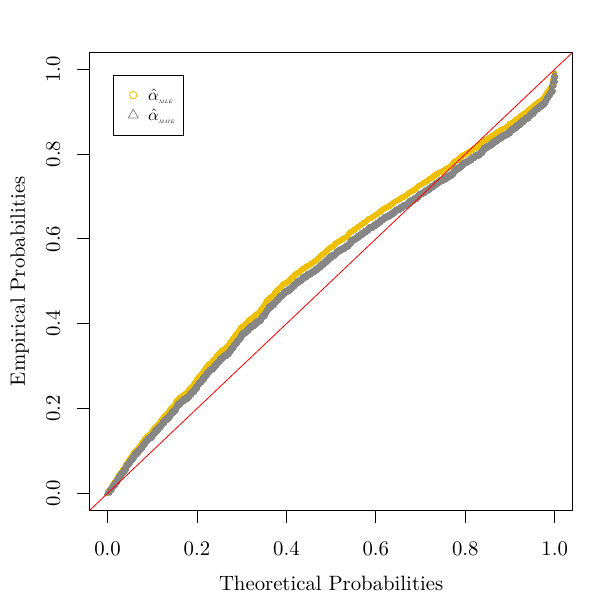}
		\caption{Probability-Probability (P-P) Plot}
  		\label{AppendixA:GoodnessofFitPlotsforBurkinaFaso-Figure1-d}
	\end{subfigure}
	\label{AppendixA:GoodnessofFitPlotsforBurkinaFaso-Figure1}
\end{figure}

\subsection{Goodness-of-Fit Plots by Area of Residence}\label{AppendixB:GoodnessofFitPlotsbyAreaofResidence}

\begin{figure}[H]
\begin{center}Rural\end{center}
	\begin{subfigure}[b]{0.5\linewidth}
	   % Plot generated with the R code: FittingDistributionIncome.R
       	   % We could generate other plots if needed.
  		\includegraphics[width=8cm, height=8cm]{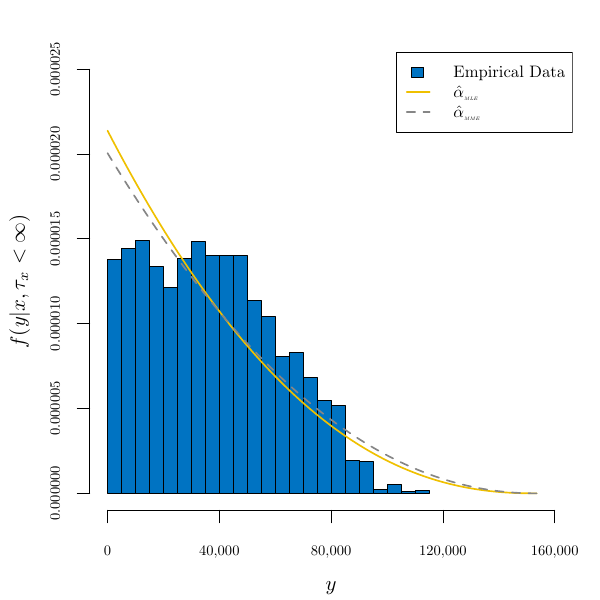}
		\caption{Comparison of Probability Density}
  		\label{AppendixB:GoodnessofFitPlotsbyAreaofResidence-Figure1-a}
	\end{subfigure}
	\begin{subfigure}[b]{0.5\linewidth}
	        % Plot generated with the R code: FittingDistributionIncome.R
                % We could generate other plots if needed. Maybe be could look for a seed to fix the obtained results.
  		\includegraphics[width=8cm, height=8cm]{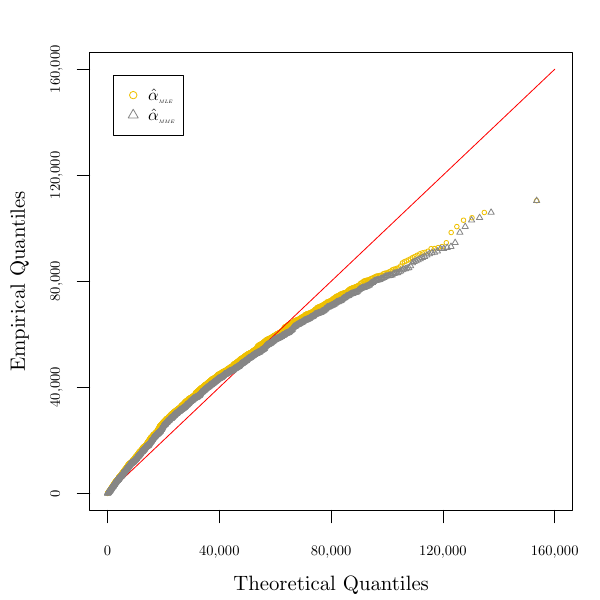}
		\caption{Quantile-Quantile (Q-Q) Plot}
  		\label{AppendixB:GoodnessofFitPlotsbyAreaofResidence-Figure1-b}
	\end{subfigure}
	\begin{subfigure}[b]{0.5\linewidth}
	        % Plot generated with the R code: FittingDistributionIncome.R
                % We could generate other plots if needed. Maybe be could look for a seed to fix the obtained results.
  		\includegraphics[width=8cm, height=8cm]{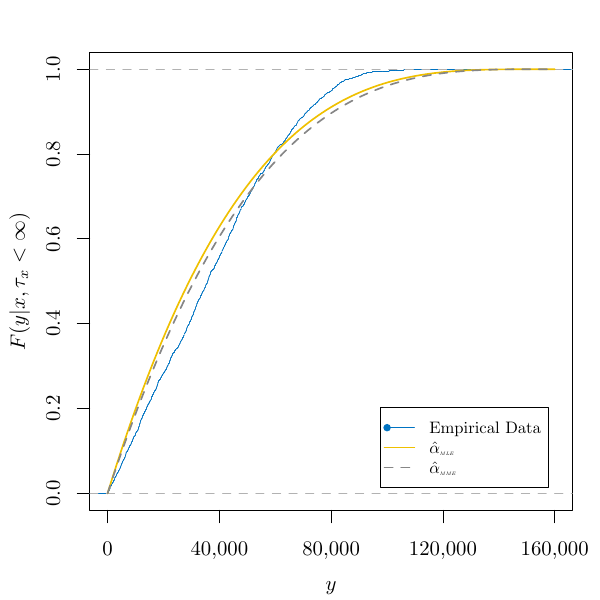}
		\caption{Comparison of Probability Distributions}
  		\label{AppendixB:GoodnessofFitPlotsbyAreaofResidence-Figure1-c}
	\end{subfigure}
	\begin{subfigure}[b]{0.5\linewidth}
	        % Plot generated with the R code: FittingDistributionIncome.R
                % We could generate other plots if needed. Maybe be could look for a seed to fix the obtained results.
  		\includegraphics[width=8cm, height=8cm]{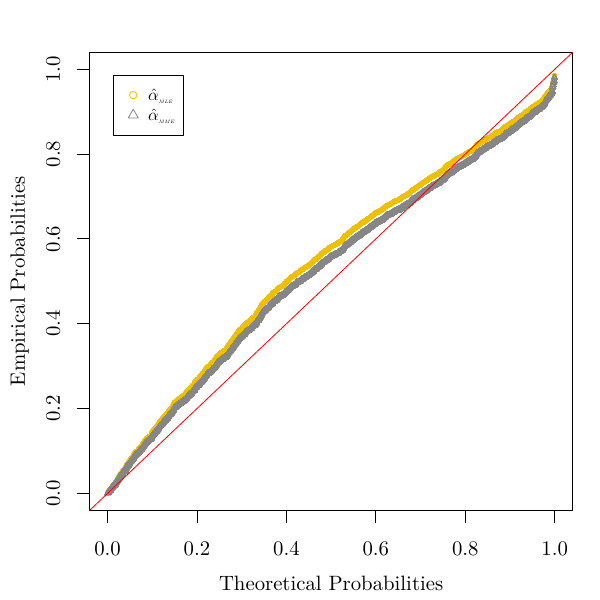}
		\caption{Probability-Probability (P-P) Plot}
  		\label{AppendixB:GoodnessofFitPlotsbyAreaofResidence-Figure1-d}
	\end{subfigure}
	\label{AppendixB:GoodnessofFitPlotsbyAreaofResidence-Figure1}
\end{figure}

\begin{figure}[H]
\begin{center}Urban\end{center}
	\begin{subfigure}[b]{0.5\linewidth}
	   % Plot generated with the R code: FittingDistributionIncome.R
       	   % We could generate other plots if needed.
  		\includegraphics[width=8cm, height=8cm]{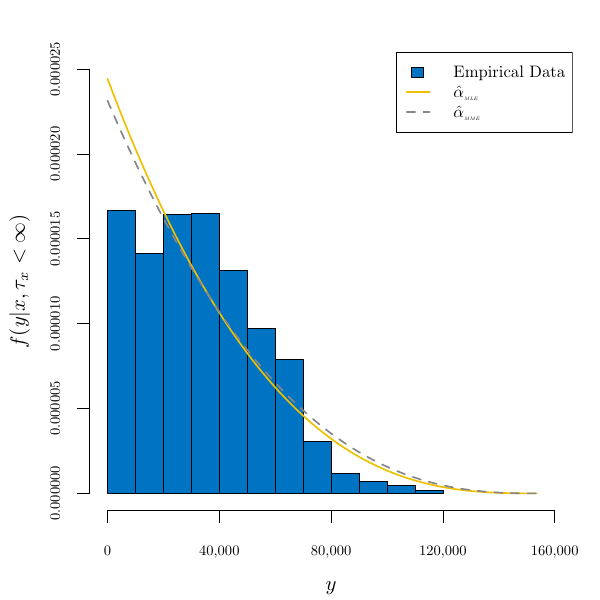}
		\caption{Comparison of Probability Density}
  		\label{AppendixB:GoodnessofFitPlotsbyAreaofResidence-Figure2-a}
	\end{subfigure}
	\begin{subfigure}[b]{0.5\linewidth}
	        % Plot generated with the R code: FittingDistributionIncome.R
                % We could generate other plots if needed. Maybe be could look for a seed to fix the obtained results.
  		\includegraphics[width=8cm, height=8cm]{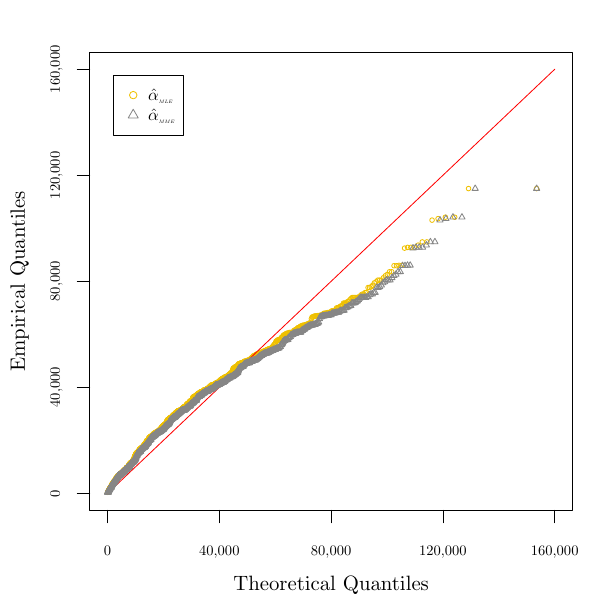}
		\caption{Quantile-Quantile (Q-Q) Plot}
  		\label{AppendixB:GoodnessofFitPlotsbyAreaofResidence-Figure2-b}
	\end{subfigure}
	\begin{subfigure}[b]{0.5\linewidth}
	        % Plot generated with the R code: FittingDistributionIncome.R
                % We could generate other plots if needed. Maybe be could look for a seed to fix the obtained results.
  		\includegraphics[width=8cm, height=8cm]{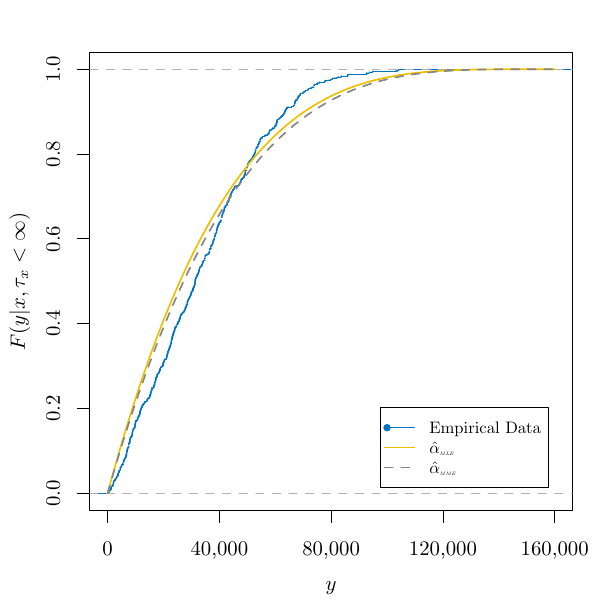}
		\caption{Comparison of Probability Distributions}
  		\label{AppendixB:GoodnessofFitPlotsbyAreaofResidence-Figure2-c}
	\end{subfigure}
	\begin{subfigure}[b]{0.5\linewidth}
	        % Plot generated with the R code: FittingDistributionIncome.R
                % We could generate other plots if needed. Maybe be could look for a seed to fix the obtained results.
  		\includegraphics[width=8cm, height=8cm]{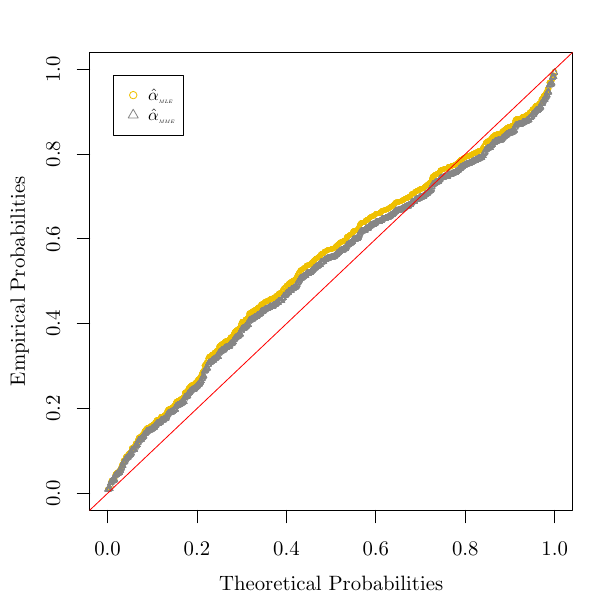}
		\caption{Probability-Probability (P-P) Plot}
  		\label{AppendixB:GoodnessofFitPlotsbyAreaofResidence-Figure2-d}
	\end{subfigure}
	\label{AppendixB:GoodnessofFitPlotsbyAreaofResidence-Figure2}
\end{figure}

\subsection{Goodness-of-Fit Plots by Region}\label{AppendixC:GoodnessofFitPlotsbyRegion}

\begin{figure}[H]
\begin{center}Boucle du Mouhoun\end{center}
	\begin{subfigure}[b]{0.5\linewidth}
	   % Plot generated with the R code: FittingDistributionIncomeRegionAdministrative.R
       	   % We could generate other plots if needed.
  		\includegraphics[width=8cm, height=8cm]{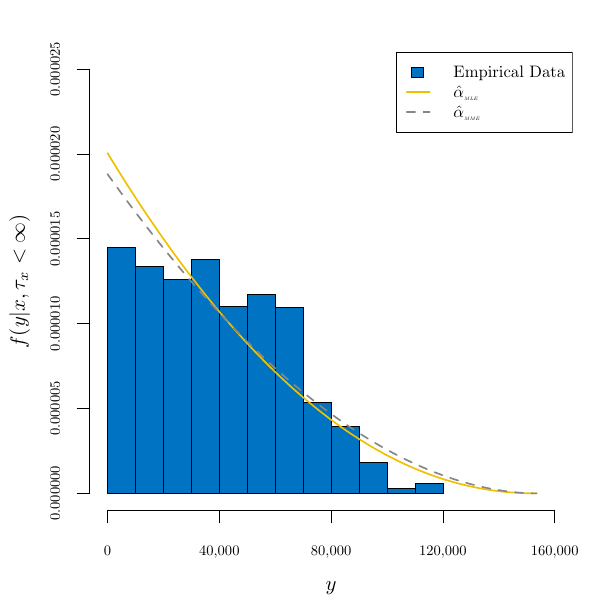}
		 \caption{Comparison of Probability Density}
  		\label{AppendixC:GoodnessofFitPlotsbyRegion-Figure1-a}
	\end{subfigure}
	\begin{subfigure}[b]{0.5\linewidth}
	        % Plot generated with the R code: FittingDistributionIncomeRegionAdministrative.R
                % We could generate other plots if needed. Maybe be could look for a seed to fix the obtained results.
  		\includegraphics[width=8cm, height=8cm]{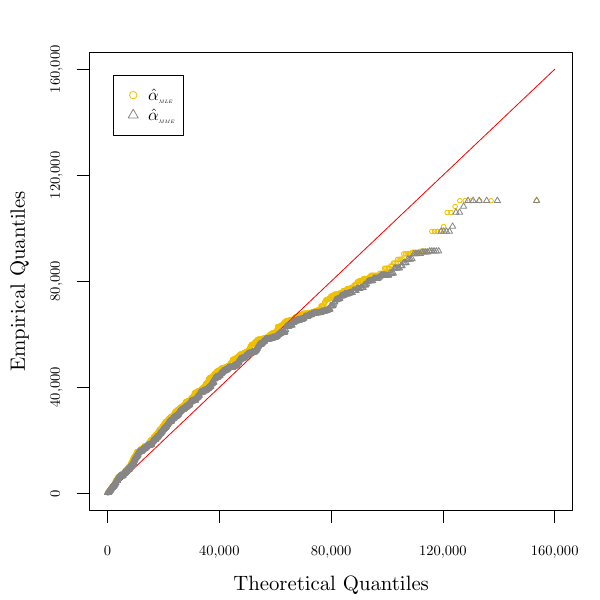}
		\caption{Quantile-Quantile (Q-Q) Plot}
  		\label{AppendixC:GoodnessofFitPlotsbyRegion-Figure1-b}
	\end{subfigure}
	\begin{subfigure}[b]{0.5\linewidth}
	        % Plot generated with the R code: FittingDistributionIncomeRegionAdministrative.R
                % We could generate other plots if needed. Maybe be could look for a seed to fix the obtained results.
  		\includegraphics[width=8cm, height=8cm]{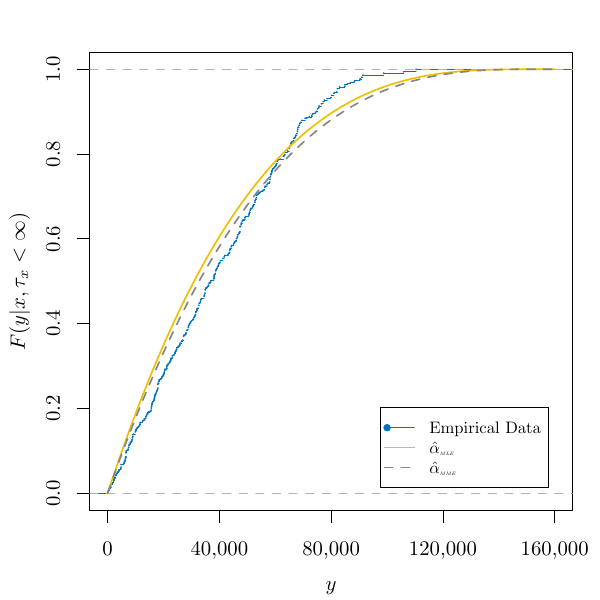}
		\caption{Comparison of Probability Distributions}
  		\label{AppendixC:GoodnessofFitPlotsbyRegion-Figure1-c}
	\end{subfigure}
	\begin{subfigure}[b]{0.5\linewidth}
	        % Plot generated with the R code: FittingDistributionIncomeRegionAdministrative.R
                % We could generate other plots if needed. Maybe be could look for a seed to fix the obtained results.
  		\includegraphics[width=8cm, height=8cm]{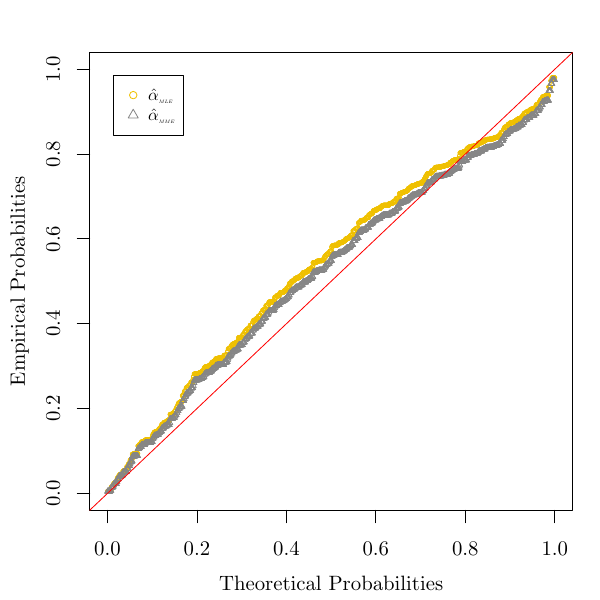}
		\caption{Probability-Probability (P-P) Plot}
  		\label{AppendixC:GoodnessofFitPlotsbyRegion-Figure1-d}
	\end{subfigure}
	\label{AppendixC:GoodnessofFitPlotsbyRegion-Figure1}
\end{figure}

\begin{figure}[H]
\begin{center}Cascades\end{center}
	\begin{subfigure}[b]{0.5\linewidth}
	   % Plot generated with the R code: FittingDistributionIncomeRegionAdministrative.R
       	   % We could generate other plots if needed.
  		\includegraphics[width=8cm, height=8cm]{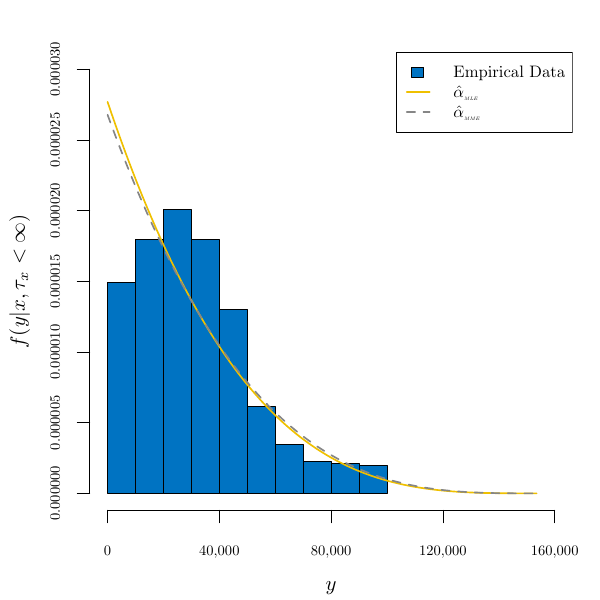}
		\caption{Comparison of Probability Density}
  		\label{AppendixC:GoodnessofFitPlotsbyRegion-Figure2-a}
	\end{subfigure}
	\begin{subfigure}[b]{0.5\linewidth}
	        % Plot generated with the R code: FittingDistributionIncomeRegionAdministrative.R
                % We could generate other plots if needed. Maybe be could look for a seed to fix the obtained results.
  		\includegraphics[width=8cm, height=8cm]{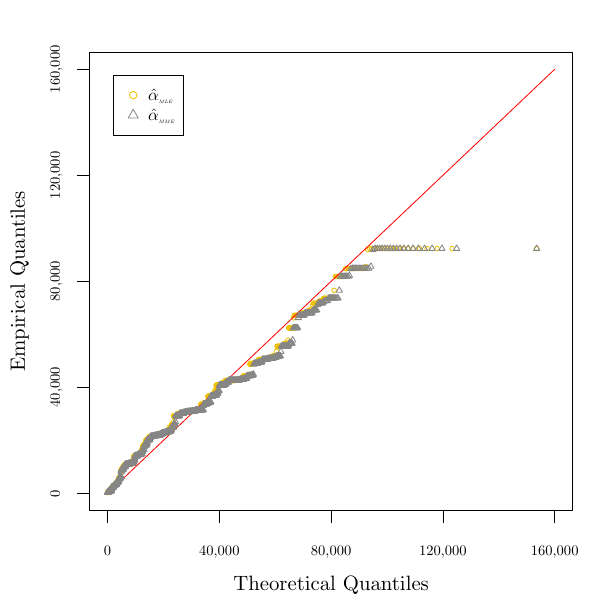}
		\caption{Quantile-Quantile (Q-Q) Plot}
  		\label{AppendixC:GoodnessofFitPlotsbyRegion-Figure2-b}
	\end{subfigure}
	\begin{subfigure}[b]{0.5\linewidth}
	        % Plot generated with the R code: FittingDistributionIncomeRegionAdministrative.R
                % We could generate other plots if needed. Maybe be could look for a seed to fix the obtained results.
  		\includegraphics[width=8cm, height=8cm]{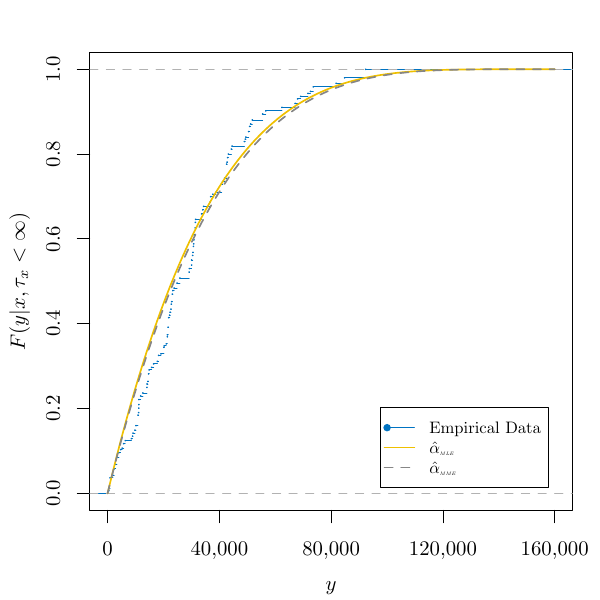}
		\caption{Comparison of Probability Distributions}
  		\label{AppendixC:GoodnessofFitPlotsbyRegion-Figure2-c}
	\end{subfigure}
	\begin{subfigure}[b]{0.5\linewidth}
	        % Plot generated with the R code: FittingDistributionIncomeRegionAdministrative.R
                % We could generate other plots if needed. Maybe be could look for a seed to fix the obtained results.
  		\includegraphics[width=8cm, height=8cm]{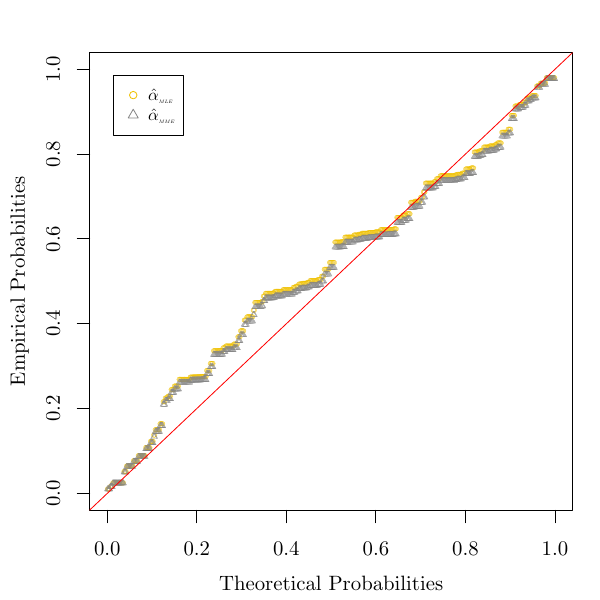}
		\caption{Probability-Probability (P-P) Plot}
  		\label{AppendixC:GoodnessofFitPlotsbyRegion-Figure2-d}
	\end{subfigure}
	\label{AppendixC:GoodnessofFitPlotsbyRegion-Figure2}
\end{figure}

\begin{figure}[H]
\begin{center}Centre\end{center}
	\begin{subfigure}[b]{0.5\linewidth}
	   % Plot generated with the R code: FittingDistributionIncomeRegionAdministrative.R
       	   % We could generate other plots if needed.
  		\includegraphics[width=8cm, height=8cm]{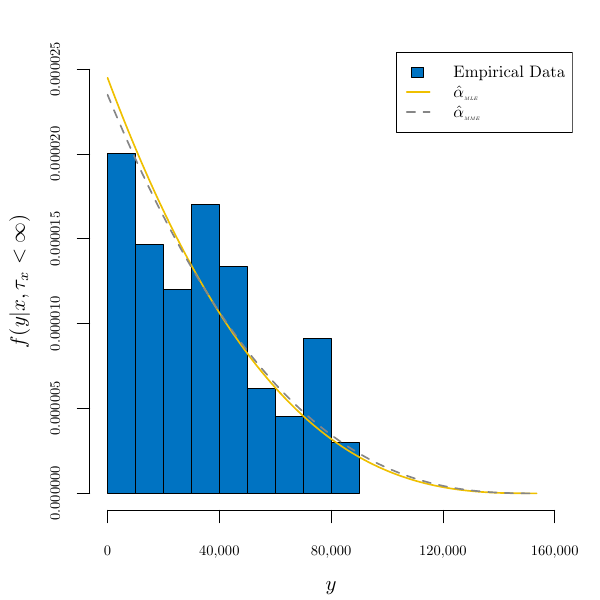}
		\caption{Comparison of Probability Density}
  		\label{AppendixC:GoodnessofFitPlotsbyRegion-Figure3-a}
	\end{subfigure}
	\begin{subfigure}[b]{0.5\linewidth}
	        % Plot generated with the R code: FittingDistributionIncomeRegionAdministrative.R
                % We could generate other plots if needed. Maybe be could look for a seed to fix the obtained results.
  		\includegraphics[width=8cm, height=8cm]{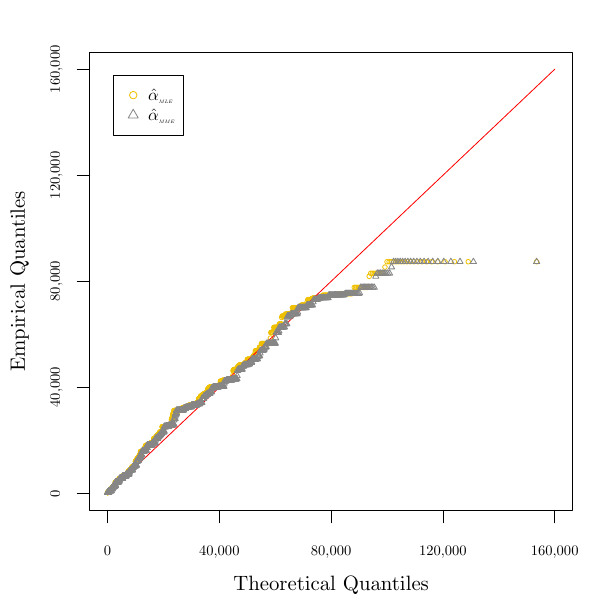}
		\caption{Quantile-Quantile (Q-Q) Plot}
  		\label{AppendixC:GoodnessofFitPlotsbyRegion-Figure3-b}
	\end{subfigure}
	\begin{subfigure}[b]{0.5\linewidth}
	        % Plot generated with the R code: FittingDistributionIncomeRegionAdministrative.R
                % We could generate other plots if needed. Maybe be could look for a seed to fix the obtained results.
  		\includegraphics[width=8cm, height=8cm]{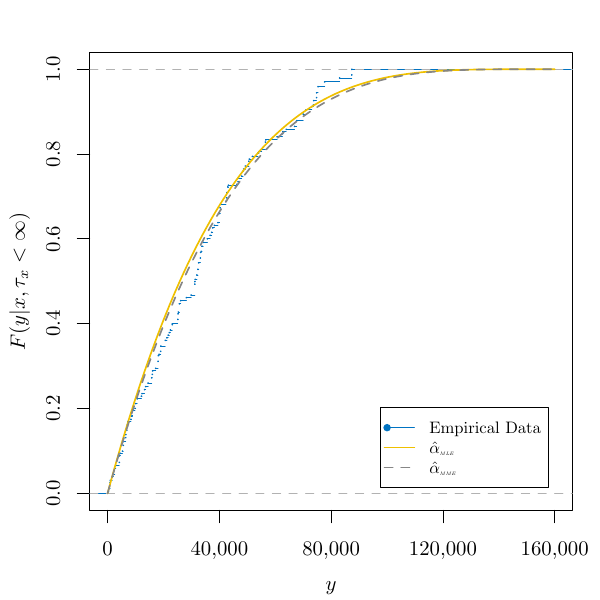}
		\caption{Comparison of Probability Distributions}
  		\label{AppendixC:GoodnessofFitPlotsbyRegion-Figure3-c}
	\end{subfigure}
	\begin{subfigure}[b]{0.5\linewidth}
	        % Plot generated with the R code: FittingDistributionIncomeRegionAdministrative.R
                % We could generate other plots if needed. Maybe be could look for a seed to fix the obtained results.
  		\includegraphics[width=8cm, height=8cm]{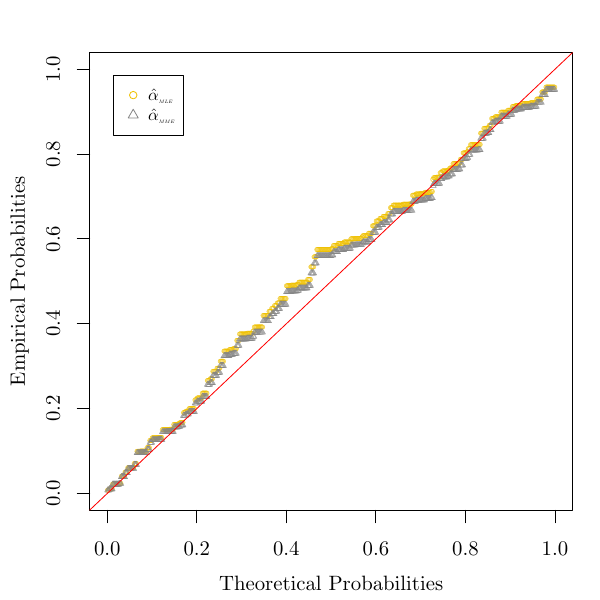}
		\caption{Probability-Probability (P-P) Plot}
  		\label{AppendixC:GoodnessofFitPlotsbyRegion-Figure3-d}
	\end{subfigure}
	\label{AppendixC:GoodnessofFitPlotsbyRegion-Figure3}
\end{figure}

\begin{figure}[H]
\begin{center}Centre-Est\end{center}
	\begin{subfigure}[b]{0.5\linewidth}
	   % Plot generated with the R code: FittingDistributionIncomeRegionAdministrative.R
       	   % We could generate other plots if needed.
  		\includegraphics[width=8cm, height=8cm]{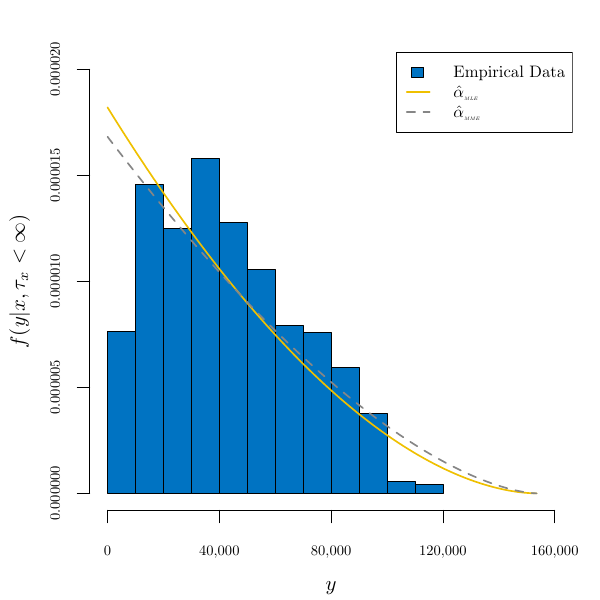}
		\caption{Comparison of Probability Density}
  		\label{AppendixC:GoodnessofFitPlotsbyRegion-Figure4-a}
	\end{subfigure}
	\begin{subfigure}[b]{0.5\linewidth}
	        % Plot generated with the R code: FittingDistributionIncomeRegionAdministrative.R
                % We could generate other plots if needed. Maybe be could look for a seed to fix the obtained results.
  		\includegraphics[width=8cm, height=8cm]{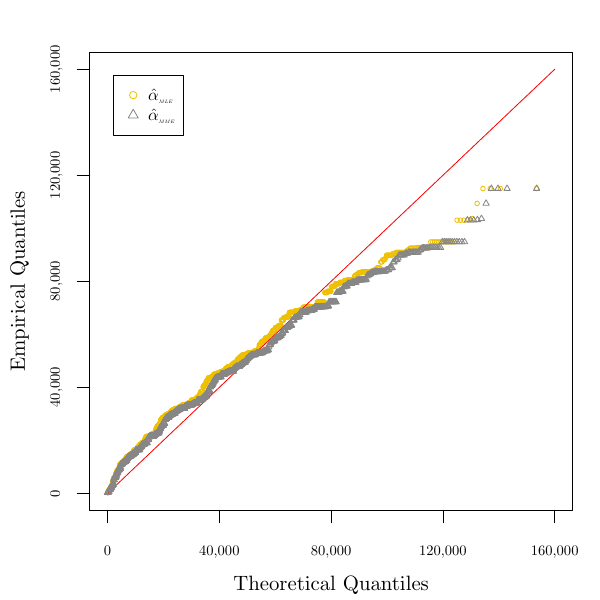}
		\caption{Quantile-Quantile (Q-Q) Plot}
  		\label{AppendixC:GoodnessofFitPlotsbyRegion-Figure4-b}
	\end{subfigure}
	\begin{subfigure}[b]{0.5\linewidth}
	        % Plot generated with the R code: FittingDistributionIncomeRegionAdministrative.R
                % We could generate other plots if needed. Maybe be could look for a seed to fix the obtained results.
  		\includegraphics[width=8cm, height=8cm]{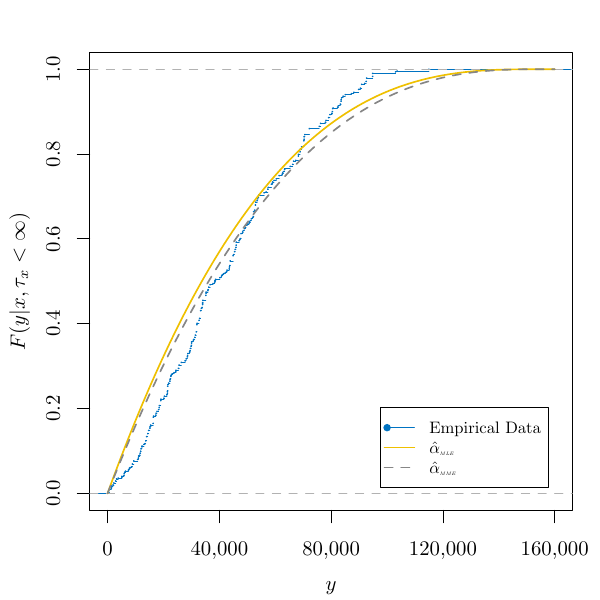}
		\caption{Comparison of Probability Distributions}
  		\label{AppendixC:GoodnessofFitPlotsbyRegion-Figure4-c}
	\end{subfigure}
	\begin{subfigure}[b]{0.5\linewidth}
	        % Plot generated with the R code: FittingDistributionIncomeRegionAdministrative.R
                % We could generate other plots if needed. Maybe be could look for a seed to fix the obtained results.
  		\includegraphics[width=8cm, height=8cm]{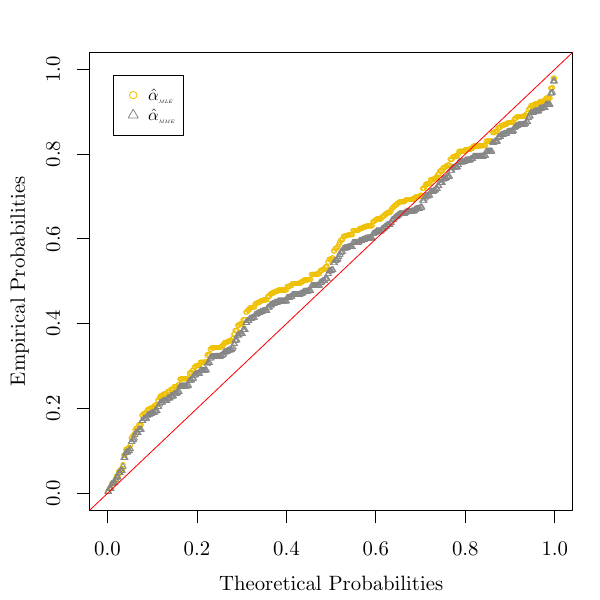}
		\caption{Probability-Probability (P-P) Plot}
  		\label{AppendixC:GoodnessofFitPlotsbyRegion-Figure4-d}
	\end{subfigure}
	\label{AppendixC:GoodnessofFitPlotsbyRegion-Figure4}
\end{figure}

\begin{figure}[H]
\begin{center}Centre-Nord\end{center}
	\begin{subfigure}[b]{0.5\linewidth}
	   % Plot generated with the R code: FittingDistributionIncomeRegionAdministrative.R
       	   % We could generate other plots if needed.
  		\includegraphics[width=8cm, height=8cm]{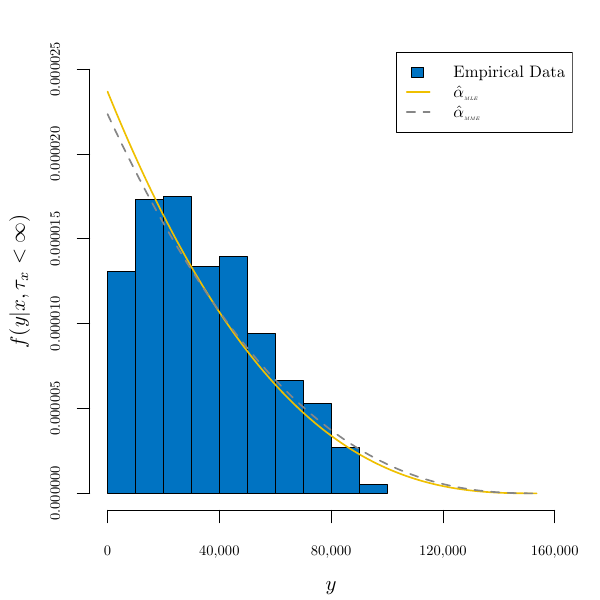}
		\caption{Comparison of Probability Density}
  		\label{AppendixC:GoodnessofFitPlotsbyRegion-Figure5-a}
	\end{subfigure}
	\begin{subfigure}[b]{0.5\linewidth}
	        % Plot generated with the R code: FittingDistributionIncomeRegionAdministrative.R
                % We could generate other plots if needed. Maybe be could look for a seed to fix the obtained results.
  		\includegraphics[width=8cm, height=8cm]{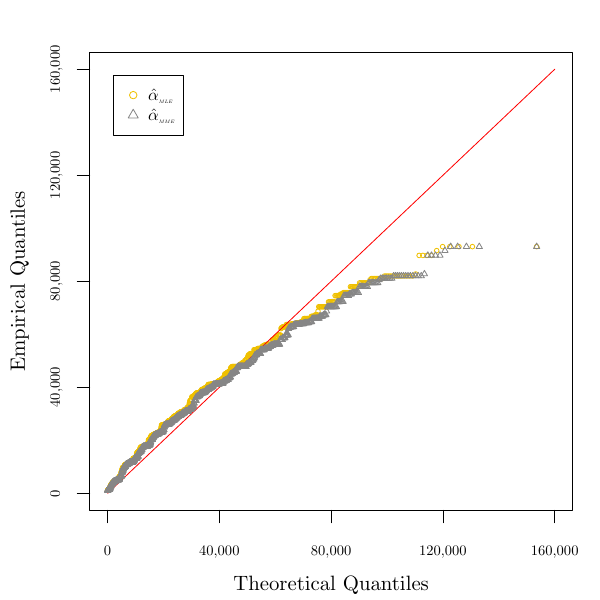}
		\caption{Quantile-Quantile (Q-Q) Plot}
  		\label{AppendixC:GoodnessofFitPlotsbyRegion-Figure5-b}
	\end{subfigure}
	\begin{subfigure}[b]{0.5\linewidth}
	        % Plot generated with the R code: FittingDistributionIncomeRegionAdministrative.R
                % We could generate other plots if needed. Maybe be could look for a seed to fix the obtained results.
  		\includegraphics[width=8cm, height=8cm]{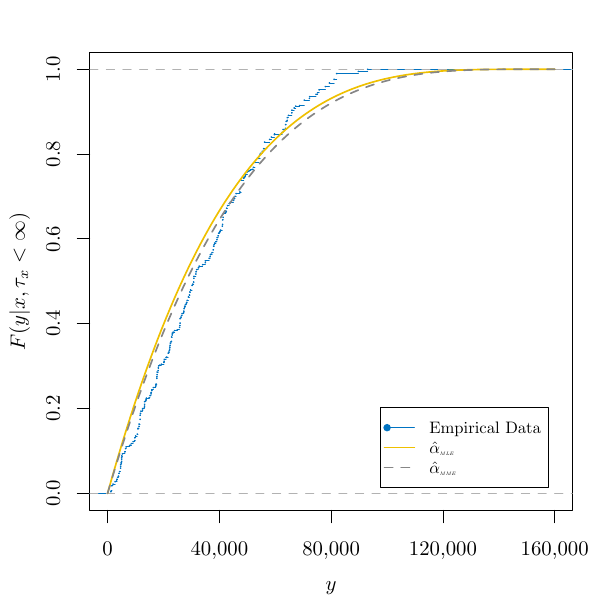}
		\caption{Comparison of Probability Distributions}
  		\label{AppendixC:GoodnessofFitPlotsbyRegion-Figure5-c}
	\end{subfigure}
	\begin{subfigure}[b]{0.5\linewidth}
	        % Plot generated with the R code: FittingDistributionIncomeRegionAdministrative.R
                % We could generate other plots if needed. Maybe be could look for a seed to fix the obtained results.
  		\includegraphics[width=8cm, height=8cm]{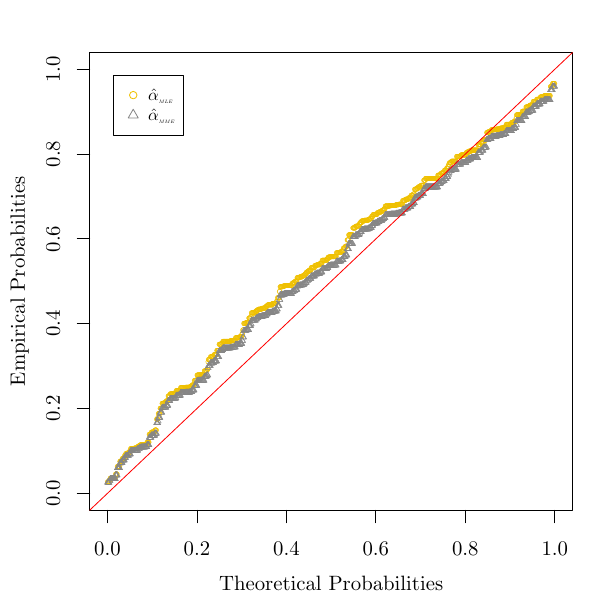}
		\caption{Probability-Probability (P-P) Plot}
  		\label{AppendixC:GoodnessofFitPlotsbyRegion-Figure5-d}
	\end{subfigure}
	\label{AppendixC:GoodnessofFitPlotsbyRegion-Figure5}
\end{figure}

\begin{figure}[H]
\begin{center}Centre-Ouest\end{center}
	\begin{subfigure}[b]{0.5\linewidth}
	   % Plot generated with the R code: FittingDistributionIncomeRegionAdministrative.R
       	   % We could generate other plots if needed.
  		\includegraphics[width=8cm, height=8cm]{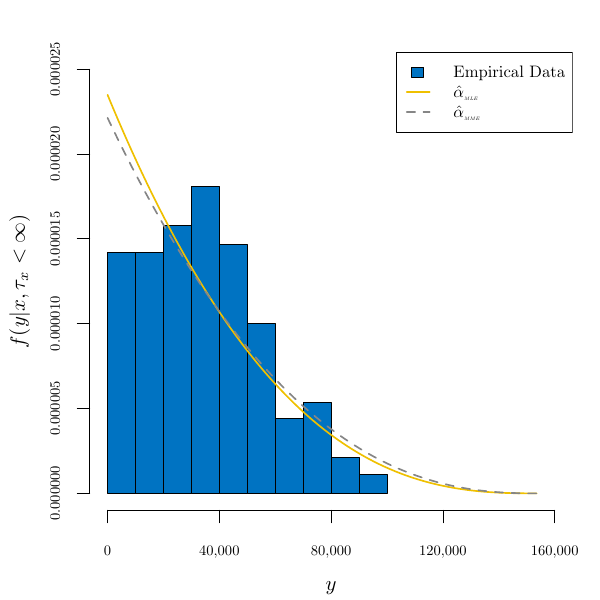}
		\caption{Comparison of Probability Density}
  		\label{AppendixC:GoodnessofFitPlotsbyRegion-Figure6-a}
	\end{subfigure}
	\begin{subfigure}[b]{0.5\linewidth}
	        % Plot generated with the R code: FittingDistributionIncomeRegionAdministrative.R
                % We could generate other plots if needed. Maybe be could look for a seed to fix the obtained results.
  		\includegraphics[width=8cm, height=8cm]{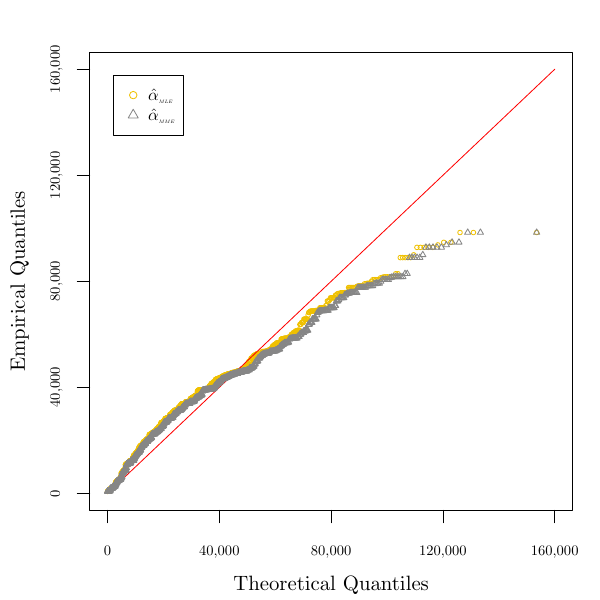}
		\caption{Quantile-Quantile (Q-Q) Plot}
  		\label{AppendixC:GoodnessofFitPlotsbyRegion-Figure6-b}
	\end{subfigure}
	\begin{subfigure}[b]{0.5\linewidth}
	        % Plot generated with the R code: FittingDistributionIncomeRegionAdministrative.R
                % We could generate other plots if needed. Maybe be could look for a seed to fix the obtained results.
  		\includegraphics[width=8cm, height=8cm]{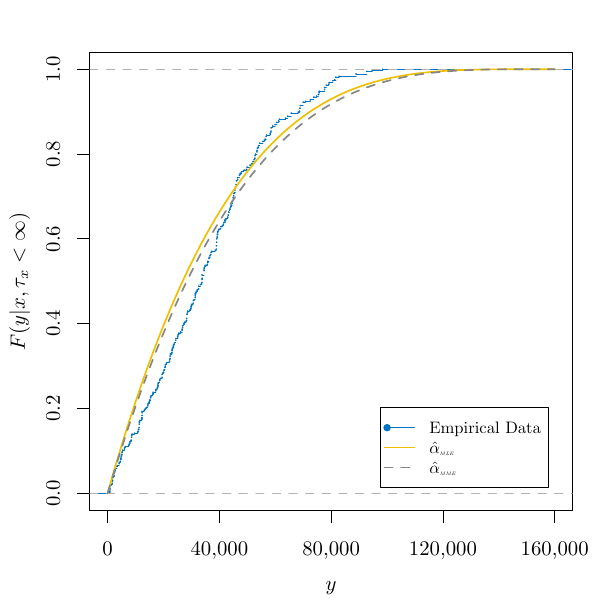}
		\caption{Comparison of Probability Distributions}
  		\label{AppendixC:GoodnessofFitPlotsbyRegion-Figure6-c}
	\end{subfigure}
	\begin{subfigure}[b]{0.5\linewidth}
	        % Plot generated with the R code: FittingDistributionIncomeRegionAdministrative.R
                % We could generate other plots if needed. Maybe be could look for a seed to fix the obtained results.
  		\includegraphics[width=8cm, height=8cm]{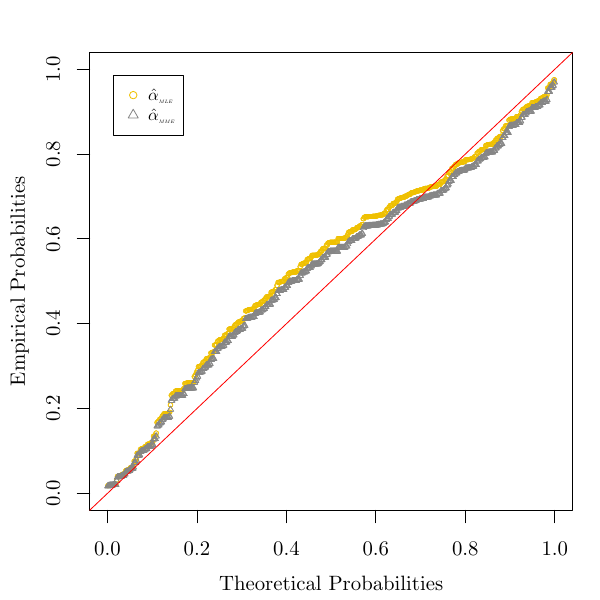}
		\caption{Probability-Probability (P-P) Plot}
  		\label{AppendixC:GoodnessofFitPlotsbyRegion-Figure6-d}
	\end{subfigure}
	\label{AppendixC:GoodnessofFitPlotsbyRegion-Figure6}
\end{figure}

\begin{figure}[H]
\begin{center}Centre-Sud\end{center}
	\begin{subfigure}[b]{0.5\linewidth}
	   % Plot generated with the R code: FittingDistributionIncomeRegionAdministrative.R
       	   % We could generate other plots if needed.
  		\includegraphics[width=8cm, height=8cm]{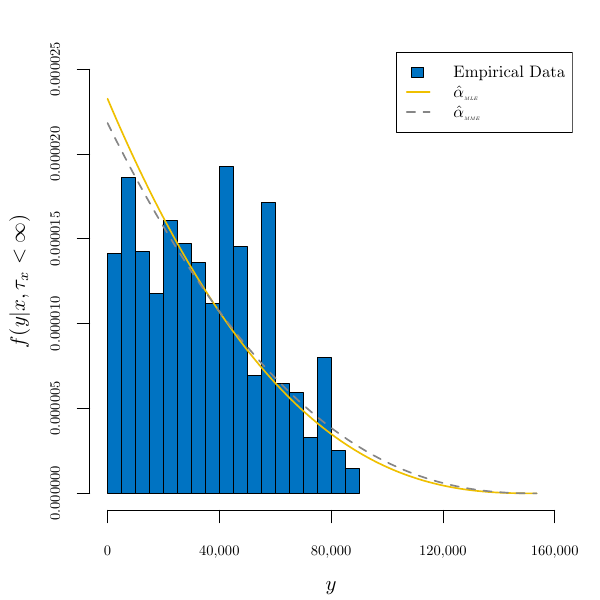}
		\caption{Comparison of Probability Density}
  		\label{AppendixC:GoodnessofFitPlotsbyRegion-Figure7-a}
	\end{subfigure}
	\begin{subfigure}[b]{0.5\linewidth}
	        % Plot generated with the R code: FittingDistributionIncomeRegionAdministrative.R
                % We could generate other plots if needed. Maybe be could look for a seed to fix the obtained results.
  		\includegraphics[width=8cm, height=8cm]{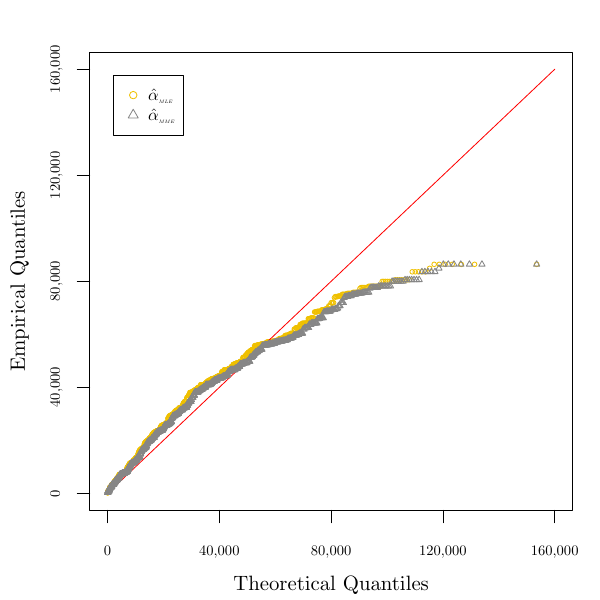}
		\caption{Quantile-Quantile (Q-Q) Plot}
  		\label{AppendixC:GoodnessofFitPlotsbyRegion-Figure7-b}
	\end{subfigure}
	\begin{subfigure}[b]{0.5\linewidth}
	        % Plot generated with the R code: FittingDistributionIncomeRegionAdministrative.R
                % We could generate other plots if needed. Maybe be could look for a seed to fix the obtained results.
  		\includegraphics[width=8cm, height=8cm]{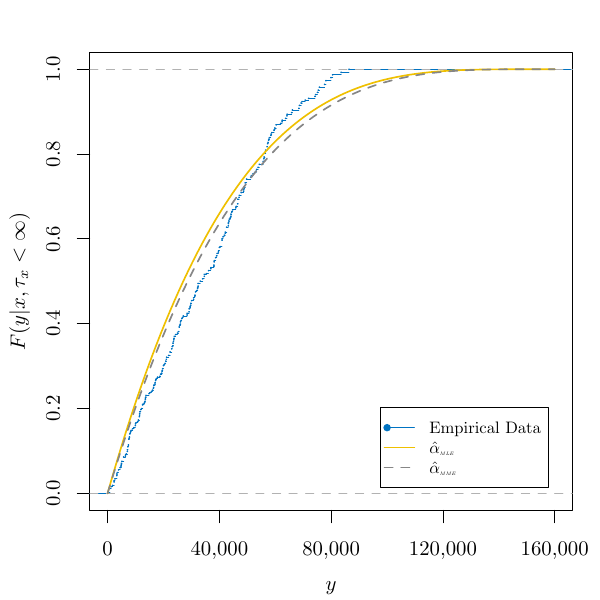}
		\caption{Comparison of Probability Distributions}
  		\label{AppendixC:GoodnessofFitPlotsbyRegion-Figure7-c}
	\end{subfigure}
	\begin{subfigure}[b]{0.5\linewidth}
	        % Plot generated with the R code: FittingDistributionIncomeRegionAdministrative.R
                % We could generate other plots if needed. Maybe be could look for a seed to fix the obtained results.
  		\includegraphics[width=8cm, height=8cm]{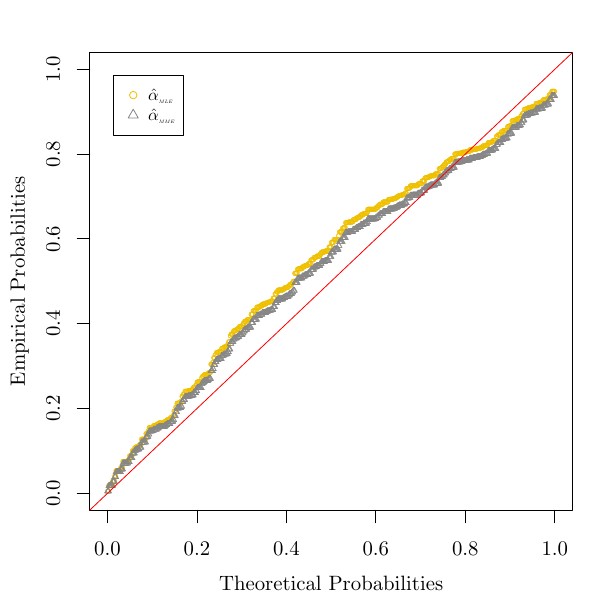}
		\caption{Probability-Probability (P-P) Plot}
  		\label{AppendixC:GoodnessofFitPlotsbyRegion-Figure7-d}
	\end{subfigure}
	\label{AppendixC:GoodnessofFitPlotsbyRegion-Figure7}
\end{figure}

\begin{figure}[H]
\begin{center}Est\end{center}
	\begin{subfigure}[b]{0.5\linewidth}
	   % Plot generated with the R code: FittingDistributionIncomeRegionAdministrative.R
       	   % We could generate other plots if needed.
  		\includegraphics[width=8cm, height=8cm]{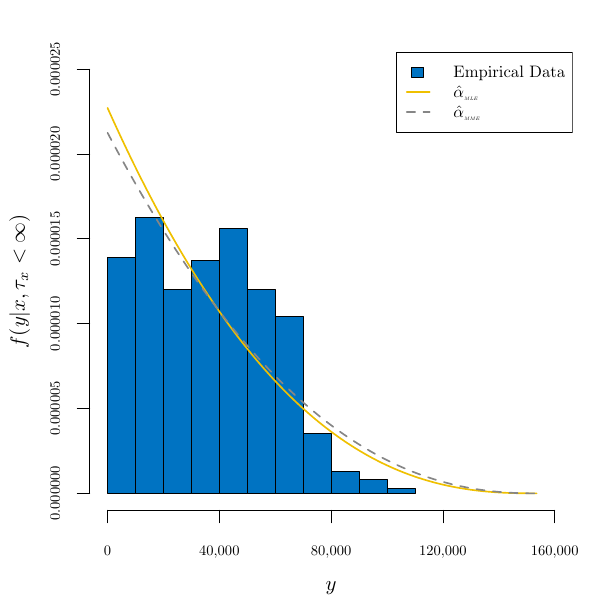}
		\caption{Comparison of Probability Density}
  		\label{AppendixC:GoodnessofFitPlotsbyRegion-Figure8-a}
	\end{subfigure}
	\begin{subfigure}[b]{0.5\linewidth}
	        % Plot generated with the R code: FittingDistributionIncomeRegionAdministrative.R
                % We could generate other plots if needed. Maybe be could look for a seed to fix the obtained results.
  		\includegraphics[width=8cm, height=8cm]{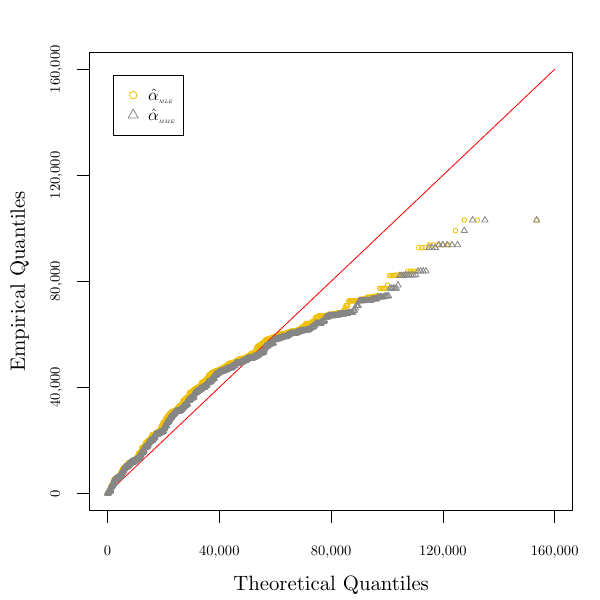}
		\caption{Quantile-Quantile (Q-Q) Plot}
  		\label{AppendixC:GoodnessofFitPlotsbyRegion-Figure8-b}
	\end{subfigure}
	\begin{subfigure}[b]{0.5\linewidth}
	        % Plot generated with the R code: FittingDistributionIncomeRegionAdministrative.R
                % We could generate other plots if needed. Maybe be could look for a seed to fix the obtained results.
  		\includegraphics[width=8cm, height=8cm]{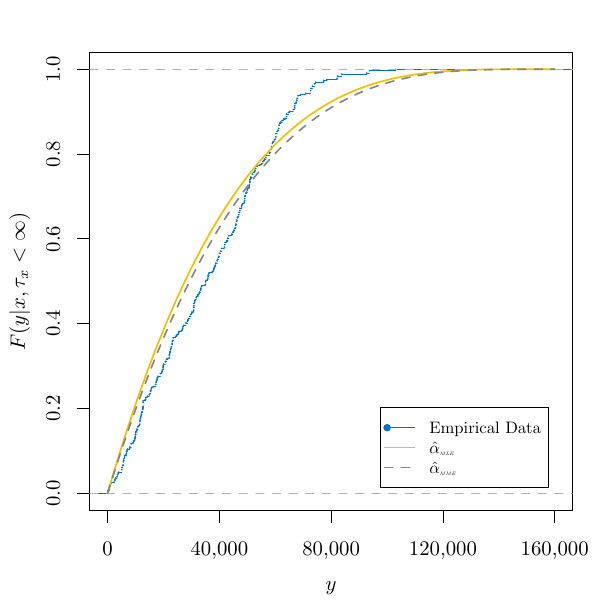}
		\caption{Comparison of Probability Distributions}
  		\label{AppendixC:GoodnessofFitPlotsbyRegion-Figure8-c}
	\end{subfigure}
	\begin{subfigure}[b]{0.5\linewidth}
	        % Plot generated with the R code: FittingDistributionIncomeRegionAdministrative.R
                % We could generate other plots if needed. Maybe be could look for a seed to fix the obtained results.
  		\includegraphics[width=8cm, height=8cm]{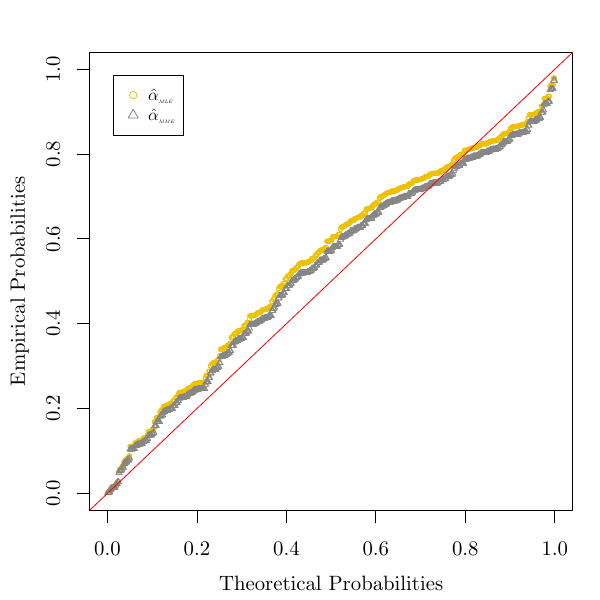}
		\caption{Probability-Probability (P-P) Plot}
  		\label{AppendixC:GoodnessofFitPlotsbyRegion-Figure8-d}
	\end{subfigure}
	\label{AppendixC:GoodnessofFitPlotsbyRegion-Figure8}
\end{figure}

\begin{figure}[H]
\begin{center}Hauts-Bassins\end{center}
	\begin{subfigure}[b]{0.5\linewidth}
	   % Plot generated with the R code: FittingDistributionIncomeRegionAdministrative.R
       	   % We could generate other plots if needed.
  		\includegraphics[width=8cm, height=8cm]{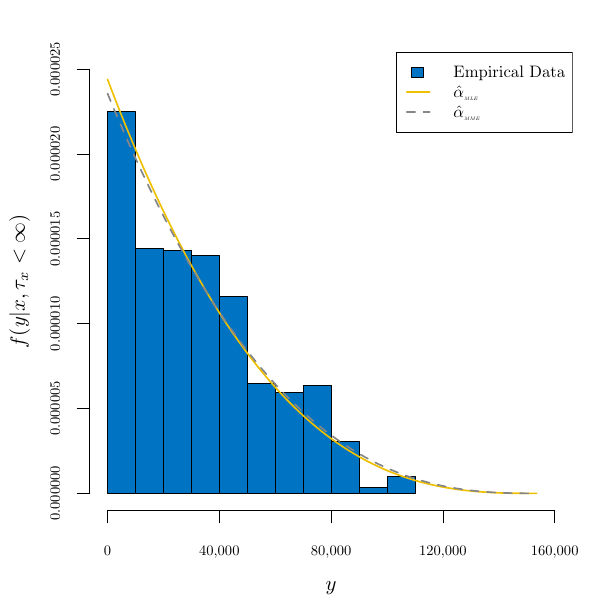}
		\caption{Comparison of Probability Density}
  		\label{AppendixC:GoodnessofFitPlotsbyRegion-Figure9-a}
	\end{subfigure}
	\begin{subfigure}[b]{0.5\linewidth}
	        % Plot generated with the R code: FittingDistributionIncomeRegionAdministrative.R
                % We could generate other plots if needed. Maybe be could look for a seed to fix the obtained results.
  		\includegraphics[width=8cm, height=8cm]{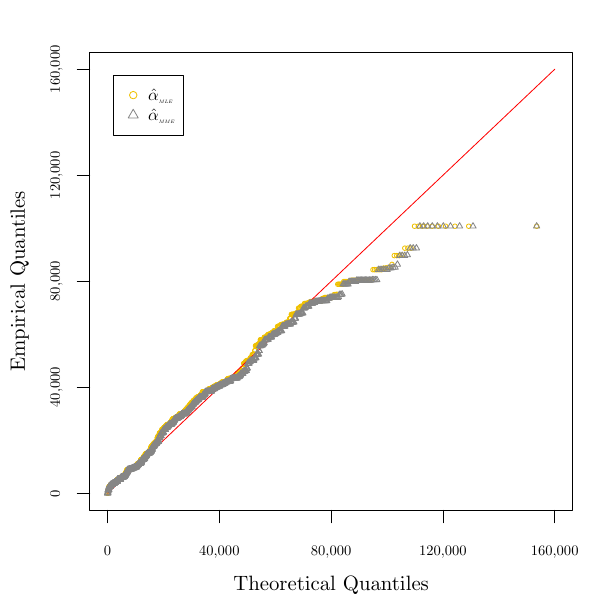}
		\caption{Quantile-Quantile (Q-Q) Plot}
  		\label{AppendixC:GoodnessofFitPlotsbyRegion-Figure9-b}
	\end{subfigure}
	\begin{subfigure}[b]{0.5\linewidth}
	        % Plot generated with the R code: FittingDistributionIncomeRegionAdministrative.R
                % We could generate other plots if needed. Maybe be could look for a seed to fix the obtained results.
  		\includegraphics[width=8cm, height=8cm]{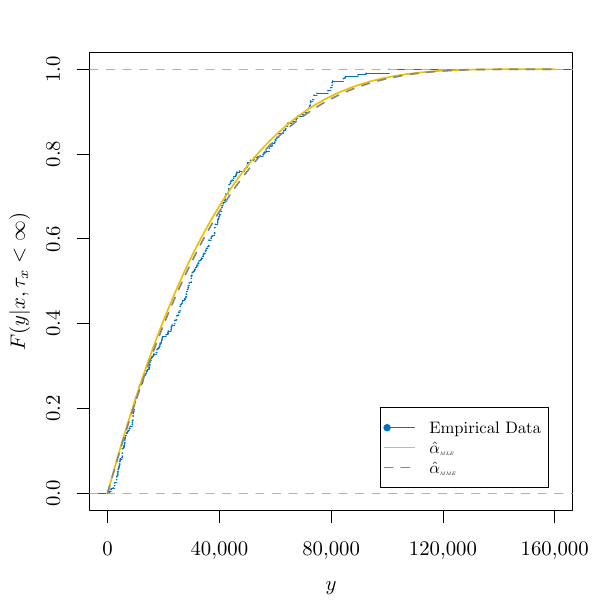}
		\caption{Comparison of Probability Distributions}
  		\label{AppendixC:GoodnessofFitPlotsbyRegion-Figure9-c}
	\end{subfigure}
	\begin{subfigure}[b]{0.5\linewidth}
	        % Plot generated with the R code: FittingDistributionIncomeRegionAdministrative.R
                % We could generate other plots if needed. Maybe be could look for a seed to fix the obtained results.
  		\includegraphics[width=8cm, height=8cm]{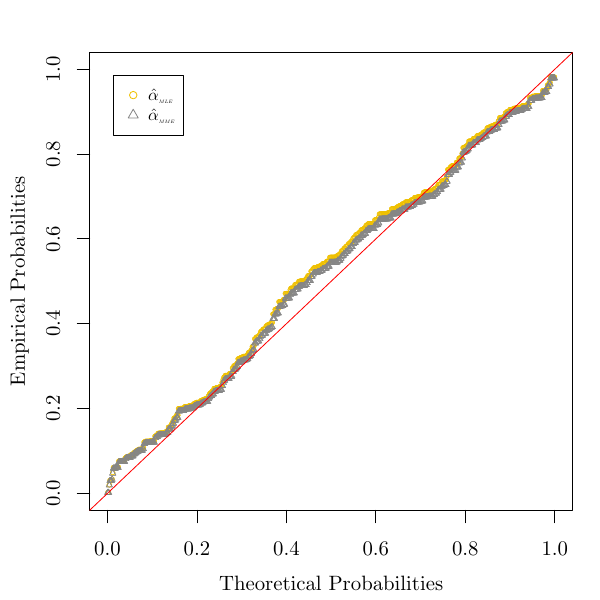}
		\caption{Probability-Probability (P-P) Plot}
  		\label{AppendixC:GoodnessofFitPlotsbyRegion-Figure9-d}
	\end{subfigure}
	\label{AppendixC:GoodnessofFitPlotsbyRegion-Figure9}
\end{figure}

\begin{figure}[H]
\begin{center}Nord\end{center}
	\begin{subfigure}[b]{0.5\linewidth}
	   % Plot generated with the R code: FittingDistributionIncomeRegionAdministrative.R
       	   % We could generate other plots if needed.
  		\includegraphics[width=8cm, height=8cm]{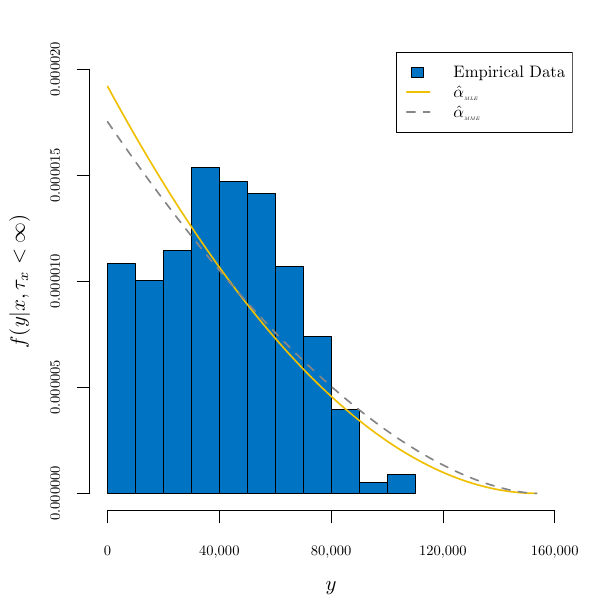}
		\caption{Comparison of Probability Density}
  		\label{AppendixC:GoodnessofFitPlotsbyRegion-Figure10-a}
	\end{subfigure}
	\begin{subfigure}[b]{0.5\linewidth}
	        % Plot generated with the R code: FittingDistributionIncomeRegionAdministrative.R
                % We could generate other plots if needed. Maybe be could look for a seed to fix the obtained results.
  		\includegraphics[width=8cm, height=8cm]{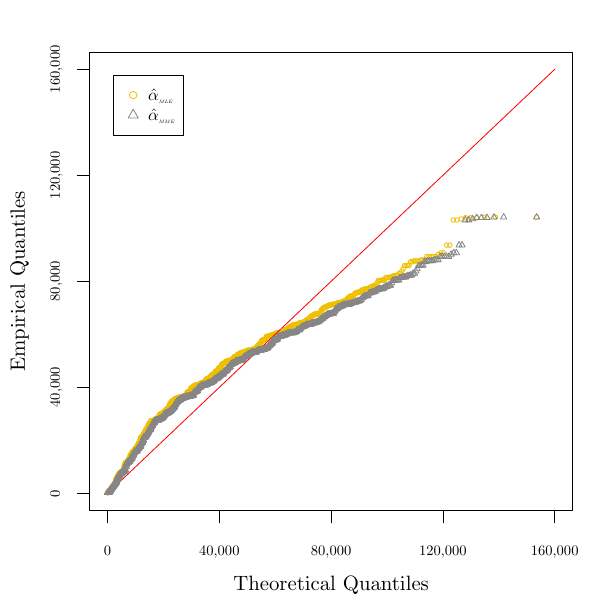}
		\caption{Quantile-Quantile (Q-Q) Plot}
  		\label{AppendixC:GoodnessofFitPlotsbyRegion-Figure10-b}
	\end{subfigure}
	\begin{subfigure}[b]{0.5\linewidth}
	        % Plot generated with the R code: FittingDistributionIncomeRegionAdministrative.R
                % We could generate other plots if needed. Maybe be could look for a seed to fix the obtained results.
  		\includegraphics[width=8cm, height=8cm]{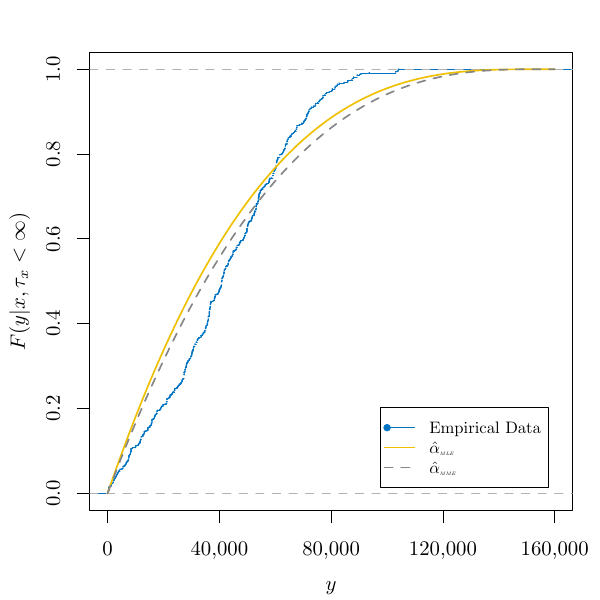}
		\caption{Comparison of Probability Distributions}
  		\label{AppendixC:GoodnessofFitPlotsbyRegion-Figure10-c}
	\end{subfigure}
	\begin{subfigure}[b]{0.5\linewidth}
	        % Plot generated with the R code: FittingDistributionIncomeRegionAdministrative.R
                % We could generate other plots if needed. Maybe be could look for a seed to fix the obtained results.
  		\includegraphics[width=8cm, height=8cm]{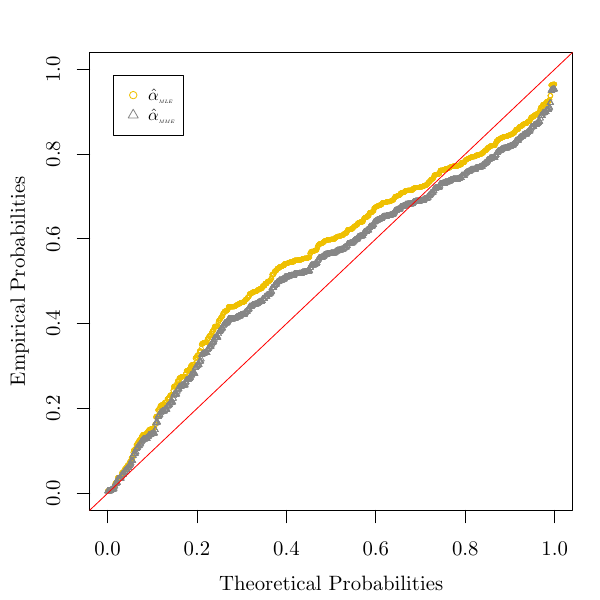}
		\caption{Probability-Probability (P-P) Plot}
  		\label{AppendixC:GoodnessofFitPlotsbyRegion-Figure10-d}
	\end{subfigure}
	\label{AppendixC:GoodnessofFitPlotsbyRegion-Figure10}
\end{figure}

\begin{figure}[H]
\begin{center}Plateau Central\end{center}
	\begin{subfigure}[b]{0.5\linewidth}
	   % Plot generated with the R code: FittingDistributionIncomeRegionAdministrative.R
       	   % We could generate other plots if needed.
  		\includegraphics[width=8cm, height=8cm]{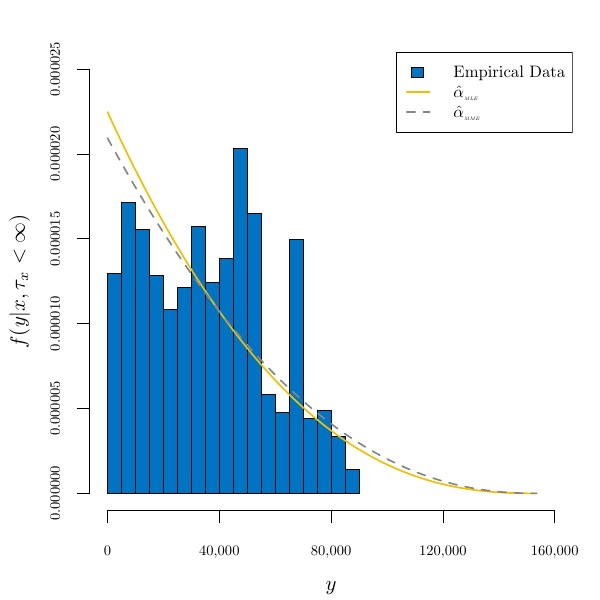}
		\caption{Comparison of Probability Density}
  		\label{AppendixC:GoodnessofFitPlotsbyRegion-Figure11-a}
	\end{subfigure}
	\begin{subfigure}[b]{0.5\linewidth}
	        % Plot generated with the R code: FittingDistributionIncomeRegionAdministrative.R
                % We could generate other plots if needed. Maybe be could look for a seed to fix the obtained results.
  		\includegraphics[width=8cm, height=8cm]{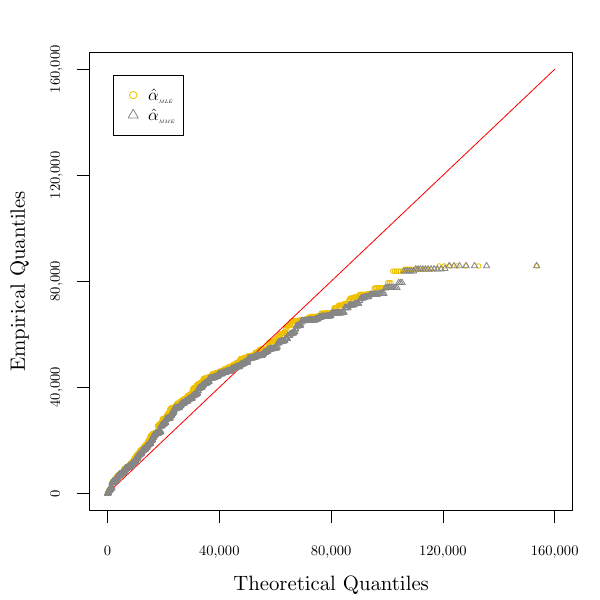}
		\caption{Quantile-Quantile (Q-Q) Plot}
  		\label{AppendixC:GoodnessofFitPlotsbyRegion-Figure11-b}
	\end{subfigure}
	\begin{subfigure}[b]{0.5\linewidth}
	        % Plot generated with the R code: FittingDistributionIncomeRegionAdministrative.R
                % We could generate other plots if needed. Maybe be could look for a seed to fix the obtained results.
  		\includegraphics[width=8cm, height=8cm]{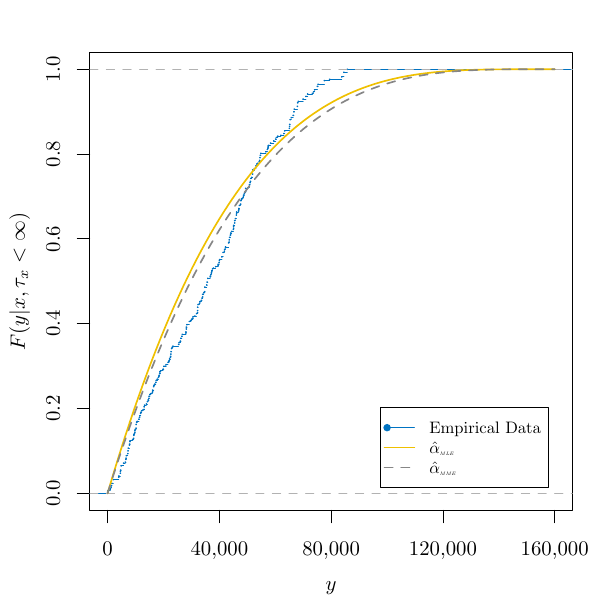}
		\caption{Comparison of Probability Distributions}
  		\label{AppendixC:GoodnessofFitPlotsbyRegion-Figure11-c}
	\end{subfigure}
	\begin{subfigure}[b]{0.5\linewidth}
	        % Plot generated with the R code: FittingDistributionIncomeRegionAdministrative.R
                % We could generate other plots if needed. Maybe be could look for a seed to fix the obtained results.
  		\includegraphics[width=8cm, height=8cm]{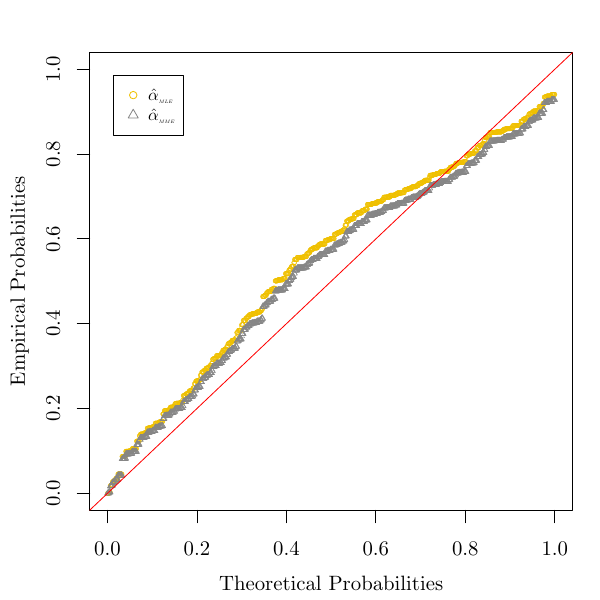}
		\caption{Probability-Probability (P-P) Plot}
  		\label{AppendixC:GoodnessofFitPlotsbyRegion-Figure11-d}
	\end{subfigure}
	\label{AppendixC:GoodnessofFitPlotsbyRegion-Figure11}
\end{figure}

\begin{figure}[H]
\begin{center}Sahel\end{center}
	\begin{subfigure}[b]{0.5\linewidth}
	   % Plot generated with the R code: FittingDistributionIncomeRegionAdministrative.R
       	   % We could generate other plots if needed.
  		\includegraphics[width=8cm, height=8cm]{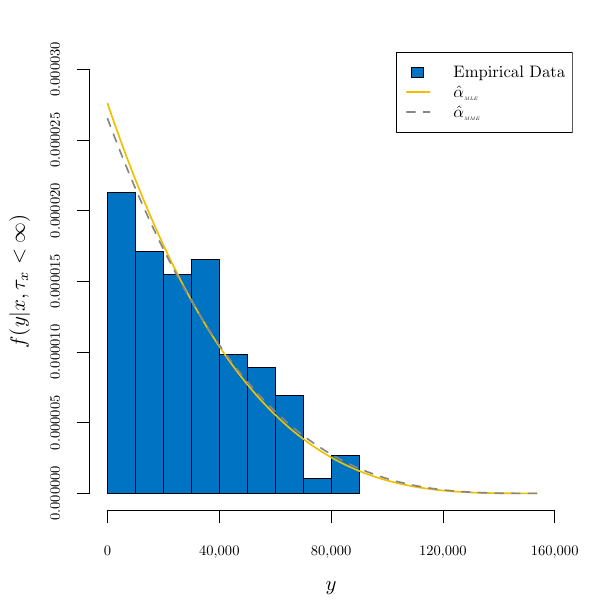}
		\caption{Comparison of Probability Density}
  		\label{AppendixC:GoodnessofFitPlotsbyRegion-Figure12-a}
	\end{subfigure}
	\begin{subfigure}[b]{0.5\linewidth}
	        % Plot generated with the R code: FittingDistributionIncomeRegionAdministrative.R
                % We could generate other plots if needed. Maybe be could look for a seed to fix the obtained results.
  		\includegraphics[width=8cm, height=8cm]{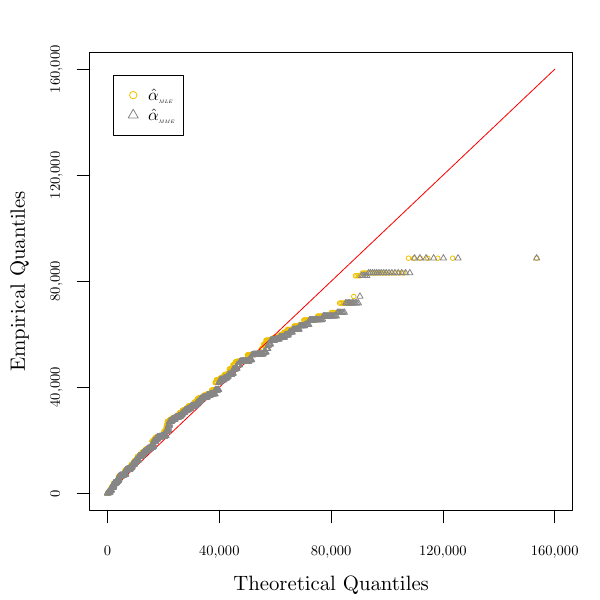}
		\caption{Quantile-Quantile (Q-Q) Plot}
  		\label{AppendixC:GoodnessofFitPlotsbyRegion-Figure12-b}
	\end{subfigure}
	\begin{subfigure}[b]{0.5\linewidth}
	        % Plot generated with the R code: FittingDistributionIncomeRegionAdministrative.R
                % We could generate other plots if needed. Maybe be could look for a seed to fix the obtained results.
  		\includegraphics[width=8cm, height=8cm]{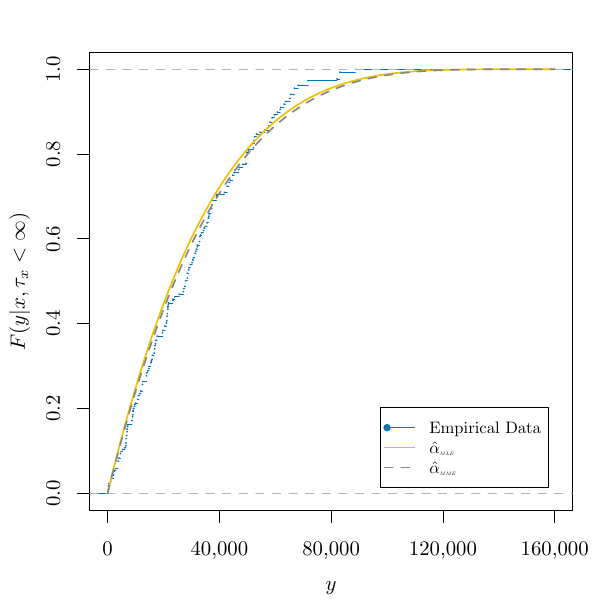}
		\caption{Comparison of Probability Distributions}
  		\label{AppendixC:GoodnessofFitPlotsbyRegion-Figure12-c}
	\end{subfigure}
	\begin{subfigure}[b]{0.5\linewidth}
	        % Plot generated with the R code: FittingDistributionIncomeRegionAdministrative.R
                % We could generate other plots if needed. Maybe be could look for a seed to fix the obtained results.
  		\includegraphics[width=8cm, height=8cm]{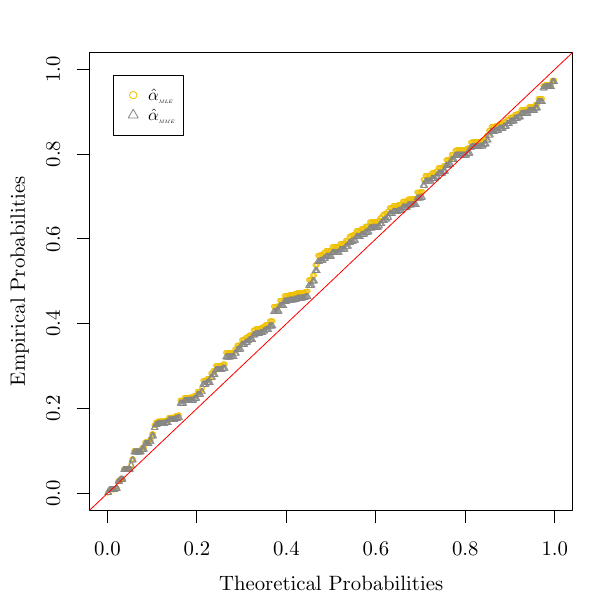}
		\caption{Probability-Probability (P-P) Plot}
  		\label{AppendixC:GoodnessofFitPlotsbyRegion-Figure12-d}
	\end{subfigure}
	\label{AppendixC:GoodnessofFitPlotsbyRegion-Figure12}
\end{figure}

\begin{figure}[H]
\begin{center}Sud-Ouest\end{center}
	\begin{subfigure}[b]{0.5\linewidth}
	   % Plot generated with the R code: FittingDistributionIncomeRegionAdministrative.R
       	   % We could generate other plots if needed.
  		\includegraphics[width=8cm, height=8cm]{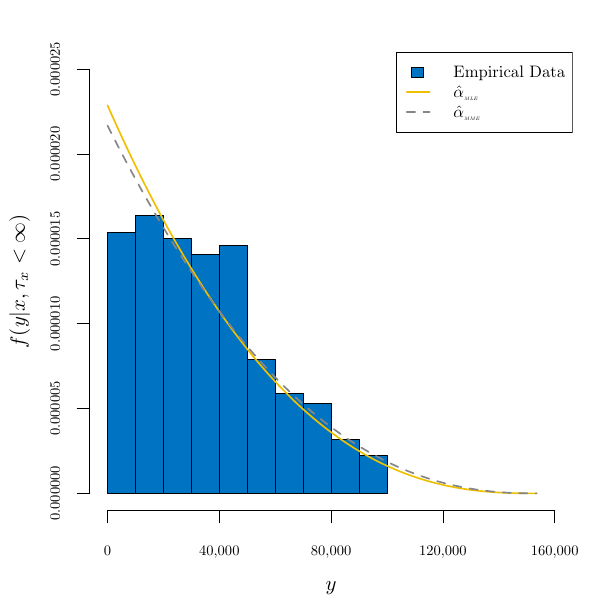}
		\caption{Comparison of Probability Density}
  		\label{AppendixC:GoodnessofFitPlotsbyRegion-Figure13-a}
	\end{subfigure}
	\begin{subfigure}[b]{0.5\linewidth}
	        % Plot generated with the R code: FittingDistributionIncomeRegionAdministrative.R
                % We could generate other plots if needed. Maybe be could look for a seed to fix the obtained results.
  		\includegraphics[width=8cm, height=8cm]{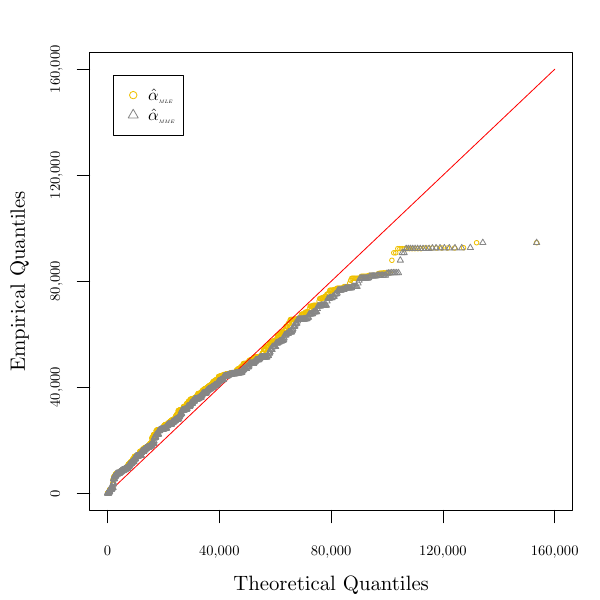}
		\caption{Quantile-Quantile (Q-Q) Plot}
  		\label{AppendixC:GoodnessofFitPlotsbyRegion-Figure13-b}
	\end{subfigure}
	\begin{subfigure}[b]{0.5\linewidth}
	        % Plot generated with the R code: FittingDistributionIncomeRegionAdministrative.R
                % We could generate other plots if needed. Maybe be could look for a seed to fix the obtained results.
  		\includegraphics[width=8cm, height=8cm]{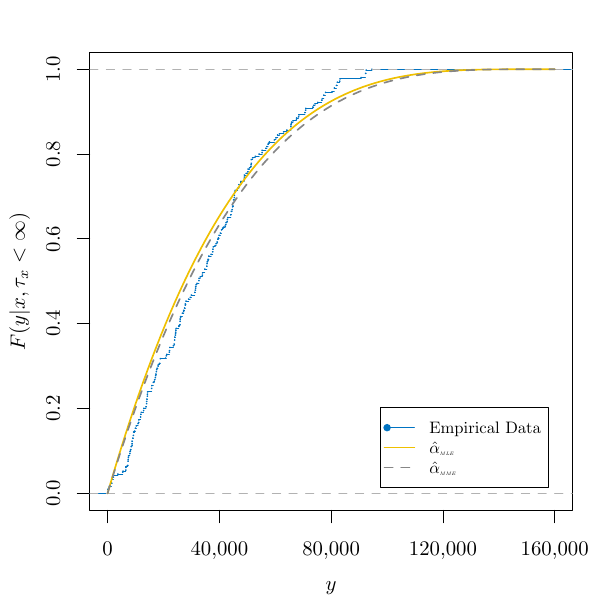}
		\caption{Comparison of Probability Distributions}
  		\label{AppendixC:GoodnessofFitPlotsbyRegion-Figure13-c}
	\end{subfigure}
	\begin{subfigure}[b]{0.5\linewidth}
	        % Plot generated with the R code: FittingDistributionIncomeRegionAdministrative.R
                % We could generate other plots if needed. Maybe be could look for a seed to fix the obtained results.
  		\includegraphics[width=8cm, height=8cm]{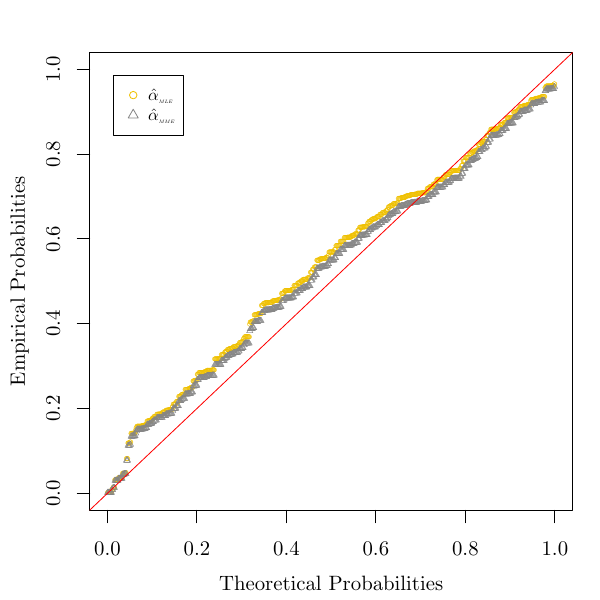}
		\caption{Probability-Probability (P-P) Plot}
  		\label{AppendixC:GoodnessofFitPlotsbyRegion-Figure13-d}
	\end{subfigure}
	\label{AppendixC:GoodnessofFitPlotsbyRegion-Figure13}
\end{figure}

\typeout{get arXiv to do 4 passes: Label(s) may have changed. Rerun}

\end{document}